\definecolor{myurlcolor}{rgb}{0,0,0.7}
\newcommand{\cA}{\mathcal{A}}
\newcommand{\cB}{\mathcal{B}}
\newcommand{\cC}{\mathcal{C}}
\newcommand{\cD}{\mathcal{D}}
\newcommand{\cE}{\mathcal{E}}
\newcommand{\cF}{\mathcal{F}}
\newcommand{\cG}{\mathcal{G}}
\newcommand{\cH}{\mathcal{H}}
\newcommand{\cI}{\mathcal{I}}
\newcommand{\cJ}{\mathcal{J}}
\newcommand{\cK}{\mathcal{K}}
\newcommand{\cL}{\mathcal{L}}
\newcommand{\cM}{\mathcal{M}}
\newcommand{\cN}{\mathcal{N}}
\newcommand{\cO}{\mathcal{O}}
\newcommand{\cP}{\mathcal{P}}
\newcommand{\cQ}{\mathcal{Q}}
\newcommand{\cR}{\mathcal{R}}
\newcommand{\cS}{\mathcal{S}}
\newcommand{\cT}{\mathcal{T}}
\newcommand{\cV}{\mathcal{V}}
\newcommand{\cW}{\mathcal{W}}
\newcommand{\cY}{\mathcal{Y}}
\newcommand{\cZ}{\mathcal{Z}}
\newcommand{\Id}{\mathbbm{1}}
\newcommand{\tr}{Tr}
\newtheorem{theorem}{Theorem}
\newtheorem{proposition}{Proposition}
\newtheorem{lemma}{Lemma}
\newtheorem{corollary}{Corollary}
\newtheorem{definition}{Definition}
\newtheorem{remark}{Remark}
\begin{document}

\title{Distance-based measures and Epsilon-measures for measurement-based quantum resources}
\author{Arindam Mitra$^1$}
\email{amitra36013@kriss.re.kr}
\email{arindammitra143@gmail.com}


\author{Sumit Mukherjee$^1$}

\email{mukherjeesumit93@gmail.com}


\author{Changhyoup Lee$^{2,3,4}$}

\email{changhyoup.lee@gmail.com}

\affiliation{$^1$Korea Research Institute of Standards and Science, Daejeon 34113, South Korea.}
\affiliation{$^2$Department of Physics, Hanyang University, Seoul 04763, Korea,}
\affiliation{$^3$Hanyang Institute for Quantum Science and Quantum Technology, Hanyang University, Seoul 04763, Korea}
\affiliation{$^4$Research Institute of Natural Sciences, Hanyang University, Seoul 04763, Korea}

\date{\today}

\begin{abstract}
Quantum resource theories provide a structured and elegant framework for quantifying quantum resources. While state-based resource theories have been extensively studied, the measurement-based resource theories remain relatively underexplored. In practical scenarios where a quantum state or a set of measurements is only partially known, conventional resource measures often fall short in capturing the resource content. In such cases, $\epsilon$-measures offer a robust alternative, making them particularly valuable. In this work, we investigate the quantification of measurement-based resources using distance-based measures, followed by a detailed analysis of the mathematical properties of $\epsilon$-measures. We also extend our analysis by exploring the connections between  $\epsilon$-measures and some key quantities relevant to resource manipulation tasks. Importantly, the analysis of resources based on sets of measurements are tedious compared to that of single measurements as the former allows more general transformations such as controlled implementation. Yet our framework applies not only to resources associated with individual measurements but also to those arising from sets of measurements.  In short, our analysis is applicable to existing resource theories of measurements and has the potential to be useful for all resource theories of measurements that are yet to be developed.

\end{abstract}

\maketitle

\section{Introduction}
There exist numerous quantum resources that have no classical counterpart \cite{Chitambar_QRT_review}. Examples include entanglement \cite{Horodecki_review_entang}, coherence \cite{Baumgratz_coh_RT,Winter_coh_RT,Bischof_coh_RT}, nonlocality \cite{deVicente_2014,Wolfe2020quantum}, steering \cite{Wiseman_steer,Gallego_steer_RT}, contextuality \cite{abrahmsky2017,amaral19}, measurement incompatibility \cite{Buscemi_meas_incomp}, measurement sharpness \cite{Mitra_meas_sharp,Buscemi_meas_sharp}, and measurement coherence \cite{baek_meas_coh}. Some of these resources are intrinsic properties of quantum states (e.g., entanglement, coherence etc.), while others are associated with quantum measurements (e.g., measurement incompatibility, measurement sharpness, measurement coherence etc). Among the measurement-based resources, some are the properties of individual measurements (e.g., measurement coherence and measurement sharpness), whereas others such as measurement incompatibility are defined for sets of measurements. These resources enable quantum advantages in a variety of information-theoretic tasks \cite{Skrzypczyk_incomp_state_disc,Uola_q_res_exclus,Carmeli_incomp_meas_qrac}. Consequently, quantifying them is of fundamental importance. 

Quantum resource theories offer a principled and elegant framework for quantifying various quantum resources \cite{Chitambar_QRT_review}. While the resource theories of most of the state-based quantum resources have been extensively explored in the literature, only a limited number of measurement-based resource theories have been studied in comparable detail \cite{baek_meas_coh,Buscemi_meas_incomp,Buscemi_meas_sharp,Guff_RT_meas}. To the best of our knowledge, only the resource theories of measurement incompatibility, measurement coherence, and measurement sharpness have received significant attention. Although the framework we develop in this work is general, we focus our analysis primarily on those three measurement-based resource theories.

In practice, usually, neither quantum states nor quantum measurements are perfectly known. Both the prepared quantum states as well as the implementable measurement devices often encounter some noise. Therefore, in the scenarios where the states and the measurements are partially known, it is difficult to perfectly estimate the resource content in those. In such cases, $\epsilon$-measures can be used as a resource quantifier. The $\epsilon$-measures for quantum state-based resources have been studied in the Refs. \cite{Mora_epsilon_meas_entang,Xi_epsilon_meas_coh,Luo_epsilon_meas_state}. However, to the best of our knowledge, while specific distance-based resource measure for measurement based resources have been studied in \cite{Tendick_dist_res_meas}, the $\epsilon$-measures for measurement-based resources have not been studied in detail. In this work, our \emph{motivation} is to study the properties of the generic distance based-measure and $\epsilon$-measures for quantum measurement-based resources.

In this work, we provide a systematic methodology for quantifying measurement-based quantum resources through  generic distance-based measures satisfying a few properties. This framework provides a general approach to evaluate how resouceful a given measurement or a set of measurements is according to a chosen distance-based resource measure. Following this, we study the concept of 
$\epsilon-$measure tailored for measurement-based resource theories and undertake a detailed analysis of their mathematical properties. In addition, we investigate the relationship between these $\epsilon-$measures and several key operational quantities in resource theory, including the resource dilution cost and the smooth regularization of resource measures.

It is important to note that the analysis developed for a generic resource theory of individual measurements cannot be directly extended to a generic resource theory of sets of measurements, as the latter permits more general transformations—such as controlled implementations. Nevertheless, our analysis is applicable not only to the quantification of resources contained in individual measurements, such as measurement coherence or measurement sharpness, but also to the quantification of resources  that emerge from sets of measurements, such as measurement incompatibility.

The rest of the paper is organized as follows. In Sec. \ref{Sec:Prelim}, we discuss the preliminaries. More specifically, in Sec. \ref{Subsec:prelim_QM_QC}, we discuss  the various well-known properties of measurements, channels, superchannels and diamond distance. In Sec. \ref{Subsec:prelim_resource}, we discuss the basic structure of an arbitrary resource theory, some usual assumptions for a resource theory of generic measurement-based resources and define $\epsilon$-measures for a generic measurement-based resource. In Sec. \ref{Sec:Main_results}, we discuss our main results. More specifically, in Sec. \ref{Subsec:Gen_t_set_meas}, we study the transformation of sets of channels, sets of  measurements, define a distance measure for them and study its mathematical properties. In Sec. \ref{Subsec:dist_based_res_meas}, we construct and study a distance-based resource measure for measurement-based resources using a generic distance measure satisfying some properties. In Sec. \ref{Subsec:properties_epsilon_measure}, we study some properties of $\epsilon$-measure for measurement-based resources. In Sec. \ref{Subsec:one_shot_distill_assymp_res_meas}, we study the one-shot dilution cost, smooth asymptotic resource measures and its connection to $\epsilon$-measures for measurement-based resources. In Sec. \ref{Sec:Conc}, we summarize our results and discuss future directions.

\section{Preliminaries}
\label{Sec:Prelim}
\subsection{Quantum Measurements and quantum channels}
\label{Subsec:prelim_QM_QC}
A quantum measurement $M=\{M(x)\in\cL(\cH)\}_{x\in\Omega_M}$ acting on Hilbert space $\cH$  is defined as a set of positive semidefinite matrices acting on Hilbert space $\cH$ such that $\sum_{x\in\Omega_M}M(x)=\Id_{\cH}$ where $\cL(\cH)$ is the set of all linear operators on Hilbert space $\cH$, $\Omega_M$ is the outcome set of $M$ and $\Id_{\cH}$ is the idenitity matrix on the Hilbert space $\cH$ \cite{Heinosaari_book_QF}. In this work, we restrict ourselves to finite dimensional Hilbert spaces and measurements with finite number of outcomes. Here we call $M(x)$ as the POVM elements of $M$ for all $x$. The measurement $M$ is said to be projective if $M^2(x)=M(x)\forall x\in \Omega_M$.
When the measurement $M$ is performed on a system with quantum state $\rho\in\cS(\cH)$, the probability of an arbitrary outcome $x$ is $\tr[\rho M(x)]$ where $\cS(\cH)$ is the set of all density matrices on Hilbert space $\cH$. We denote the set of all measurements acting on Hilbert space $\cH$ as $\mathscr{M}(\cH)$. Given a pair of quantum measurements $M\in\mathscr{M}(\cH)$ and $N\in\mathscr{M}(\cK)$, the measurement $M\otimes N:=\{(M\otimes N)(x,y)=M(x)\otimes N(y)\}_{x\in\Omega_{M},y\in\Omega_{N}}\in\mathscr{M}(\cH\otimes\cK)$. Clearly, the outcome set of the measurement $M\otimes N$ is $\Omega_{M\otimes N}=\Omega_M\times\Omega_N$. We denote the one-outcome trivial measurement acting on Hilbert space $\cH$ as $\cT_{\cH}:=\{\Id_{\cH}\}$. Implementation of $\cT_{\cH}$ is equivalent to performing ``no measurement". Clearly, $\cT_{\cH_1\otimes\cH_2}=\cT_{\cH_1}\otimes\cT_{\cH_2}$
If a quantum measurement $M\in \mathscr{M}(\cH)$ is the probabilistic mixing of a pair of quantum measurements $M_1\in \mathscr{M}(\cH)$ and $M_2\in \mathscr{M}(\cH)$ then $M=\{M(x)=p M_1(x)+(1-p) M_2(x)\}_{x\in\Omega_{M}}$ where $\Omega_{M}=\Omega_{M_1}=\Omega_{M_2}$ and $p$ being the probability. We denote it as $M=p M_1+(1-p) M_2$. We denote a process where a given measurement $M$ (as an input) is probabilistically mixed with a fixed channel $N$ as
\begin{align}
    \cP_{p,N}[M]:=pM+(1-p)N,
\end{align}
where $0\leq p\leq 1$ is the probability.

A quantum channel $\Lambda:\cL(\cH)\rightarrow \cL(\cK)$ is a completely positive trace preserving (CPTP) linear map that transforms a quantum state to another quantum state \cite{Heinosaari_book_QF}. The action of $\Lambda$ in Heisenberg picture is denoted by $\Lambda^{\dagger}:\cL(\cK)\rightarrow\cL(\cH)$ which is defined via the equation

\begin{align}
    \tr[\Lambda(W)Z]=\tr[W\Lambda^{\dagger}(Z)]~\forall W\in\cL(\cH),Z\in\cL(\cK).
\end{align}

We denote the set of all quantum channels with $\cL(\cH)$ as input space and $\cL(\cK)$ as output space as $\mathscr{C}(\cH,\cK)$. If a quantum channel $\Psi:\cL(\cH)\rightarrow\cL(K)$ is composition of two quantum channels $\Lambda_1:\cL(\cH)\rightarrow\cL(\cH_1)$ and $\Lambda_2:\cL(\cH_1)\rightarrow\cL(\cK)$ then for all $\rho\in\cL(\cH)$, $\Psi(\rho)=\Lambda_2(\Lambda_1(\rho))$. We use the symbol $\circ$ and the short-hand notation $\Psi=\Lambda_2\circ\Lambda_1$ to denote this. The shorthand notation such as this will be repeated throughout the paper.

Superchannels are the physically realizable transformations between quantum channels. A super-channel $\hat{\Xi}$ transforming a quantum channel in $\Lambda\in\mathscr{C}(\cH_1,\cH_2)$ into a quantum channel $\hat{\Xi}[\Lambda]\in\mathscr{C}(\cK_1,\cK_2)$ can be written as

\begin{align}
    \hat{\Xi}[\Lambda]=\Theta_{post}\circ(\Lambda\otimes\mathbbm{I}_{\cR})\circ\Theta_{pre},\label{Eq:sup_chan}
\end{align}
where the quantum channel $\Theta_{pre}:\cL(\cK_1)\rightarrow\cL(\cH_{1}\otimes\cR)$ is known as pre-processing and the quantum channel , $\Theta_{post}:\cL(\cH_2\otimes\cR)\rightarrow\cL(\cK_2)$ is known as post-processing and $\mathbbm{I}_{\cR}$ is the identity quantum channel on $\cL(\cR)$ \cite{Gour_compar_sup_chan}.

\begin{remark}
   \rm{ The transformations between two sets of quantum channels are more general. Such transformations allow controlled implementation of the quantum channels in the set, in general. We will discuss this in \ref{Subsec:Gen_t_set_meas}.}
\end{remark}

If a quantum channel $\Theta\in \mathscr{C}(\cH_1,\cH_2)$ is the probabilistic mixing The probabilistic mixing of two quantum channels $\Lambda_1\in \mathscr{C}(\cH_1,\cH_2)$ and $\Lambda_2\in \mathscr{C}(\cH_1,\cH_2)$ then $\Theta(\rho)=p\Lambda_1(\rho)+(1-p)\Lambda_2(\rho)$ for all $\rho\in\cL(\cH)$. We denote it as $\Theta=p\Lambda_1+(1-p)\Lambda_2$. We denote a process where a given channel $\Lambda\in \mathscr{C}(\cH,\cK)$ (as an input) is probabilistically mixed with an arbitrary fixed channel $\Psi\in \mathscr{C}(\cH,\cK)$ as
\begin{align}
    \cP_{p,\Psi}[\Lambda]:=p\Lambda+(1-p)\Psi,
\end{align}
where $0\leq p\leq 1$ is the probability.

Now, given a pair of channels $\Lambda_1:\cL(\cH)\rightarrow\cL(\cK)$ and $\Lambda_2:\cL(\cH)\rightarrow\cL(\cK)$, one can define a distance between them.  One of such a distance measure between $\Lambda_1$ and $\Lambda_1$ can be defined as

\begin{align}
    \cD_{\Diamond}(\Lambda_1, \Lambda_2) &:=\mid\mid \Lambda_1-\Lambda_2\mid\mid_{\Diamond} \nonumber \\
    & =\max_{\rho_{AB}\in\cS(\cH_A\otimes\cH_B)}\mid\mid \Lambda_1\otimes\mathbbm{I}_{\cH_B}(\rho_{AB}) \nonumber \\ & \hspace{3.2cm} -\Lambda_2\otimes\mathbbm{I}_{\cH_B}(\rho_{AB})\mid\mid_1,\label{Eq:Dia_formula_H_p}
\end{align}
 where we denote the trace norm by $\mid\mid.\mid\mid_1$, and $\dim(\cH_A)=\dim(\cH_B)$. This distance is known as the diamond distance.

 \begin{remark}
      \rm{It should be noted that, usually a factor $1/2$ is added in the diamond distance to normalize it \cite{Tendick_dist_res_meas}. But we do not use this factor as it is trivial and does not affect any of our results and therefore, is unnecessary. If $1/2$ factor is included, the upper bound of the diamond distance is $1$ \cite{Watrous_book_TQI}. But as we did not include the $1/2$ factor in the diamond distance, the upper bound is $2$ in our case. It is known that $\cD_{\Diamond}$ satisfies all three conditions (i.e., positivity, symmetric, and triangular inequality) that are required to be satisfied by any distance measure \cite{Watrous_book_TQI}.}\label{Re:diam_dist_half_fact}
 \end{remark}

 It is well-known that $\cD_{\Diamond}$ is monotonically non-increasing under arbitrary pre-processing and post-processing channels, or more generally under an arbitrary super-channel. In other words, for an arbitrary super channel $\hat{\Xi}$ that transforms arbitrary  $\Lambda_i\in\mathscr{C}(\cH_1,\cH_2)$ to $\hat{\Xi}[\Lambda_i]\in\mathscr{C}(\cK_1,\cK_2)$ for $i=\{1,2\}$, we have \cite{Watrous_book_TQI}

 \begin{align}
     \cD_{\Diamond}(\hat{\Xi}[\Lambda_1],\hat{\Xi}[\Lambda_2])\leq\cD_{\Diamond}(\Lambda_1,\Lambda_2)\label{Eq:mono_dec_dimon_sup_chan}.
 \end{align}
 
 Furthermore, for four quantum channels $\Lambda_1,\Lambda_2,\Theta_1,\Theta_2\in\mathscr{C}(\cH,\cK)$ and  $0 \leq p\leq 1$  one can show that \cite{Watrous_book_TQI}

 \begin{align}
     &\cD_{\Diamond}(p\Lambda_1+(1-p)\Lambda_2,p\Theta_1+(1-p)\Theta_2)\nonumber\\
     \leq & p\cD_{\Diamond}(\Lambda_1,\Theta_1)+ (1-p)\cD_{\Diamond}(\Lambda_2,\Theta_2).\label{Eq:dim_dist_conv}
 \end{align}
 
 In other words, diamond distance satisfies joint convexity property. Clearly, $\cD_{\Diamond}$ is also monotonically non-increasing under $\cP_{p,\Psi}$.

\begin{remark}
    \rm{Note that the trace norm of an arbitrary linear map $\Phi:\cL(\cH)\rightarrow\cL(\cK)$ is defined as $\mid\mid\Phi\mid\mid_1:=\rm{max}_{X\in\cL(\cH)}\{\mid\mid \Phi(X)\mid\mid_1\mid~\mid\mid X\mid\mid_1\leq 1\}$. Now, the diamond norm of an arbitrary linear map $\Phi:\cL(\cH)\rightarrow\cL(\cK)$ is originally defined as
    $\mid\mid\Phi\mid\mid_{\Diamond}:=\mid\mid \Phi\otimes\mathbbm{I}_{\cH_B}\mid\mid_1$ with $\dim(\cH)=\dim(\cH_B)$ \cite{Watrous_book_TQI}. But note that $(\Gamma_{M_1}-\Gamma_{M_2})$ is a  Hermiticity-preserving map and we know that for the Hermiticity-preserving map, the original definition coincides with Eq. \eqref{Eq:Dia_formula_H_p} \cite{Watrous_book_TQI}. Furthermore, one can show that $\mid\mid\Phi\mid\mid_{\Diamond}=\mid\mid \Phi\otimes\mathbbm{I}_{\cH_R}\mid\mid_1$ for any $\cH_R$ with $\dim(\cH)\leq\dim(\cH_R)$ \cite{Watrous_book_TQI}. It should be mentioned here that for a pair of arbitrary linear maps $\Phi_1:\cL(\cH_1)\rightarrow \cL(\cK_1)$ and $\Phi_2:\cL(\cH_2)\rightarrow \cL(\cK_2)$, the equality $\mid\mid\Phi_1\otimes\Phi_2\mid\mid_{\Diamond}=\mid\mid\Phi_1\mid\mid_{\Diamond}.\mid\mid\Phi_2\mid\mid_{\Diamond}$ holds \cite{Watrous_book_TQI}. Furthermore, if $\Phi:\cL(\cH)\rightarrow\cL(\cK)$ is a quantum channel, $\mid\mid\Phi\mid\mid_{\Diamond}=1$ \cite{Watrous_book_TQI}. }\label{Re:Diam_norm}
\end{remark} 

Consider the swap unitary channel $\mathbf{SWAP}_{\cH_A\leftrightarrow \cH_B}\in\mathscr{C}(\cH_A\otimes\cH_B,\cH_B\otimes\cH_A)$ such that for all $\rho_{AB}\in\cL(\cH_A\otimes\cH_B)$

\begin{align}
    \mathbf{SWAP}_{\cH_A\leftrightarrow \cH_B}(\rho_{AB})=U_{A\leftrightarrow B}\rho_{AB}U^{\dagger}_{A\leftrightarrow B},
\end{align}
where $U_{A\leftrightarrow B}=\sum_{ij}\ket{j}_{B}\bra{i}_{A}\otimes\ket{i}_{A}\bra{j}_{B}$. Note that for arbitrary $\Lambda_i\in\mathscr{C}(\cH_i,\cK_i)$ for $i\in\{A,B\}$

\begin{align}
    \Lambda_B\otimes\Lambda_A=\mathbf{SWAP}_{\cK_A\leftrightarrow \cK_B}\circ(\Lambda_A\otimes\Lambda_B)\circ\mathbf{SWAP}_{\cH_B\leftrightarrow \cH_A}.
\end{align}

Note that the channel $\mathbbm{I}_{\cR}\otimes\mathbf{SWAP}_{(.)\leftrightarrow (.)}$ is invertible for an arbitrary Hilbert space $\cR$. Then for all $\Lambda_C,\Theta_C\in\mathscr{C}(\cH_C,\cK_C)$, $\Lambda_A,\Theta_A\in\mathscr{C}(\cH_A,\cK_A)$, and $\Lambda_B,\Theta_B\in\mathscr{C}(\cH_B,\cK_B)$ from Eq. \eqref{Eq:mono_dec_dimon_sup_chan} we have

\begin{align}
    &\cD_{\Diamond}(\Lambda_C\otimes \Lambda_A\otimes \Lambda_B,\Theta_C\otimes \Theta_A\otimes \Theta_B)\nonumber\\
    =&\cD_{\Diamond}(\Lambda_C\otimes \Lambda_B\otimes \Lambda_A,\Theta_C\otimes \Theta_B\otimes \Theta_A)
\end{align}
or more generally, for arbitrary $\Lambda_{CAB},\Theta_{CAB}\in\mathscr{C}(\cH_C\otimes\cH_A\otimes\cH_B,\cK_C\otimes\cK_A\otimes\cK_B)$

\begin{align}
    \cD_{\Diamond}(\Lambda_{CAB},\Theta_{CAB})=\cD_{\Diamond}(\Lambda_{CBA},\Theta_{CBA}),\label{Eq:swap_inv_dimon}
\end{align}
where $\Lambda_{CBA}=(\mathbbm{I}_{\cK_C}\otimes\mathbf{SWAP}_{\cK_A\leftrightarrow \cK_B})\circ\Lambda_{CAB}\circ(\mathbbm{I}_{\cH_C}\otimes\mathbf{SWAP}_{\cH_B\leftrightarrow \cH_A})$ and $\Theta_{CBA}=(\mathbbm{I}_{\cK_C}\otimes\mathbf{SWAP}_{\cK_A\leftrightarrow \cK_B})\circ\Theta_{CAB}\circ(\mathbbm{I}_{\cH_C}\otimes\mathbf{SWAP}_{\cH_B\leftrightarrow \cH_A})$. In other words, $\cD_{\Diamond}$ is invariant under swapping of Hilbert spaces.

In Heisenberg picture, a quantum channel $\Lambda^{\dagger}$ transforms a quantum measurement $M=\{M(x)\}$ to another quantum measurement $\Lambda^{\dagger}[M]=\{\Lambda^{\dagger}[M(x)]\}_{x\in\Omega_M}$. The action of a CP trace non-increasing linear map in Heisenberg picture is defined in a similar way.

After performing a measurement $M=\{M(x)\}$, one can post-process the outcomes which is equivalent to performing another measurement $N=N(y)$. For this purpose, if one uses the probability distributions $\nu_x=\{\nu_x(y)\}$ to post-process the outcome $x$ then

\begin{align}
    N(y)=\sum_{x\in\Omega_M}\nu_x(y)M(x)~\forall y\in \Omega_N.
\end{align}

A measurement $M$ can be performed using many different quantum instruments. A quantum instrument $\cI$ is defined as a set of CP trace non-increasing linear maps $\{\Phi_x:\cL(\cH)\rightarrow \cL(\cK)\}_{x\in\hat{\Omega}_{\cI}}$ that sums up to a quantum channel $\Phi=\sum_{x}\Phi_{x}$ where $\hat{\Omega}_{\cI}$ is the outcome set of the instrument $\cI$ \cite{Heinosaari_book_QF}. It is well-known that both quantum measurements and quantum channels are special cases of quantum instruments. If the instrument $\cI$ implements the measurement $M\in\mathscr{M}(\cH)$, then

\begin{align}
    \Phi_x^{\dagger}[\Id_{\cK}]=M(x)
\end{align}
hold for all $x\in \hat{\Omega}_{\cI}$ and here, $\hat{\Omega}_{\cI}=\Omega_{M}$.

In Heisenberg picture, the quantum instrument $\cI$ transforms a measurement $N=\{N(y)\}\in\mathscr{M}(\cK)$ to another measurement 

\begin{align}
    \cI^{\dagger}[N]=\{\Phi^{\dagger}_x[N(y)]\}_{(x,y)\in (\hat{\Omega}_{\cI}\times\Omega_M)}.
\end{align}

Clearly, the measurement $\cI^{\dagger}[N]$ has the outcome set $\Omega_{\cI^{\dagger}[N]}=(\hat{\Omega}_{\cI}\times\Omega_N)$. A general transformation of a measurement can be the compositions of postprocessing of outcomes, probabilistic mixing with another measurement and action of quantum instruments. Similar to the sets of quantum channels, transformations of a set of measurements is slightly more general. Such transformations allow the controlled implemention of measurements in that set, in general.

 Given an arbitrary measurement $M=\{M(x)\}\in\mathscr{M}(\cH_A)$, one can associate it with a measure-prepare channel $\Gamma_M$ such that
\begin{align}
    \Gamma_{M}(\rho)=\sum_{a}\tr[\rho M(a)]\ket{a}\bra{a}
\end{align}
for all $\rho\in\cL(\cH_A)$ where $\{\ket{a}\}$ is a chosen orthonormal basis of Hilbert space $\cH_{\Omega_M}$ with dimension $|\Omega_M |$  where we denote the cardinality of a set by the symbol $|.|$.  Clearly, $\Gamma_{M}\in\mathscr{C}(\cH_A,\cH_{\Omega_M})$ and $\Gamma_{M\otimes N}=\Gamma_M\otimes\Gamma_N$ for two arbitrary measurements $M\in\mathscr{M}(\cH_A)$ and $N\in\mathscr{M}(\cH_B)$.


Now, we can define a trivially enlarged version of $M$ as $\hat{M}_{\cH_A\otimes\cH_B}=\{\hat{M}_{\cH_A\otimes\cH_B}(a)=M(a)\otimes\Id_{\cH_B}\}\in\mathscr{M}(\cH_A\otimes\cH_B)$. Clearly, we have $\Gamma_{\hat{M}_{\cH_A\otimes\cH_B}}=\Gamma_{M}\otimes\tr_{\cH_B}\in\mathscr{C}(\cH_A\otimes\cH_B,\cH_{\Omega_M})$. Performing the measurement $\hat{M}_{\cH_A\otimes\cH_B}$ on a system with an arbitrary quantum state $\rho_{AB}\in\cS(\cH_A\otimes\cH_B)$ is equivalent to performing the measurement $M$ on the $A$ part and doing nothing on the $B$ part. In other words, $\hat{M}_{\cH_A\otimes\cH_B}=M\otimes\cT_{\cH_B}$. Such trivial enlargements will be used repeatedly in Sec. \ref{Subsec:dist_based_res_meas}.

Consider the quantum channel $\tr_{\cH_{\Omega_N}}$ for a measurement $N\in\mathscr{M}(\cH_B)$. Then

\begin{align}
&(\mathbbm{I}_{\cH_{\Omega_M}}\otimes\tr_{\cH_{\Omega_N}})\circ(\Gamma_{M}\otimes\Gamma_N)(\rho_{AB})\nonumber\\
=&\sum_{ab}\tr[\rho_{AB}(M(a)\otimes N(b))]\ket{a}\bra{a}\nonumber\\
=&\sum_{a}\tr[\rho_{AB}(M(a)\otimes \sum_bN(b))]\ket{a}\bra{a}\nonumber\\
=&\sum_{a}\tr[\rho_{AB}(M(a)\otimes \Id_{\cH_B})]\ket{a}\bra{a}\nonumber\\
=&\Gamma_{\hat{M}_{\cH_A\otimes\cH_B}}(\rho_{AB}).
\end{align}
 Therefore, $\tr_{\cH_{\Omega_N}}$ is equivalent to the marginalization operation on the measurement $N$. Clearly, for an arbitrary measurement $M=\{M(m_1,m_2)\}\in\mathscr{M}(\cH)$ with outcome set $\Omega_A=X\times Y$ where the set $X$ and $Y$ are some sets of natural numbers and $m_1\in X$ and $m_2\in Y$ we can define a channel $\Gamma_{M}:\cL(\cH)\rightarrow\cL(\cH_X\otimes\cH_Y)$ with $\dim{\cH_X}=|X|$ and $\dim{\cH_Y}=|Y|$. Moreover, we denote the \emph{merginalization process} as $\mathbf{Merg}_{Y}$. In other words, $\mathbf{Merg}_{Y}(M)=\{[\mathbf{Merg}_{Y}(M)](m_1)=\sum_{m_2} M(m_1,m_2)\}$. Clearly,
 \begin{align}
     \Gamma_{\mathbf{Merg}_{Y}(M)}=(\mathbbm{I}_{\cH_{X}}\otimes\tr_{\cH_{Y}})\circ\Gamma_M.
 \end{align}
 
Now, given a pair of measurements $M_1$ and $M_2$, one can define a distance between them. One possible distance measure between $M_1\in\mathscr{M}(\cH)$ and $M_2\in\mathscr{M}(\cH)$ is defined as \cite{Tendick_dist_res_meas}

\begin{align}
    \cD_{\Diamond}(M_1, M_2)&:=\cD_{\Diamond}(\Gamma_{M_1}, \Gamma_{M_2})\\
    &=\mid\mid \Gamma_{M_1}-\Gamma_{M_2}\mid\mid_{\Diamond}.
\end{align}
Clearly, $\cD_{\Diamond}(M_1, M_2)$ is one of the possible choices of distance measures \cite{Tendick_dist_res_meas}. 

\begin{remark}
   \rm{ Note that without loss of generality the outcome sets $\Omega_{M_1}$ and $\Omega_{M_2}$ can always be assumed to be same by choosing some extra POVM elements to be zero matrix. More specifically, if $M_1$ has more number outcomes, a number of zero matrix can always be appended in $M_2$ as extra POVM elements so that the outcome sets $\Omega_{M_1}$ and $\Omega_{M_2}$ become equal to each other.}
\end{remark}



Consider two arbitrary measurements $M=\{M(a_1,a_2)\}\in\mathscr{M}(\cH)$ and $N=\{N(b_1,b_2)\}\in\mathscr{M}(\cH)$ with $\Omega_M=\Omega_N=X\times Y$. Clearly,

\begin{align}
    \cD_{\Diamond}(\mathbf{Merg}_{Y}(M), \mathbf{Merg}_{Y}(N))=&\cD_{\Diamond}(\Gamma_{\mathbf{Merg}_{Y}(M)}, \Gamma_{\mathbf{Merg}_{Y}(N)})\nonumber\\
    =&\mid\mid\Gamma_{\mathbf{Merg}_{Y}(M)}-\Gamma_{\mathbf{Merg}_{Y}(N)}\mid\mid_{\Diamond}\nonumber\\
    =& \mid\mid(\mathbbm{I}_{\cH_{X}}\otimes\tr_{\cH_{Y}})\circ[\Gamma_{M}-\Gamma_{N}]\mid\mid_{\Diamond}\nonumber\\
    \leq&\mid\mid\Gamma_{M}-\Gamma_{N}\mid\mid_{\Diamond}\nonumber\\
    =&\cD_{\Diamond}(M, N).\label{Eq:diam_non_inc_merg}
\end{align}
That is, the distance $\cD_{\Diamond}(M, N)$ is contractive under the marginalization operation.

 In Heisenberg picture, a $\mathbf{SWAP}$ channel acts on an arbitrary measurement $M_B\otimes M_A\in\mathscr{M}(\cH
_B\otimes\cH_A)$ and provides the outcome $\mathbf{SWAP}^{\dagger}_{\cH_A\leftrightarrow \cH_B}(M_B\otimes M_A)=M_A\otimes M_B\in\mathscr{M}(\cH_A\otimes\cH_B)$. Note that 
\begin{align}
    \Gamma_{M_A\otimes M_B}=\mathbf{SWAP}_{\cH_{\Omega_{M_B}}\leftrightarrow \cH_{\Omega_{M_A}}}\circ\Gamma_{M_B\otimes M_A}\circ\mathbf{SWAP}_{\cH_A\leftrightarrow \cH_B}.\label{Eq:gamma_ten_swap}
\end{align}
Consider arbitrary measurements $M_{CAB},N_{CAB}\in\mathscr{M}(\cH_C\otimes\cH_A\otimes\cH_B)$. Then from Eq. \eqref{Eq:swap_inv_dimon}, and Eq. \eqref{Eq:gamma_ten_swap} we obtain

\begin{align}
    \cD_{\Diamond}(M_{CAB},N_{CAB})=\cD_{\Diamond}(M_{CBA},N_{CBA}),\label{Eq:swap_inv_dimon_meas}
\end{align}
where $M_{CBA}=\mathbbm{I}^{\dagger}_{\cH_C}\otimes\mathbf{SWAP}^{\dagger}_{\cH_B\leftrightarrow \cH_A}(M_{CAB})$ and $N_{CBA}=\mathbbm{I}^{\dagger}_{\cH_C}\otimes\mathbf{SWAP}^{\dagger}_{\cH_B\leftrightarrow \cH_A}(N_{CAB})$ .
In Section \ref{Subsec:Gen_t_set_meas}, we will define the distance for two sets of measurements.

\subsection{Aspects of a resource measure based on  set of quantum measurements}
\label{Subsec:prelim_resource}
Two major ingredients of a generic resource theory concerning a specific resource are the free objects and the free transformations (also known as free operations for the state-based resource theories). Free objects are the objects that do not contain that particular resource and free transformations are the transformations that transform a free object to another free object. The objects can be quantum states, quantum measurements, quantum channels etc. depending on a particular resource theory and similarly, transformations can be quantum channels, quantum super-channels etc depending on a particular resource theory. A resource theory is said to be \emph{convex} if free objects and free transformations form convex sets.

Consider a resource theory with the set of free objects $\cF$ and the set of free transformations $\cO$. Without loss of generality, we can assume $\cF$ and $\cO$ are compact. In the context of measurement-based resource theories, the objects are measurements or more generally sets of measurements and similarly, transformations are the transformations of one measurement to another measurement or transformations from one set of measurements to another set of measurements in Heisenberg picture.

Certain natural assumptions are respected by any state based resource theories as defined in  Ref. \cite{choi_defined,Chitambar_QRT_review}. Similarly, in case of any measurements-based resource theory, one can make the following natural assumptions-

\begin{enumerate}
    \item[A1.] Two objects $\cM_1$ and $\cM_2$ are free if and only if $\cM_1\otimes\cM_2$ is free.
    \item[A2.] Two transformations $\cW_1$ and $\cW_2$ are free if and only if  $\cW_1\otimes\cW_2$ is free.
    \item[A3.] As in Heisenberg picture, identity channel does not transform the measurements (and therefore, does not transform any free object), it is a free transformation. 
    \item[A4] A free object remains a free object under the swapping of the Hilbert spaces.
    \item[A5.] $\cT_{\cH}$ is a free object for all $\cH$.
    \item[A6.] Merginalization process is a free transformation.
\end{enumerate}


It should be noted that all three known measurement-based resource theories respect the natural assumptions [A1]–[A6] and are expected to be respected by any measurement-based resource theories to be developed in the future. Merginalization is a free transformation for the resource theory of measurement incompatibility and the resource theory of measurement coherence. Now, it is known that the post-processing of outcomes is not a free transformation for the resource theory of measurement sharpness, in general \cite{Mitra_meas_sharp,Buscemi_meas_sharp}. However, merginalization is very special case of post-processing of outcomes that transforms an arbitrary measurement $M\in\mathscr{M}(\cH)$ to $\cT_{\cH}$ which is a trivial object. Therefore, technically, the merginalization should be the free transformation for the the resource theory of measurement sharpness. But in Ref. \cite{Buscemi_meas_sharp},  only the outcome set preserving free transformations for the resource theory of measurement sharpness have been considered. The assumption [A6] will not be used everywhere in the manuscript. We will explicitly mention wherever we use the assumption [A6].

 Now, the quantification of resources or the resource measures are integral part of any resource theory. A resource measure $\mathbbm{R}$ should satisfy the following conditions-

\begin{enumerate}
    \item[R1.] (Non-negativity and faithfulness): $\mathbbm{R}(\cM)\geq 0$ for all objects $\cM$ and $\mathbbm{R}(\cM)= 0$ if and only if $\cM\in\cF$ \label{Con:R1}.
    \item[R2.] (Monotonicity): $\mathbbm{R}(\cM)\geq \mathbbm{R}(\cW(\cM))$ for all objects $\cM$ and for all free transformations $\cW$ i.e., for all $\cW\in\cO$ \label{Con:R2}.
\end{enumerate}

In addition to these necessary conditions, one more desirable condition is

\begin{enumerate}
    \item[R3.] (Convexity): $\mathbbm{R}(\sum^n_i p_i \cM_i)\leq \sum^n_i p_i\mathbbm{R}( \cM_i)$ for all sets of objects $\{\cM_i\}^n_{i=1}$ and all probability distributions $\{p_i\}^n_{i=1}$ \label{Con:R3}.
\end{enumerate}

A resource measure that satisfies these three conditions is usually said to be a good measure to work with for any convex resource theory. From now on, we assume resource measures to be a continuous function.  In our scenario, the objects are the sets of measurements and transformations are the physically realizable maps that transform a set of measurements to another set of measurements in Heisenberg picture. Two well-known resource measures that will be used later in this work are the resource robustness and the resource weight, as defined below.

For convex measurement-based resource theories, the resource robustness of an arbitrary set of measurements $\cM=\{M_i\}\subset\mathscr{M}(\cH_A)$ is defined as
\begin{equation}
\begin{split}
\mathscr{R}(\cM)=\min~  & \left. r\right. \\
\text{s.t.} ~ & \left. \cN=\{N_i=\frac{M_i}{1+r}+ \frac{r\tilde{M}_i}{1+r}\}\in\cF_{\cH_A} \right. \\
& \left. \tilde{\cM}=\{\tilde{M}_i\}\subset\mathscr{M}(\cH_A). \right.
\end{split}\label{Eq:def_robust}
\end{equation}
Here, $\tilde{\cM}$ is an arbitrary set of unwanted noise measurements, $\cF_{\cH_A}$ is the set of free sets of measurements acting on Hilbert space $\cH_A$  and the optimization is over all variables, other than the given set of measurements $\cM$.

Similarly, for convex measurement-based resource theories, the resource weight of an arbitrary set of measurements $\cM=\{M_i\}\subset\mathscr{M}(\cH_A)$ is defined as
\begin{equation}
\begin{split}
\mathscr{W}(\cM)=\min~  & \left. r\right. \\
\text{s.t.} ~ & \left. \cM=\{M_i=\frac{\tilde{N}_i}{1+r}+ \frac{r\tilde{M}_i}{1+r}\} \right. \\
& \left. \tilde{\cM}=\{\tilde{M}_i\}\subset\mathscr{M}(\cH_A) \right. \\
& \left. \tilde{\cN}=\{\tilde{N}_i\}\in\cF_{\cH_A}. \right.
\end{split}\label{Eq:def_weight}
\end{equation}
Here, $\tilde{\cM}$ is an arbitrary set of unwanted noise measurements, $\cF_{\cH_A}$ is the set of free sets of measurements acting on Hilbert space $\cH_A$ and the optimization is over all variables, other than the given set of measurements $\cM$.


Given an arbitrary resource measure $\mathbbm{R}$, one can define a family of resource measures that depend on the parameter $\epsilon$. In the following, we write down the definition of such a measure.

\begin{definition}
    For any resource measure $\mathbbm{R}$ in any measurement-based resource theory, and for a generic distance $\mathbf{D}$, the corresponding $\epsilon$-measure $\mathbbm{R}^{\mathbf{D}}_{inf,\varepsilon}$ of a set of measurements $\cM=\{M_i\}\subset \mathscr{M}(\cH)$ is defined as

    \begin{align}
    \mathbbm{R}^{\mathbf{D}}_{inf,\epsilon}(\cM)=\inf_{\substack{\cN\subset \mathscr{M}(\cH),\\
    \mathbf{D}(\cM,\cN)\leq \epsilon,~|\cM|=|\cN|}}\mathbbm{R}(\cN). \label{Eq:def_eps_inf}
    \end{align}
\end{definition}
Note that as we restrict ourselves to finite dimensional Hilbert space, the set $\{\cB\mid \mathbf{D}(\cA,\cB)\leq \epsilon\}$ is convex and compact. One of the goals of our work is to study the properties of $\epsilon$-measures of measurement-based resources.

\section{Main results}
\label{Sec:Main_results}
\subsection{Transformation of sets of measurements, sets of channels and a distance measure}
\label{Subsec:Gen_t_set_meas}
As we already mentioned in Sec. \ref{Sec:Prelim}, in this Section, we discuss fairly general transformation of a set of channels. 

Consider a set of quantum channels $\cC=\{\Lambda_i\in\mathscr{C}(\cH,\cK)\}$. Throughout the paper, we say that the set $\cC$ is equal to another set $\cC^{\prime}=\{\Lambda^{\prime}_i\in\mathscr{C}(\cH,\cK)\}$ if and only if $\Lambda_i=\Lambda^{\prime}_i$ holds for all $i$s. In other words, we assume that all elements of the $\cC$ is rigidly marked by the index $i$ throughout the paper. Similar assumption will be made for sets of measurements throughout the paper. From now on, we restrict ourselves to the sets of channels and sets of measurements with finite number of elements. Then a fairly general transformation $\overline{\cV}$ can be written as

\begin{align}
    [\overline{\cV}(\cC)]_j=\overline{\Theta}_{post}^j\circ(\Sigma_{\cC}\otimes \mathbbm{I_{\overline{\cR}}})\circ\overline{\Theta}_{pre}^j,\label{eq:far_gen_tra}
\end{align}

where $[\overline{\cV}(\cC)]_j$ is the $j$th channel of the set $\overline{\cV}(\cC)$, $\Sigma_{\cC}=\sum_{\overline{i}}\Lambda_{\overline{i}}\otimes\overline{\Phi}_{\overline{i}}$, $\overline{\Phi}_{\overline{i}}(.)=\ket{\overline{i}}\bra{\overline{i}}(.)\ket{\overline{i}}\bra{\overline{i}}$ for all $\overline{i}$, $\overline{\Theta}_{pre}^j\in\mathscr{C}({\overline{\cH},\cH\otimes\cH_I\otimes\overline{\cR}})$ for all $j$, $\overline{\Theta}_{post}^j\in\mathscr{C}({\cK\otimes\cH_I\otimes\overline{\cR}, \overline{\cK}})$ for all $j$, $\{\ket{\overline{i}}\}$ is an orthonormal basis of Hilbert space $\cH_I$. Note that $\Sigma_{\cC}$ is the \emph{controlled implementation} of the channels in the set $\cC$. Clearly, if both $\cC$ and $\cV[\cC]$ contain one quantum channel then \eqref{eq:far_gen_tra} reduces to \eqref{Eq:sup_chan} as a special case.

\begin{center}
    \begin{figure}[hbt!]
     \includegraphics[ width=8.9cm, height=4.3cm]{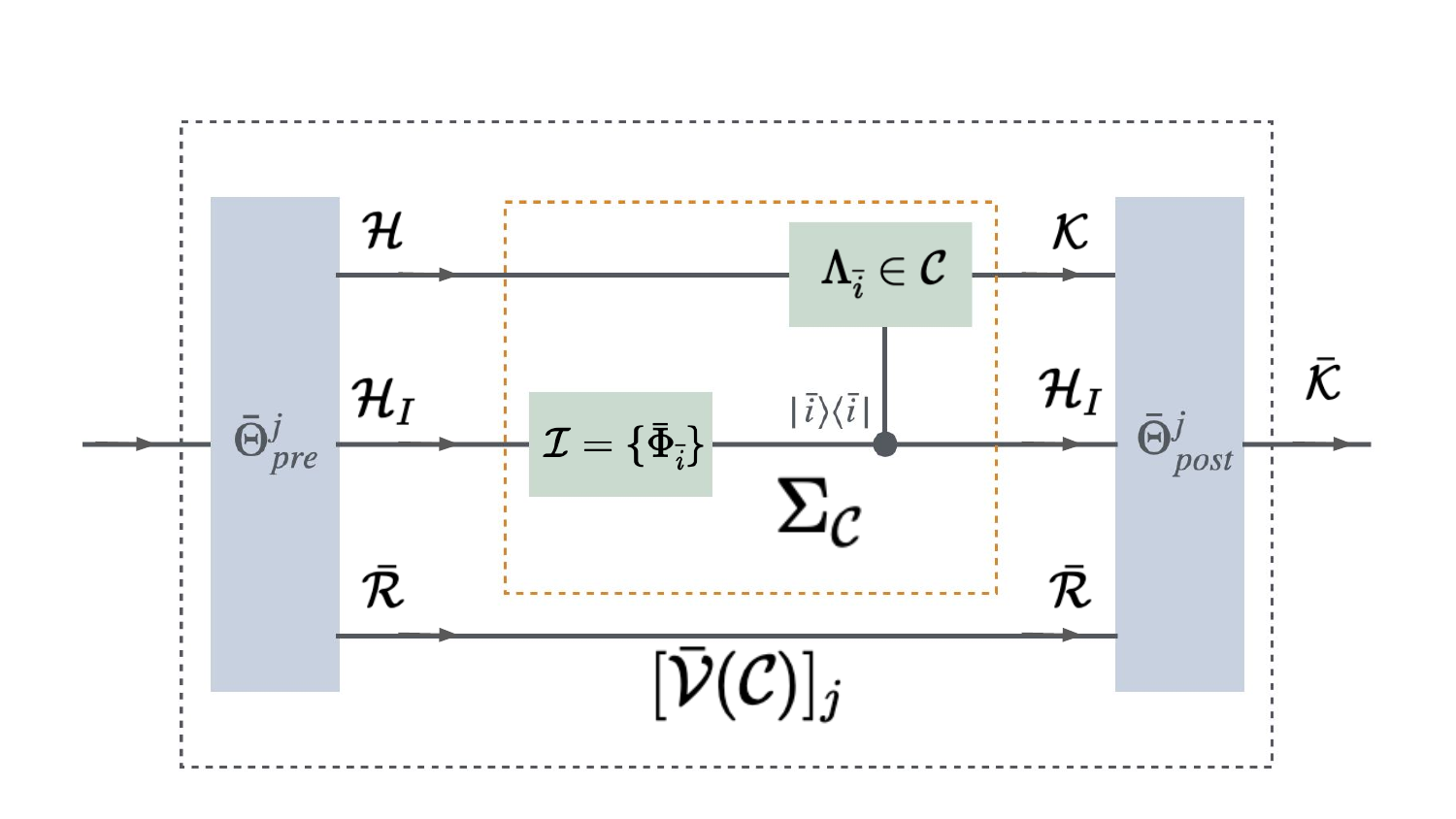}
    \caption{Schematic circuit diagram of the fairly general transformation of a set of channels $\cC$ to another set of channels $\overline{\cV}(\cC)$ as represented through Eq. \eqref{eq:far_gen_tra}. In this figure, instead of the actual states of the input and output systems, we mostly indicate input and output Hilbert spaces because it is more important for our purpose of illustration.}
    \label{fig:CFEWF}
\end{figure}
\end{center}

Now, we consider a function of two sets of channels $\cC_1=\{\Lambda_i\in\mathscr{C}(\cH,\cK_i)\}^n_{i=1}$ and $\cC_2=\{\Theta_i\in\mathscr{C}(\cH,\cK_i)\}^n_{i=1}$ as

\begin{align}
    \overline{\cD}(\cC_1,\cC_2)=\max_{i\in\{1,\ldots,n\}}\cD_{\Diamond}(\Lambda_i,\Theta_i).\label{eq:dist_set_chan}
\end{align}
Clearly, if each $\cC_1$ and $\cC_2$ contain just one quantum channel then $\overline{\cD}$ coincides with $\cD_{\Diamond}$. Now, we show that $\overline{\cD}(\cC_1,\cC_2)$ is a valid distance measure.

\begin{lemma}
    For two sets of channels $\overline{\cD}(\cC_1,\cC_2)$ is a valid distance.
\end{lemma}

\begin{proof}
    Firstly, as the diamond distance $\cD_{\Diamond}$ is always positive-semidefinite for an arbitrary pair of quantum channels, from Eq. \eqref{eq:dist_set_chan} we obtain $\overline{\cD}(\cC_1,\cC_2)\geq 0$. Now, note that $\cD_{\Diamond}(\Lambda_i,\Theta_i)\geq 0$ for all $i$s. Therefore, if $\overline{\cD}(\cC_1,\cC_2)= 0$ then from Eq. \eqref{eq:dist_set_chan}, we obtain that $\cD_{\Diamond}(\Lambda_i,\Theta_i)=0$ for all $i$s which implies $\Lambda_i=\Theta_i$ for all $i$s or in other words $\cC_1=\cC_2$. Conversely, if $\Lambda_i=\Theta_i$ for all $i$s then $\cD_{\Diamond}(\Lambda_i,\Theta_i)=0$ for all $i$s which implies $\overline{\cD}(\cC_1,\cC_2)=0$.

    Secondly, as $\cD_{\Diamond}(\Lambda_i,\Theta_i)=\cD_{\Diamond}(\Theta_i,\Lambda_i)$ for all $i$s, $\overline{\cD}(\cC_1,\cC_2)=\overline{\cD}_{\Diamond}(\cC_2,\cC_1)$.

    Thirdly, consider another set of channels $\cC_3=\{\Psi_i\}$. Now, suppose that in Eq. \eqref{eq:dist_set_chan} supremum occurs for $i=i^*$ then

    \begin{align}
        \overline{\cD}(\cC_1,\cC_2)&=\cD_{\Diamond}(\Lambda_{i^*},\Theta_{i^*})\nonumber\\
        &\leq  \cD_{\Diamond}(\Lambda_{i^*},\Psi_{i^*})+\cD_{\Diamond}(\Psi_{i^*},\Theta_{i^*})\nonumber\\
        &\leq \max_i\cD_{\Diamond}(\Lambda_{i},\Psi_{i})+\max_i\cD_{\Diamond}(\Psi_{i},\Theta_{i})\nonumber\\
        &=\overline{\cD}_{\Diamond}(\cC_1,\cC_3)+\overline{\cD}_{\Diamond}(\cC_3,\cC_2).
    \end{align}
    Hence, $\overline{\cD}$ satisfies triangular inequality.
    Therefore, $\overline{\cD}$ is a distance measure.
\end{proof}
 The tensor product of two sets of channels $\tilde{\cC}_1=\{\tilde{\Lambda}_i\in\mathscr{C}(\cH_{1},\cK_1)\}^m_{i=1}$ and $\tilde{\cC}_2=\{\tilde{\Theta}_j\in\mathscr{C}(\cH_{2},\cK_2)\}^n_{j=1}$ is defined as $\tilde{\cC}_1\otimes\tilde{\cC}_2:=\{\tilde{\Psi}_{ij}=\tilde{\Lambda}_i\otimes\tilde{\Theta}_j\}$. 
The convex combination of two sets of channels $\cC_1=\{\Lambda_i\in\mathscr{C}(\cH,\cK)\}^n_{i=1}$ and $\cC_2=\{\Theta_i\in\mathscr{C}(\cH,\cK)\}^n_{i=1}$ is defined as $p\cC_1+(1-p)\cC_2:=\{\Psi_i=p\Lambda_i+(1-p)\Theta_i\}$. We denote a process where a given set of channels $\cC_1$ (as an input)  is probabilistically mixed with an arbitrary fixed set of channels $\cQ=\{\Lambda^{\prime}_i\}$ as

\begin{align}
    \cP_{p,\cQ}[\cC_1]:=p\cC_1+(1-p)\cQ,
\end{align}
where $0\leq p\leq 1$ is the probability.

Now, we start studying different mathematical properties of the distance $\overline{\cD}$.
\begin{proposition}
$\overline{\cD}$ satisfies the following relations
\begin{enumerate}
\item $\overline{\cD}$ satisfies joint convexity property. In other words, $\overline{\cD}(p\cC_1+(1-p)\cC_2,p\cC^{\prime}_1+(1-p)\cC^{\prime}_2)\leq p\overline{\cD}(\cC_1,\cC^{\prime}_1)+(1-p)\overline{\cD}(\cC_2,\cC^{\prime}_2)$.
\item $\overline{\cD}$ is jointly subadditive under tensor product. In other words, $\overline{\cD}(\tilde{\cC}_1\otimes\tilde{\cC}_2, \tilde{\cC}^{\prime}_1\otimes\tilde{\cC}^{\prime}_2)\leq\overline{\cD}(\tilde{\cC}_1, \tilde{\cC}^{\prime}_1)+\overline{\cD}(\tilde{\cC}_2,\tilde{\cC}^{\prime}_2).$ If $\tilde{\cC}_2=\tilde{\cC}^{\prime}_2)$, equality holds i.e., $\overline{\cD}$ is invariant under the tensor product with same set of quantum channels.
\end{enumerate}
\label{Proposi:join_convex_subaddit_ten_product}
\end{proposition}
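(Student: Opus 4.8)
The plan is to reduce both claims to the corresponding properties of the diamond distance $\cD_{\Diamond}$ established earlier in the excerpt, exploiting the fact that $\overline{\cD}$ is nothing but a componentwise maximum. For the joint convexity, I would first unfold the definitions: since the convex combination of sets of channels is taken componentwise, the $i$th element of $p\cC_1+(1-p)\cC_2$ is $p\Lambda_i+(1-p)\Theta_i$, so that $\overline{\cD}(p\cC_1+(1-p)\cC_2,p\cC^{\prime}_1+(1-p)\cC^{\prime}_2)=\max_i\cD_{\Diamond}(p\Lambda_i+(1-p)\Theta_i,\,p\Lambda^{\prime}_i+(1-p)\Theta^{\prime}_i)$. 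Applying the joint convexity of $\cD_{\Diamond}$ from Eq. \eqref{Eq:dim_dist_conv} inside the maximum bounds each term by $p\cD_{\Diamond}(\Lambda_i,\Lambda^{\prime}_i)+(1-p)\cD_{\Diamond}(\Theta_i,\Theta^{\prime}_i)$, and the elementary inequality $\max_i(a_i+b_i)\leq\max_i a_i+\max_i b_i$ then splits the maximum into the two desired terms $p\overline{\cD}(\cC_1,\cC^{\prime}_1)+(1-p)\overline{\cD}(\cC_2,\cC^{\prime}_2)$.

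For the subadditivity under tensor product, the key intermediate step I would establish first is a tensor-subadditivity for the diamond distance itself, namely $\cD_{\Diamond}(\Lambda_1\otimes\Lambda_2,\Theta_1\otimes\Theta_2)\leq\cD_{\Diamond}(\Lambda_1,\Theta_1)+\cD_{\Diamond}(\Lambda_2,\Theta_2)$. This follows by inserting the hybrid channel $\Theta_1\otimes\Lambda_2$ and applying the triangle inequality, then observing that $\cD_{\Diamond}(\Lambda_1\otimes\Lambda_2,\Theta_1\otimes\Lambda_2)=\mid\mid(\Lambda_1-\Theta_1)\otimes\Lambda_2\mid\mid_{\Diamond}=\mid\mid\Lambda_1-\Theta_1\mid\mid_{\Diamond}\cdot\mid\mid\Lambda_2\mid\mid_{\Diamond}=\cD_{\Diamond}(\Lambda_1,\Theta_1)$, where the middle equality is the multiplicativity of the diamond norm under tensor product and the last uses $\mid\mid\Lambda_2\mid\mid_{\Diamond}=1$ for a channel, both recorded in Remark \ref{Re:Diam_norm}; the hybrid term is handled symmetrically.

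With this lemma in hand, since the tensor product of sets is indexed by pairs $(i,j)$ with $\tilde{\Psi}_{ij}=\tilde{\Lambda}_i\otimes\tilde{\Theta}_j$, I would write $\overline{\cD}(\tilde{\cC}_1\otimes\tilde{\cC}_2,\tilde{\cC}^{\prime}_1\otimes\tilde{\cC}^{\prime}_2)=\max_{i,j}\cD_{\Diamond}(\tilde{\Lambda}_i\otimes\tilde{\Theta}_j,\tilde{\Lambda}^{\prime}_i\otimes\tilde{\Theta}^{\prime}_j)$, bound each entry by $\cD_{\Diamond}(\tilde{\Lambda}_i,\tilde{\Lambda}^{\prime}_i)+\cD_{\Diamond}(\tilde{\Theta}_j,\tilde{\Theta}^{\prime}_j)$ via the lemma, and then use that the maximum of a sum over two independent indices factorizes exactly as $\max_{i,j}(a_i+b_j)=\max_i a_i+\max_j b_j$ to recover $\overline{\cD}(\tilde{\cC}_1,\tilde{\cC}^{\prime}_1)+\overline{\cD}(\tilde{\cC}_2,\tilde{\cC}^{\prime}_2)$. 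For the equality when $\tilde{\cC}_2=\tilde{\cC}^{\prime}_2$, I would note that each fixed $\tilde{\Theta}_j$ is a common channel tensored onto both $\tilde{\Lambda}_i$ and $\tilde{\Lambda}^{\prime}_i$, so by the same norm-multiplicativity argument $\cD_{\Diamond}(\tilde{\Lambda}_i\otimes\tilde{\Theta}_j,\tilde{\Lambda}^{\prime}_i\otimes\tilde{\Theta}_j)=\cD_{\Diamond}(\tilde{\Lambda}_i,\tilde{\Lambda}^{\prime}_i)$ holds with equality, whence the double maximum collapses to $\overline{\cD}(\tilde{\cC}_1,\tilde{\cC}^{\prime}_1)$, which equals the claimed sum since $\overline{\cD}(\tilde{\cC}_2,\tilde{\cC}_2)=0$. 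The only genuinely non-routine ingredient is the tensor-subadditivity lemma for $\cD_{\Diamond}$; everything else is bookkeeping over indices, so I expect the multiplicativity-plus-triangle-inequality step to be where the real content sits.
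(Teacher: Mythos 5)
Your proposal is correct and follows essentially the same route as the paper: joint convexity of $\cD_{\Diamond}$ plus splitting the maximum for part 1, and the triangle inequality through the hybrid channel $\Theta_1\otimes\Lambda_2$ combined with multiplicativity of the diamond norm and $\mid\mid\Lambda\mid\mid_{\Diamond}=1$ for channels for part 2. The only cosmetic differences are that you package the channel-level tensor subadditivity as a standalone lemma and collapse the equality case entrywise via the norm identity, whereas the paper works directly with the maximizing indices $i^*,j^*$ and proves the equality by two opposing inequalities; the mathematical content is identical.
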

\begin{proof}
We use joint convexity of diamond distance and multiplicity of diamond norm to prove these statements.
\begin{enumerate}
\item Let $\overline{\cD}(p\cC_1+(1-p)\cC_2,p\cC^{\prime}_1+(1-p)\cC^{\prime}_2)=\cD_{\Diamond}(p\Lambda_{i^*}+(1-p)\Theta_{i^*},p\Lambda_{i^*}^{\prime}+(1-p)\Theta_{i^*}^{\prime})$. Then
\begin{align}
&\overline{\cD}(p\cC_1+(1-p)\cC_2,p\cC^{\prime}_1+(1-p)\cC^{\prime}_2)\nonumber\\
&=\cD_{\Diamond}(p\Lambda_{i^*}+(1-p)\Theta_{i^*},p\Lambda_{i^*}^{\prime}+(1-p)\Theta_{i^*}^{\prime})\nonumber\\
&\leq p\cD_{\Diamond}(\Lambda_{i^*},\Lambda_{i^*}^{\prime})+(1-p)\cD_{\Diamond}(\Theta_{i^*},\Theta_{i^*}^{\prime})\nonumber\\
&\leq p\max_i\cD_{\Diamond}(\Lambda_{i},\Lambda_{i}^{\prime})+(1-p)\max_i\cD_{\Diamond}(\Theta_{i},\Theta_{i}^{\prime})\nonumber\\
&=p\overline{\cD}(\cC_1,\cC_2)+(1-p)\overline{\cD}(\tilde{\cC}_1,\tilde{\cC}_2),
\end{align}

where in the third line, we have used the joint convexity property of $\cD_{\Diamond}$ given in Eq. \eqref{Eq:dim_dist_conv}.

\item Let $\overline{\cD}(\tilde{\cC}_1\otimes\tilde{\cC}_2, \tilde{\cC}^{\prime}_1\otimes\tilde{\cC}^{\prime}_2)=\cD_{\Diamond}(\tilde{\Lambda}_{i^*}\otimes\tilde{\Theta}_{j^*}, \tilde{\Lambda}^{\prime}_{i^*}\otimes\tilde{\Theta}^{\prime}_{j^*})$. Then

\begin{align}
&\overline{\cD}(\tilde{\cC}_1\otimes\tilde{\cC}_2, \tilde{\cC}^{\prime}_1\otimes\tilde{\cC}^{\prime}_2)\nonumber\\
&=\cD_{\Diamond}(\tilde{\Lambda}_{i^*}\otimes\tilde{\Theta}_{j^*}, \tilde{\Lambda}^{\prime}_{i^*}\otimes\tilde{\Theta}^{\prime}_{j^*})\nonumber\\
&\leq\cD_{\Diamond}(\tilde{\Lambda}_{i^*}\otimes\tilde{\Theta}_{j^*}, \tilde{\Lambda}^{\prime}_{i^*}\otimes\tilde{\Theta}_{j^*})+\cD_{\Diamond}(\tilde{\Lambda}^{\prime}_{i^*}\otimes\tilde{\Theta}_{j^*}, \tilde{\Lambda}^{\prime}_{i^*}\otimes\tilde{\Theta}^{\prime}_{j^*})\nonumber\\
&=\mid\mid\tilde{\Lambda}_{i^*}\otimes\tilde{\Theta}_{j^*}- \tilde{\Lambda}^{\prime}_{i^*}\otimes\tilde{\Theta}_{j^*}\mid\mid_{\Diamond}+\mid\mid\tilde{\Lambda}^{\prime}_{i^*}\otimes\tilde{\Theta}_{j^*}- \tilde{\Lambda}^{\prime}_{i^*}\otimes\tilde{\Theta}^{\prime}_{j^*}\mid\mid_{\Diamond}\nonumber\\
&=\mid\mid(\tilde{\Lambda}_{i^*}- \tilde{\Lambda}^{\prime}_{i^*})\otimes\tilde{\Theta}_{j^*}\mid\mid_{\Diamond}+\mid\mid\tilde{\Lambda}^{\prime}_{i^*}\otimes(\tilde{\Theta}_{j^*}- \tilde{\Theta}^{\prime}_{j^*})\mid\mid_{\Diamond}\nonumber\\
&=\mid\mid(\tilde{\Lambda}_{i^*}- \tilde{\Lambda}^{\prime}_{i^*})\mid\mid_{\Diamond}.\mid\mid\tilde{\Theta}_{j^*}\mid\mid_{\Diamond}+\mid\mid\tilde{\Lambda}^{\prime}_{i^*}\mid\mid_{\Diamond}.\mid\mid(\tilde{\Theta}_{j^*}- \tilde{\Theta}^{\prime}_{j^*})\mid\mid_{\Diamond}\nonumber\\
&=\mid\mid(\tilde{\Lambda}_{i^*}- \tilde{\Lambda}^{\prime}_{i^*})\mid\mid_{\Diamond}+\mid\mid(\tilde{\Theta}_{j^*}- \tilde{\Theta}^{\prime}_{j^*})\mid\mid_{\Diamond}\nonumber\\
&=\cD_{\Diamond}(\tilde{\Lambda}_{i^*}- \tilde{\Lambda}^{\prime}_{i^*})+\cD_{\Diamond}(\tilde{\Theta}_{j^*}- \tilde{\Theta}^{\prime}_{j^*})\nonumber\\
&\leq\max_i\cD_{\Diamond}(\tilde{\Lambda}_i- \tilde{\Lambda}^{\prime}_i)+\max_j\cD_{\Diamond}(\tilde{\Theta}_j- \tilde{\Theta}^{\prime}_j)\nonumber\\
&=\overline{\cD}(\tilde{\cC}_1, \tilde{\cC}^{\prime}_1)+\overline{\cD}(\tilde{\cC}_2,\tilde{\cC}^{\prime}_2),\label{Eq:subadditiv_ten_chan}
\end{align}

where sixth line, we have used the multiplicative property of diamond norm under tensor product (see Remark \ref{Re:Diam_norm}) and in the seventh line, we have used the fact that the diamond norm of a quantum channel is $1$ (see Remark \ref{Re:Diam_norm}).

Now, suppose $\tilde{\cC}_2=\tilde{\cC}^{\prime}_2$. Then From Eq. \eqref{Eq:subadditiv_ten_chan}, we obtain
\begin{align}
    \overline{\cD}(\tilde{\cC}_1\otimes\tilde{\cC}_2, \tilde{\cC}^{\prime}_1\otimes\tilde{\cC}_2)\leq\overline{\cD}(\tilde{\cC}_1, \tilde{\cC}^{\prime}_1).
\end{align}

Now, let
\begin{align}
\overline{\cD}(\tilde{\cC}_1, \tilde{\cC}^{\prime}_1)=&\cD_{\Diamond}(\tilde{\Lambda}_{i^*},\tilde{\Lambda}^{\prime}_{i^*})\nonumber\\
=&\mid\mid\tilde{\Lambda}_{i^*}-\tilde{\Lambda}^{\prime}_{i^*}\mid\mid_{\Diamond}\nonumber\\
=&\mid\mid(\tilde{\Lambda}_{i^*}-\tilde{\Lambda}^{\prime}_{i^*})\mid\mid_{\Diamond}.\mid\mid\Theta_j\mid\mid_{\Diamond}\nonumber\\
=&\mid\mid(\tilde{\Lambda}_{i^*}-\tilde{\Lambda}^{\prime}_{i^*})\otimes\Theta_j\mid\mid_{\Diamond}\nonumber\\
\leq&\max_{i,j}\mid\mid(\tilde{\Lambda}_{i}\otimes\Theta_j-\tilde{\Lambda}^{\prime}_{i}\otimes\Theta_j)\mid\mid_{\Diamond}\nonumber\\
=&\overline{\cD}(\tilde{\cC}_1\otimes\tilde{\cC}_2, \tilde{\cC}^{\prime}_1\otimes\tilde{\cC}_2).
\end{align}
Hence, 
\begin{align}
    \overline{\cD}(\tilde{\cC}_1\otimes\tilde{\cC}_2, \tilde{\cC}^{\prime}_1\otimes\tilde{\cC}_2)=\overline{\cD}(\tilde{\cC}_1, \tilde{\cC}^{\prime}_1).
\end{align}
\end{enumerate}
\end{proof}

\begin{corollary}
    $\overline{\cD}(\cP_{p,\cQ}(\cC_1),\cP_{p,\cQ}(\cC_2))\leq\overline{\cD}(\cC_1,\cC_2)$. In other words, $\overline{\cD}$ is non-increasing under the probabilistic mixing with a fixed set of quantum channels\label{Coro:mono_set_prob_mix}.
\end{corollary}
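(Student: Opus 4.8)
The plan is to obtain this as an immediate consequence of the joint convexity of $\overline{\cD}$ established in Proposition \ref{Proposi:join_convex_subaddit_ten_product}(1), combined with the faithfulness of $\overline{\cD}$ proved in the preceding Lemma. First I would simply unpack the definitions: by construction $\cP_{p,\cQ}(\cC_1)=p\cC_1+(1-p)\cQ$ and $\cP_{p,\cQ}(\cC_2)=p\cC_2+(1-p)\cQ$, so the two mixtures share the \emph{same} fixed set $\cQ$ as their second component. This common component is exactly what makes the joint-convexity bound collapse.

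Next I would apply the joint convexity inequality with the identification that the two ``first'' sets are $\cC_1$ and $\cC_2$, while both ``second'' sets are taken to be $\cQ$. This gives
\begin{align}
&\overline{\cD}(\cP_{p,\cQ}(\cC_1),\cP_{p,\cQ}(\cC_2))\nonumber\\
&=\overline{\cD}(p\cC_1+(1-p)\cQ,\, p\cC_2+(1-p)\cQ)\nonumber\\
&\leq p\,\overline{\cD}(\cC_1,\cC_2)+(1-p)\,\overline{\cD}(\cQ,\cQ).
\end{align}

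I would then invoke faithfulness (part of the statement that $\overline{\cD}$ is a valid distance), which yields $\overline{\cD}(\cQ,\cQ)=0$, so the second term drops out entirely. Since $\overline{\cD}(\cC_1,\cC_2)\geq 0$ and $0\leq p\leq 1$, the remaining term satisfies $p\,\overline{\cD}(\cC_1,\cC_2)\leq\overline{\cD}(\cC_1,\cC_2)$, which delivers the claimed monotonicity and closes the argument.

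There is essentially no substantive obstacle here, as the corollary is a direct specialization of Proposition \ref{Proposi:join_convex_subaddit_ten_product}(1). The only points requiring care are matching the four sets to their correct roles in the joint-convexity template (so that the fixed set $\cQ$ lands in both ``second'' slots) and remembering to use faithfulness to eliminate the $\overline{\cD}(\cQ,\cQ)$ term rather than leaving it as a spurious additive contribution.
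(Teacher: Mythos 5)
Your proposal is correct and matches the paper's intent: the paper's proof simply states that the corollary is immediate from Proposition \ref{Proposi:join_convex_subaddit_ten_product}, and the route you spell out (joint convexity with $\cQ$ in both second slots, $\overline{\cD}(\cQ,\cQ)=0$, then $p\,\overline{\cD}(\cC_1,\cC_2)\leq\overline{\cD}(\cC_1,\cC_2)$) is exactly the instantiation being invoked. No issues.
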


\begin{proof}
     The result is immediate from Proposition \ref{Proposi:join_convex_subaddit_ten_product}.
     

    
\end{proof}

Now, we show that $\overline{\cD}$ is invariant under the swapping of Hilbert spaces.

\begin{lemma}
  For arbitrary $\cC_{CAB}=\{\Lambda_i\in\mathscr{C}(\cH_C\otimes\cH_A\otimes\cH_B)\}$ and $\cC^{\prime}_{CAB}=\{\Lambda^{\prime}_j\in\mathscr{C}(\cH_C\otimes\cH_A\otimes\cH_B)\}$

\begin{align}
    \overline{\cD}(\cC_{CAB},\cC^{\prime}_{CAB})=\overline{\cD}(\cC_{CBA},\cC^{\prime}_{CBA}),
\end{align}
where $\cC_{CBA}=\{\Theta_i=(\mathbbm{I}_{\cH_C}\otimes\mathbf{SWAP}_{\cK_A\leftrightarrow \cK_B})\circ\Lambda_i\circ(\mathbbm{I}_{\cH_C}\otimes\mathbf{SWAP}_{\cH_B\leftrightarrow \cH_A})\}$ and $\cC_{CBA}=\{\Theta^{\prime}_j=\mathbbm{I}_{\cH_C}\otimes\mathbf{SWAP}_{\cK_A\leftrightarrow \cK_B})\circ\Lambda^{\prime}_j\circ(\mathbbm{I}_{\cH_C}\otimes\mathbf{SWAP}_{\cH_B\leftrightarrow \cH_A})\}$. In other words, $\overline{\cD}$ is invariant under swapping of Hilbert spaces. \label{le:d_bar_inv_swap}
\end{lemma}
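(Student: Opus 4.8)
The plan is to reduce the statement for sets of channels to the single-channel invariance already recorded in Eq. \eqref{Eq:swap_inv_dimon}. First I would recall that, by the definition \eqref{eq:dist_set_chan}, $\overline{\cD}(\cC_{CAB},\cC^{\prime}_{CAB})=\max_i \cD_{\Diamond}(\Lambda_i,\Lambda^{\prime}_i)$, so the set distance is entirely governed by the collection of pairwise diamond distances between correspondingly indexed channels. The key structural observation is that the swap operation defining $\cC_{CBA}$ and $\cC^{\prime}_{CBA}$ acts on each element of the set separately and uniformly: the $i$-th channel $\Theta_i$ of $\cC_{CBA}$ is obtained from $\Lambda_i$ by exactly the same conjugation $(\mathbbm{I}_{\cH_C}\otimes\mathbf{SWAP}_{\cK_A\leftrightarrow \cK_B})\circ(\,\cdot\,)\circ(\mathbbm{I}_{\cH_C}\otimes\mathbf{SWAP}_{\cH_B\leftrightarrow \cH_A})$ that produces $\Theta^{\prime}_i$ from $\Lambda^{\prime}_i$.

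Next I would apply Eq. \eqref{Eq:swap_inv_dimon} to each index $i$ individually. For the two channels $\Lambda_i,\Lambda^{\prime}_i$ acting on $\cH_C\otimes\cH_A\otimes\cH_B$, that equation gives $\cD_{\Diamond}(\Lambda_i,\Lambda^{\prime}_i)=\cD_{\Diamond}(\Theta_i,\Theta^{\prime}_i)$; that is, the individual diamond distance is unchanged by the simultaneous pre- and post-conjugation by the swap channels. Since this holds term by term, the entire family of pairwise distances is identical for the swapped and the unswapped sets.

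Finally I would take the maximum over $i$ on both sides. Because $\cD_{\Diamond}(\Lambda_i,\Lambda^{\prime}_i)=\cD_{\Diamond}(\Theta_i,\Theta^{\prime}_i)$ for every $i$, it follows that $\max_i \cD_{\Diamond}(\Lambda_i,\Lambda^{\prime}_i)=\max_i \cD_{\Diamond}(\Theta_i,\Theta^{\prime}_i)$, which by the definition \eqref{eq:dist_set_chan} is precisely $\overline{\cD}(\cC_{CAB},\cC^{\prime}_{CAB})=\overline{\cD}(\cC_{CBA},\cC^{\prime}_{CBA})$, as claimed.

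I do not anticipate any substantive obstacle here: the result is an essentially immediate lift of the single-channel identity \eqref{Eq:swap_inv_dimon} to sets through the $\max$. The only points deserving care are bookkeeping. I would first confirm that $\cC_{CAB}$ and $\cC^{\prime}_{CAB}$ are indexed over a common index set so that the term-by-term comparison is meaningful; the paper's convention that elements of a set are rigidly marked by their index guarantees this (so the stray use of $j$ in the lemma's statement of $\cC^{\prime}_{CBA}$ should be read as the same index $i$). I would also note that the uniform swap preserves the index-to-index correspondence, so the maximizing index may be taken to be the same on both sides and the $\max$ genuinely commutes with the element-wise equality.
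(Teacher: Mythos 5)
Your proposal is correct and takes essentially the same approach as the paper: both reduce the claim to the single-channel swap invariance of Eq.~\eqref{Eq:swap_inv_dimon} applied index-wise and then pass to the maximum. The only cosmetic difference is that you conclude directly from the term-by-term equality $\cD_{\Diamond}(\Lambda_i,\Lambda^{\prime}_i)=\cD_{\Diamond}(\Theta_i,\Theta^{\prime}_i)$ that the two maxima coincide, whereas the paper establishes the two inequalities separately by evaluating at the maximizing index on each side; your observation about the shared index set (and the typo $j$ versus $i$) matches the paper's stated convention.
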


\begin{proof}
    Let $\overline{\cD}(\cC_{CAB},\cC^{\prime}_{CAB})=\cD_{\Diamond}(\Lambda_{i^*},\Lambda^{\prime}_{i^*})$. Then
    \begin{align}
        \overline{\cD}(\cC_{CAB},\cC^{\prime}_{CAB})=&\cD_{\Diamond}(\Lambda_{i^*},\Lambda^{\prime}_{i^*})\nonumber\\
        =&\cD_{\Diamond}(\Theta_{i^*},\Theta^{\prime}_{i^*})\nonumber\\
        \leq&\max_i\cD_{\Diamond}(\Theta_{i},\Theta^{\prime}_{i})\nonumber\\
        =&\overline{\cD}(\cC_{CBA},\cC^{\prime}_{CBA}),
    \end{align}
    where we have used Eq. \eqref{Eq:swap_inv_dimon} in the second inequality.
    Similarly,
    \begin{align}
        \overline{\cD}(\cC_{CBA},\cC^{\prime}_{CBA})=&\cD_{\Diamond}(\Theta_{i^*},\Theta^{\prime}_{i^*})\nonumber\\
        =&\cD_{\Diamond}(\Lambda_{i^*},\Lambda^{\prime}_{i^*})\nonumber\\
        \leq&\max_i\cD_{\Diamond}(\Lambda_{i},\Lambda^{\prime}_{i})\nonumber\\
        =&\overline{\cD}(\cC_{CAB},\cC^{\prime}_{CAB}).
    \end{align}
    Hence, the lemma is proved.

\end{proof}




Now, we prove a crucial property of $\overline{\cD}$ that will be implicitly used in later parts of our work.

\begin{theorem}\label{theor:mono_set_tran}
    $\overline{\cD}(\overline{\cV}(\cC_1),\overline{\cV}(\cC_2))\leq\overline{\cD}(\cC_1,\cC_2)$ for any $\overline{\cV}$ of the form given in Eq. \eqref{eq:far_gen_tra}. In other words, $\overline{\cD}$ is contractive under the transformations of the form given in Eq. \eqref{eq:far_gen_tra}.
\end{theorem}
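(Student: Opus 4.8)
The plan is to exploit two features of the setup: that $\overline{\cD}$ is a maximum of diamond distances, and that the transformation in Eq.~\eqref{eq:far_gen_tra} factors into a superchannel applied to the controlled implementation $\Sigma_{\cC}$. First I would fix an index $j^*$ at which the maximum defining $\overline{\cD}(\overline{\cV}(\cC_1),\overline{\cV}(\cC_2))$ is attained, so that by Eq.~\eqref{eq:dist_set_chan} we have $\overline{\cD}(\overline{\cV}(\cC_1),\overline{\cV}(\cC_2))=\cD_{\Diamond}\bigl([\overline{\cV}(\cC_1)]_{j^*},[\overline{\cV}(\cC_2)]_{j^*}\bigr)$. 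It then suffices to bound this single diamond distance by $\overline{\cD}(\cC_1,\cC_2)$.

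The first reduction peels off the pre- and post-processing. For the fixed $j^*$, the assignment $\Sigma\mapsto \overline{\Theta}_{post}^{j^*}\circ(\Sigma\otimes\mathbbm{I}_{\overline{\cR}})\circ\overline{\Theta}_{pre}^{j^*}$ is precisely a superchannel in the sense of Eq.~\eqref{Eq:sup_chan}, acting on channels with input space $\cH\otimes\cH_I$ and reference $\overline{\cR}$, and by construction it sends $\Sigma_{\cC_k}$ to $[\overline{\cV}(\cC_k)]_{j^*}$ for $k=1,2$. Hence the monotonicity of the diamond distance under superchannels, Eq.~\eqref{Eq:mono_dec_dimon_sup_chan}, gives $\cD_{\Diamond}\bigl([\overline{\cV}(\cC_1)]_{j^*},[\overline{\cV}(\cC_2)]_{j^*}\bigr)\le\cD_{\Diamond}(\Sigma_{\cC_1},\Sigma_{\cC_2})$, where $\Sigma_{\cC_1}=\sum_{\overline{i}}\Lambda_{\overline{i}}\otimes\overline{\Phi}_{\overline{i}}$ and $\Sigma_{\cC_2}=\sum_{\overline{i}}\Theta_{\overline{i}}\otimes\overline{\Phi}_{\overline{i}}$. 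The whole problem is thereby reduced to the single inequality $\cD_{\Diamond}(\Sigma_{\cC_1},\Sigma_{\cC_2})\le\max_{\overline{i}}\cD_{\Diamond}(\Lambda_{\overline{i}},\Theta_{\overline{i}})=\overline{\cD}(\cC_1,\cC_2)$.

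The core step is this last inequality, and it is where the dephasing structure of the control maps $\overline{\Phi}_{\overline{i}}$ does the work. The difference map is $\Sigma_{\cC_1}-\Sigma_{\cC_2}=\sum_{\overline{i}}(\Lambda_{\overline{i}}-\Theta_{\overline{i}})\otimes\overline{\Phi}_{\overline{i}}$, and since each $\overline{\Phi}_{\overline{i}}$ outputs the orthogonal pure state $\ket{\overline{i}}\bra{\overline{i}}$ on the control-output register while annihilating every block $\ket{\overline{a}}\bra{\overline{b}}$ other than $\overline{a}=\overline{b}=\overline{i}$, the output on an arbitrary input state $\omega$ on $\cH\otimes\cH_I\otimes\cH_R$ is block-diagonal in that register. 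Writing $\omega_{\overline{i}}=\bra{\overline{i}}\omega\ket{\overline{i}}\ge0$ for the $\overline{i}$-th diagonal block (an operator on $\cH\otimes\cH_R$) and $p_{\overline{i}}=\tr\omega_{\overline{i}}$, one computes $\bigl[(\Sigma_{\cC_1}-\Sigma_{\cC_2})\otimes\mathbbm{I}_{\cH_R}\bigr](\omega)=\sum_{\overline{i}}\bigl[(\Lambda_{\overline{i}}-\Theta_{\overline{i}})\otimes\mathbbm{I}_{\cH_R}\bigr](\omega_{\overline{i}})\otimes\ket{\overline{i}}\bra{\overline{i}}$. Because the summands are supported on mutually orthogonal sectors, the trace norm is additive over $\overline{i}$; normalizing each $\omega_{\overline{i}}$ and bounding the per-block trace norm by the corresponding diamond distance (using Remark~\ref{Re:Diam_norm}, that the diamond distance may be evaluated with any reference of dimension at least $\dim(\cH)$, here $\cH_R$ with $\dim(\cH_R)=\dim(\cH\otimes\cH_I)$) yields $\bigl\|[(\Sigma_{\cC_1}-\Sigma_{\cC_2})\otimes\mathbbm{I}_{\cH_R}](\omega)\bigr\|_1\le\sum_{\overline{i}}p_{\overline{i}}\,\cD_{\Diamond}(\Lambda_{\overline{i}},\Theta_{\overline{i}})$. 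Since $\tr\omega=1$ forces $\{p_{\overline{i}}\}$ to be a probability distribution, this convex combination is at most $\max_{\overline{i}}\cD_{\Diamond}(\Lambda_{\overline{i}},\Theta_{\overline{i}})$, and maximizing over $\omega$ gives the claimed bound.

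The main obstacle is exactly this reduction $\cD_{\Diamond}(\Sigma_{\cC_1},\Sigma_{\cC_2})\le\max_{\overline{i}}\cD_{\Diamond}(\Lambda_{\overline{i}},\Theta_{\overline{i}})$: one must check that the orthogonality of the control-output sectors genuinely splits the trace norm into a sum, and that the induced weights $p_{\overline{i}}$ form a probability distribution, so that the sum is controlled by the \emph{maximum} rather than merely by the total of the individual distances. The superchannel step is then routine given Eq.~\eqref{Eq:mono_dec_dimon_sup_chan}, and the reduction to a single index is immediate from the definition~\eqref{eq:dist_set_chan} of $\overline{\cD}$.
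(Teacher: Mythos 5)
Your proposal is correct and follows essentially the same route as the paper's proof: reduce to the optimal index $j^*$, strip off the pre- and post-processing via monotonicity of the diamond norm, and then exploit the block-diagonal (orthogonal-sector) structure induced by the dephasing maps $\overline{\Phi}_{\overline{i}}$ to split the trace norm into a convex combination $\sum_{\overline{i}}p_{\overline{i}}\,\cD_{\Diamond}(\Lambda_{\overline{i}},\Theta_{\overline{i}})$ bounded by the maximum. The only cosmetic difference is that you invoke superchannel monotonicity, Eq.~\eqref{Eq:mono_dec_dimon_sup_chan}, where the paper carries out the equivalent norm manipulations explicitly.
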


\begin{proof}
    Suppose, 
    \begin{align}
        \overline{\cD}(\overline{\cV}(\cC_1),\overline{\cV}(\cC_2))=\cD_{\Diamond}([\overline{\cV}(\cC_1)]_{j^*},[\overline{\cV}(\cC_2)]_{j^*}).
    \end{align}

    Then
    \begin{align}
        &\overline{\cD}(\overline{\cV}(\cC_1),\overline{\cV}(\cC_2)\nonumber\\
        =&\cD_{\Diamond}([\overline{\cV}(\cC_1)]_{j^*},[\overline{\cV}(\cC_2)]_{j^*})\nonumber\\
        =&\mid\mid[\overline{\cV}(\cC_1)]_{j^*}-[\overline{\cV}(\cC_2)]_{j^*}\mid\mid_{\Diamond}\nonumber\\
        =&\mid\mid\overline{\Theta}_{post}^{j^*}\circ(\Sigma_{\cC_1}\otimes \mathbbm{I_{\overline{\cR}}})\circ\overline{\Theta}_{pre}^{j^*}-\overline{\Theta}_{post}^{j^*}\circ(\Sigma_{\cC_2}\otimes \mathbbm{I_{\overline{\cR}}})\circ\overline{\Theta}_{pre}^{j^*}\mid\mid_{\Diamond}\nonumber\\
        =&\mid\mid\overline{\Theta}_{post}^{j^*}\circ(\Sigma_{\cC_1}\otimes \mathbbm{I_{\overline{\cR}}}-\Sigma_{\cC_2}\otimes \mathbbm{I_{\overline{\cR}}})\circ\overline{\Theta}_{pre}^{j^*}\mid\mid_{\Diamond}\nonumber\\
        \leq&\mid\mid(\Sigma_{\cC_1}\otimes \mathbbm{I_{\overline{\cR}}}-\Sigma_{\cC_2}\otimes \mathbbm{I_{\overline{\cR}}})\mid\mid_{\Diamond}\nonumber\\
        =&\mid\mid(\Sigma_{\cC_1}-\Sigma_{\cC_2})\otimes \mathbbm{I_{\overline{\cR}}}\mid\mid_{\Diamond}\nonumber\\
        =&\mid\mid\Sigma_{\cC_1}-\Sigma_{\cC_2}\mid\mid_{\Diamond}\nonumber\\
        =&\max_{\overline{\rho}_{\cH\cH_I\cR}\in\cS(\cH\otimes\cH_I\otimes\cR)}\mid\mid(\Sigma_{\cC_1}-\Sigma_{\cC_2})\otimes\mathbbm{I_{\cR}}((\overline{\rho}_{\cH\cH_I\cR})\mid\mid_1 \label{Eq:mono_opt_diam}
    \end{align}
Clearly, from Eq. \eqref{Eq:Dia_formula_H_p}, we have $\dim(\cR)=\dim(\cH\otimes\cH_I)$.
    
Now, suppose maximum in Eq. \eqref{Eq:mono_opt_diam} occurs for $\overline{\rho}_{\cH\cH_I\cR}=\rho_{\cH\cH_I\cR}$. Then
    \begin{align}
        \overline{\cD}(\overline{\cV}(\cC_1),\overline{\cV}(\cC_2))\leq&\mid\mid(\Sigma_{\cC_1}-\Sigma_{\cC_2})\otimes\mathbbm{I}_{\cR}(\rho_{\cH\cH_I\cR})\mid\mid_1\nonumber\\
        =&\mid\mid\sum_i(\Lambda_i-\Theta_i)\otimes\mathbbm{I}_{\cR}(\bra{i}\rho_{\cH\cH_I\cR}\ket{i}\ket{i}\bra{i})\mid\mid_1\nonumber\\
        =&\sum_i\mid\mid(\Lambda_i-\Theta_i)\otimes\mathbbm{I}_{\cR}(\bra{i}\rho_{\cH\cH_I\cR}\ket{i})\mid\mid_1\nonumber\\
        =&\sum_i\mid\mid p_i(\Lambda_i-\Theta_i)\otimes\mathbbm{I}_{\cR}(\sigma_{\cH\cR})\mid\mid_1\nonumber\\
        =&\sum_ip_i\mid\mid(\Lambda_i-\Theta_i)\otimes\mathbbm{I}_{\cR}(\sigma_{\cH\cR})\mid\mid_1\nonumber\\
        \leq&\max_i\max_{\sigma^{\prime}_{\cH\cR}}\mid\mid(\Lambda_i-\Theta_i)\otimes\mathbbm{I}_{\cR}(\sigma^{\prime}_{\cH\cR})\mid\mid_1\nonumber\\
        =&\max_i\mid\mid(\Lambda_i-\Theta_i)\mid\mid_{\Diamond}\nonumber\\
        =&\max_i\cD_{\Diamond}(\Lambda_i,\Theta_i)\nonumber\\
        =&\overline{\cD}(\cC_1,\cC_2),
    \end{align}
    where in the fourth line $p_i=\tr[\bra{i}\rho_{\cH\cH_I\cR}\ket{i}]$, $\sigma_{\cH\cR}=\frac{\bra{i}\rho_{\cH\cH_I\cR}\ket{i}}{\tr[\bra{i}\rho_{\cH\cH_I\cR}\ket{i}]}\in\cS(\cH\otimes\cR)$ and in the seventh line, we have used the fact $\dim(\cR)=\dim(\cH\otimes\cH_I)=\dim(\cH).\dim(\cH_I)\geq\dim(\cH)$ and Remark \ref{Re:Diam_norm}.
\end{proof}

Now, we define the distance between two sets of measurements $\cM=\{M_i\}$ and $\cN=\{N_i\}$ as

\begin{align}
    \widetilde{\cD}(\cM,\cN):=&\overline{\cD}(\cG_{\cM},\cG_{\cN})\nonumber\\
    =&\max_{i\in\{1,\ldots,n\}}\cD_{\Diamond}(\Gamma_{M_i},\Gamma_{N_i})\label{Eq:def_dist_meas_of_set_meas},
\end{align}
where $\cG_{\cM}:=\{\Gamma_{M_i}\}$ and $\cG_{\cN}:=\{\Gamma_{N_i}\}$.




\begin{remark}
    \rm{It should be mentioned here that in Ref. \cite{Tendick_dist_res_meas}, authors studied the distance measure for a set of measurements and they defined the distance between two sets of measurements $\cM=\{M_i\}$ and $\cN=\{N_i\}$ as $\mathbf{D}_{\Diamond}(\cM,\cN)=\sum_ip_i\cD_{\Diamond}(\Gamma_{M_i},\Gamma_{N_i})$, where $\{p_i\}$ is an arbitrarily chosen probability distribution. Furthermore, the monotonicity of the resource measure (constructed from their distance measure) under the free transformation of the resource theory of incompatibility of measurements (given in Ref. \cite{Buscemi_meas_incomp}) has not been shown yet.} Instead of using that measure, in this work, we use the distance measure $\widetilde{\cD}$ which is independent of any such probability distribution and is a function of only the sets $\cM$ and $\cN$. Furthermore, neither $\mathbf{D}_{\Diamond}$ in Ref. \cite{Tendick_dist_res_meas} nor $\widetilde{\cD}$ is independent of permutation of the measurements in the sets. However, that does not affect any of our results and the resource measure created from $\mathbbm{D}$ (which is more general than $\widetilde{\cD}$) is independent of the permutation of the measurements as observed from Eq. \eqref{Eq:def_dist_res_meas}.
\end{remark}

Now, we can immediately write the following corollary from Proposition \ref{Proposi:join_convex_subaddit_ten_product}, Lemma \ref{le:d_bar_inv_swap}, and Corollary \ref{Coro:mono_set_prob_mix}.

\begin{corollary} 
The distance $\widetilde{\cD}$ have following Properties-
    \begin{enumerate}
        \item $\widetilde{\cD}$ satisfies joint convexity property.
        \item $\widetilde{\cD}$ is jointly subadditive under tensor product and is invariant under the tensor product with same set of quantum measurements.
        \item $\widetilde{\cD}$ is invariant under swapping of Hilbert spaces. 
        \item $\widetilde{\cD}$ is non-increasing under the probabilistic mixing with a fixed set of quantum channels.\label{Coro:Properties_of_d_tilde}
    \end{enumerate}
\end{corollary}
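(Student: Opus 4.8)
The plan is to transport each of the four properties from $\overline{\cD}$ to $\widetilde{\cD}$ through the defining relation $\widetilde{\cD}(\cM,\cN)=\overline{\cD}(\cG_{\cM},\cG_{\cN})$ of Eq. \eqref{Eq:def_dist_meas_of_set_meas}. The whole argument rests on the observation that the assignment $M\mapsto\Gamma_M$ is compatible with each of the relevant operations on measurements, so that convex combinations, tensor products, swaps, and probabilistic mixings of the sets $\cM$ are mirrored by the corresponding operations on the sets $\cG_{\cM}$ of measure-prepare channels. Once this compatibility is established, each item follows by pulling back the matching statement already proved for $\overline{\cD}$.

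Concretely, I would first record that $M\mapsto\Gamma_M$ is linear in the POVM elements, since $\Gamma_M(\rho)=\sum_a\tr[\rho M(a)]\ket{a}\bra{a}$ depends linearly on the $M(a)$. Hence $\cG_{p\cM+(1-p)\cN}=p\cG_{\cM}+(1-p)\cG_{\cN}$ componentwise. Substituting this into Eq. \eqref{Eq:def_dist_meas_of_set_meas} and invoking item~1 of Proposition \ref{Proposi:join_convex_subaddit_ten_product} yields joint convexity of $\widetilde{\cD}$ at once. The same linearity shows that mixing the associated channel sets $\cG_{\cM}$ with a fixed set $\cQ$ commutes with the $\cG$-assignment (and that $\cG_{\cP_{p,\cN_0}[\cM]}=\cP_{p,\cG_{\cN_0}}[\cG_{\cM}]$ when the fixed object is itself a set of measurements $\cN_0$), so item~4 is exactly Corollary \ref{Coro:mono_set_prob_mix} read through the definition.

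For the tensor-product statement (item~2) I would use the already-noted identity $\Gamma_{M\otimes N}=\Gamma_M\otimes\Gamma_N$, which upgrades on sets to $\cG_{\tilde{\cM}_1\otimes\tilde{\cM}_2}=\cG_{\tilde{\cM}_1}\otimes\cG_{\tilde{\cM}_2}$; subadditivity, together with the equality when the second factor is left unchanged, then follows directly from item~2 of Proposition \ref{Proposi:join_convex_subaddit_ten_product}. For swap invariance (item~3), the key ingredient is Eq. \eqref{Eq:gamma_ten_swap}, which expresses the measure-prepare channel of a swapped measurement as the original measure-prepare channel conjugated by $\mathbf{SWAP}$ channels acting on both the input Hilbert spaces and the outcome Hilbert spaces $\cH_{\Omega}$. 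This places the channel set $\cG$ of the swapped measurements into precisely the conjugated form appearing in Lemma \ref{le:d_bar_inv_swap}, whose conclusion then delivers item~3.

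The computation is routine once the intertwining relations are in place; the only point demanding care---and the main obstacle---is the swap case, where one must track that the Heisenberg-picture swap acts on the \emph{input} spaces of the measurement while the induced swap on $\Gamma_M$ also permutes the \emph{output} (outcome) spaces $\cH_{\Omega}$. Matching these two simultaneous swaps, via Eq. \eqref{Eq:gamma_ten_swap}, to the hypotheses of Lemma \ref{le:d_bar_inv_swap} (which swaps input spaces $\cH$ and output spaces $\cK$ together) is the step to verify carefully; every other item is a direct pullback of Proposition \ref{Proposi:join_convex_subaddit_ten_product} and Corollary \ref{Coro:mono_set_prob_mix}.
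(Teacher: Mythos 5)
Your proposal is correct and follows exactly the route the paper takes: the paper states the corollary as an immediate consequence of Proposition \ref{Proposi:join_convex_subaddit_ten_product}, Lemma \ref{le:d_bar_inv_swap}, and Corollary \ref{Coro:mono_set_prob_mix}, pulled back through the definition $\widetilde{\cD}(\cM,\cN)=\overline{\cD}(\cG_{\cM},\cG_{\cN})$, which is precisely your argument. You in fact supply more detail than the paper does, correctly identifying the intertwining relations (linearity of $M\mapsto\Gamma_M$, the identity $\Gamma_{M\otimes N}=\Gamma_M\otimes\Gamma_N$, and Eq. \eqref{Eq:gamma_ten_swap} for the swap case) that make the transport of each property go through.
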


Now, consider two sets of measurements $\cM=\{M_{i}\in\mathscr{M}(\cH)\}$ and $\cN=\{N_{i}\in\mathscr{M}(\cH)\}$ where $\Omega_{M_i}=\Omega_{N_i}=X_i\times Y_i~\forall i$ where $X_i$s and $Y_i$s are some sets of natural numbers. Then define the action of the map $\mathbbm{Merg}_{\cY}$ on $\cM$ as

\begin{align}
    \mathbbm{Merg}_{\cY}(\cM)=\{[\mathbbm{Merg}_{\cY}(\cM)]_i:=\mathbf{Merg}_{Y_i}(M_i)\},
\end{align}
where $\cY=\{Y_i\}$. Clearly, $\mathbbm{Merg}_{\cY}$ is the merginalization process for a set of measurements.

\begin{lemma}
    $\widetilde{\cD}$ is non-increasing under merginalization. In other words, for two sets of measurements $\cM$ and $\cN$
    \begin{align}
        \tilde{\cD}(\mathbbm{Merg}_{\cY}(\cM),\mathbbm{Merg}_{\cY}(\cN))\leq \tilde{\cD}(\cM,\cN).
    \end{align}
       \label{Le:dist_non_increase_merg}
\end{lemma}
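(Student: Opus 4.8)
The plan is to reduce the claim for sets of measurements to the single-measurement contraction already established in Eq. \eqref{Eq:diam_non_inc_merg}, and then to close the argument with monotonicity of the maximum. First I would unfold the definition \eqref{Eq:def_dist_meas_of_set_meas} of $\widetilde{\cD}$ together with the componentwise action $[\mathbbm{Merg}_{\cY}(\cM)]_i=\mathbf{Merg}_{Y_i}(M_i)$ of the marginalization map on a set. Since $\widetilde{\cD}$ is by definition a maximum of diamond distances over the labelled constituents, this gives
\begin{align}
\widetilde{\cD}(\mathbbm{Merg}_{\cY}(\cM),\mathbbm{Merg}_{\cY}(\cN))
&=\max_{i}\cD_{\Diamond}\bigl(\Gamma_{\mathbf{Merg}_{Y_i}(M_i)},\Gamma_{\mathbf{Merg}_{Y_i}(N_i)}\bigr)\nonumber\\
&=\max_{i}\cD_{\Diamond}\bigl(\mathbf{Merg}_{Y_i}(M_i),\mathbf{Merg}_{Y_i}(N_i)\bigr).
\end{align}

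Next I would apply the single-measurement result Eq. \eqref{Eq:diam_non_inc_merg} separately to each index $i$, which yields the term-by-term bound $\cD_{\Diamond}\bigl(\mathbf{Merg}_{Y_i}(M_i),\mathbf{Merg}_{Y_i}(N_i)\bigr)\leq\cD_{\Diamond}(M_i,N_i)$ for every $i$. Finally, since a family of inequalities $a_i\leq b_i$ holding for all $i$ implies $\max_i a_i\leq\max_i b_i$, I would conclude
\begin{align}
\widetilde{\cD}(\mathbbm{Merg}_{\cY}(\cM),\mathbbm{Merg}_{\cY}(\cN))
&\leq\max_{i}\cD_{\Diamond}(\Gamma_{M_i},\Gamma_{N_i})\nonumber\\
&=\widetilde{\cD}(\cM,\cN),
\end{align}
which is the desired statement.

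There is essentially no genuine obstacle here: the analytic content was already absorbed into Eq. \eqref{Eq:diam_non_inc_merg}, which packages the data-processing inequality for the diamond norm under the trace-out channel $\mathbbm{I}_{\cH_X}\otimes\tr_{\cH_Y}$. The only point requiring a moment of care is that the index attaining the outer maximum on the left-hand side need not be the same as the one attaining it on the right-hand side; this is harmless precisely because the argument proceeds through a uniform term-by-term domination rather than through matching maximizers, so the monotonicity of $\max$ applies directly. I would also note that the hypothesis $\Omega_{M_i}=\Omega_{N_i}=X_i\times Y_i$ ensures $\mathbf{Merg}_{Y_i}$ is well defined on both $M_i$ and $N_i$ with matching outcome structure, so that each diamond distance in the chain is between measurements (equivalently, measure-prepare channels) sharing a common input and output space.
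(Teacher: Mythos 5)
Your proposal is correct and follows essentially the same route as the paper: both reduce the claim to the single-measurement contraction in Eq.~\eqref{Eq:diam_non_inc_merg} and then handle the outer maximum (the paper by selecting the maximizing index $i^*$ on the left-hand side and bounding that single term, you by term-by-term domination plus monotonicity of $\max$ — logically the same step).
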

\begin{proof}
    Let  
    \begin{align}
        &\widetilde{\cD}(\mathbbm{Merg}_{\cY}(\cM),\mathbbm{Merg}_{\cY}(\cN))\nonumber\\
        =&\cD_{\Diamond}(\Gamma_{\mathbf{Merg}_{Y_{i^*}}(M_{i^*})},\Gamma_{\mathbf{Merg}_{Y_{i^*}}(N_{i^*})}).
    \end{align} 
    Then
    \begin{align}
        \widetilde{\cD}(\mathbbm{Merg}_{\cY}(\cM),\mathbbm{Merg}_{\cY}(\cN))=&\cD_{\Diamond}(\Gamma_{\mathbf{Merg}_{Y_{i^*}}(M_{i^*})},\Gamma_{\mathbf{Merg}_{Y_{i^*}}(N_{i^*})})\nonumber\\
        \leq&\cD_{\Diamond}(\Gamma_{M_{i^*}},\Gamma_{N_{i^*}})\nonumber\\
        \leq&\max_i\cD_{\Diamond}(\Gamma_{M_i},\Gamma_{N_i})\nonumber\\
        =&\widetilde{\cD}(\cM,\cN),
    \end{align}
    where in the second line, we have used Eq. \eqref{Eq:diam_non_inc_merg}.
\end{proof}

At this point, we would like to mention the following observations

\begin{itemize}
    \item Consider a set of measurements $\cM=\{M_i\in\mathscr{M}(\cH)\}$. An arbitrary free transformation $\cW_{incomp}$ of the resource theory of measurement incompatibility acting on $\cM$  can be written as
    \begin{align}
        \cW_{incomp}[\cM]_j(y)=&\sum_{\mu}p(\mu)\sum_{i,x,l}r(y|i,x,\mu,j,l) \nonumber\\
        &q(i|\mu,j,l)\tilde{\Phi}^{\dagger}_{l|\mu}[M_i(x)],\label{Eq:meas_incomp_trans}
    \end{align}
    where $\cI_{\mu}=\{\tilde{\Phi}_{l|\mu}:\cL(\cK)\rightarrow\cL(\cH)\}$ are the quantum instruments for all $\mu$, with $p(\mu|j)$, $r(y|i,x,\mu,j,l)$, and $q(i|\mu,j,l)$ being the conditional probabilities \cite{Buscemi_meas_incomp}. Clearly, $\cW_{incomp}[\cM]_j$ is the $j$-th measurement of the transformed set $\cW_{incomp}[\cM]\subset \mathscr{M}(\cK)$. Now, consider the quantum instrument $\cJ=\{\Phi_{\lambda}=p(\mu)\tilde{\Phi}^{\dagger}_{l|\mu}\}$ where $\lambda=(\mu,l)$ Then the eq. \eqref{Eq:meas_incomp_trans} can be simplified as

    \begin{align}
        \cW_{incomp}[\cM]_j(y)=&\sum_{\lambda,i,x}r(y|i,x,\lambda,j) \nonumber\\
        &q(i|\lambda,j)\Phi^{\dagger}_{\lambda}[M_i(x)].\label{Eq:meas_incomp_trans_simp}
    \end{align}
    
    Now, one can show that
    \begin{align}
        \Gamma_{\cW_{incomp}[\cM]_j}=\Theta_j\circ(\Sigma_{\cM}\otimes\mathbbm{I}_{\cH_{\Lambda}})\circ\Psi_j\circ\cE,\label{Eq:meas_prep_channel_incomp_trans}
    \end{align}
    where $\cE(\rho)\in\mathscr{C}(\cK,\cH\otimes\cH_{\Lambda})$ such that $\cE(\rho)=\sum_{\lambda}\Phi_{\lambda}(\rho)\otimes\ket{\lambda}\bra{\lambda}$ for all $\rho\in\cL(\cK)$ where $\{\ket{\lambda}\}$ forms an orthonormal basis of Hilbert space $\cH_{\Lambda}$, for all $j$, $\Psi_j\in\mathscr{C}(\cH\otimes\cH_{\Lambda},\cH\otimes\cH_I\otimes\cH_{\Lambda})$ such that $\Psi_j(\sigma)=\sum_{i,\lambda}q(i\mid \lambda,j)\ket{i,\lambda}\bra{\lambda}\sigma\ket{\lambda}\bra{i,\lambda}$ for all $\rho\in\cL(\cH\otimes\cH_{\Lambda})$ where $\{\ket{i}\}$ forms an orthonormal basis of Hilbert space $\cH_{I}$, and for all $j$, $\Theta_j\in\mathscr{C}(\cH_A\otimes\cH_I\otimes\cH_{\Lambda},\cH_{B})$ such that $\Theta_j(\omega)=\sum_{y,x,i,\lambda}r(y|i,x,\lambda,j)\ket{y}\bra{x,i,\lambda}\omega\ket{x,i,\lambda}\bra{y}$ for all $\omega\in\cL(\cH_A\otimes\cH_I\otimes\cH_{\Lambda})$ where $\{\ket{y}\}$ forms an orthonormal basis of Hilbert space $\cH_{B}$.

    Clearly, Eq. \eqref{Eq:meas_prep_channel_incomp_trans} is a special case of Eq. \eqref{eq:far_gen_tra}. Therefore, from Eq. \eqref{Eq:def_dist_meas_of_set_meas} and Theorem \ref{theor:mono_set_tran}, we obtain

    \begin{align}
        \widetilde{\cD}(\cW_{incomp}[M_1],\cW_{incomp}[M_2])
        \leq&\widetilde{\cD}(M_1,M_2).
    \end{align}

    \item Consider a measurement $M=\{M(x)\}\in\mathscr{M}(\cH)$. An arbitrary free transformation $\cW_{sharp}$ of the resource theory of measurement sharpness acting on $M$ can be written as
    \begin{align}
        \cW_{sharp}[M](x)=\mu\Lambda^{\dagger}(M(x))+(1-\mu)p(x)\Id_{\cK},
    \end{align}
    where $p=\{p(x)\}$ is a probability distribution \cite{Buscemi_meas_sharp}. 

    Now, let $N:=\{ N(x)=p(x)\Id_{\cK}\}$. Then 
    \begin{align}
        \Gamma_{\cW_{sharp}[M]}=\cP_{\mu,N}[\Gamma_{M}\circ\Lambda].
    \end{align}
    As pre-processing $\Lambda$ is a special case of eq. \eqref{eq:far_gen_tra}, for any two measurements $M_1$ and $M_2$, from Eq. \eqref{Eq:def_dist_meas_of_set_meas}, Theorem \ref{theor:mono_set_tran} and Corollary \ref{Coro:mono_set_prob_mix} we obtain 
    \begin{align}
        \widetilde{\cD}(\cW_{sharp}[M_1],\cW_{sharp}[M_2])=&\overline{\cD}(\{\Gamma_{\cW_{sharp}[M_1]}\},\{\Gamma_{\cW_{sharp}[M_2]}\})\nonumber\\
        \leq&\overline{\cD}(\{\Gamma_{M_1}\},\{\Gamma_{M_2}\})\nonumber\\
        =&\widetilde{\cD}(M_1,M_2).
    \end{align}

    \item An arbitrary free transformation $\cW_{coh}$ of the resource theory of measurement coherence acting on $M$ can be written as

    \begin{align}
        \cW_{coh}[M](x)=&\Lambda_{coh}^{\dagger}(M(x))\nonumber\\
        =&\sum_{j}K^{\dagger}_jM(x)K_j,
    \end{align}
    where $K_j=\sum_k c_{k,i}\ket{\pi_k(i)}\bra{i}$, $\pi_k$ is the permutation, $\{\ket{i}\}$ is the orthonormal basis, and $\sum_jK^{\dagger}_jK_j=\Id_{\cH}$  \cite{baek_meas_coh}. Clearly, $\Gamma_{\cW_{coh}[M]}=\Gamma_{M}\circ\Lambda_{coh}$. The pre-processing $\Lambda_{coh}$ is a special case of eq. \eqref{eq:far_gen_tra}. Therefore, from Theorem \ref{theor:mono_set_tran}, and Eq. \eqref{Eq:def_dist_meas_of_set_meas}, we obtain 

    \begin{align}
        \widetilde{\cD}(\cW_{sharp}[M_1],\cW_{coh}[M_2])
        \leq&\widetilde{\cD}(M_1,M_2).
    \end{align}
\end{itemize}

Therefore, $\widetilde{\cD}$ is monotononically non-increasing under free transformation of (1) the resource theory of measurement incompatibility, (2) the resource theory of measurement sharpness, and (3) the resource theory of measurement coherence.

From next the subsection, we assume a generic distance $\mathbbm{D}$ that satisfies following properties

\begin{enumerate}
    \item[D1.] $\mathbbm{D}$ satisfies joint convexity property.
    \item[D2.] $\mathbbm{D}$ is jointly subadditive under tensor product and is invariant under the tensor product with same set of quantum measurements.
    \item[D3.] $\mathbbm{D}$ is non-increasing (at least) under the free transformation of a given resource theory.
    \item[D4.] $\mathbbm{D}$ is invariant under swapping of Hilbert spaces.
    \item[D5.] $\mathbbm{D}$ is non-increasing under merginalization.
\end{enumerate}
From this section, we already know that there exist the distance $\widetilde{\cD}$ that satisfies Properties [D1], [D2] and [D4] (from Corollary \ref{Coro:Properties_of_d_tilde}), Property [D5] (from Lemma \ref{Le:dist_non_increase_merg}) for an arbitrary resource theory of measurements, [D3] (from above-said observations) at least for the resource theory of measurement incompatibility, measurement sharpness, and measurement coherence . From next subsection, the use of a generic distance $\mathbbm{D}$ will help us to keep our analysis more general.

\subsection{Distance-based measures for measurement-based resources}
\label{Subsec:dist_based_res_meas}
Now, as an example, we show that from the distance measure $\mathbbm{D}$, one can create a resource measure for a given resource theory of measurements. We also derive some properties of such a measure.

Consider an arbitrary set of measurements $\cM=\{M_i\in\mathscr{M}(\cH_A)\}$. Then the enlarged version of $\cM$ is defined as
$\hat{\cM}_{\cH_A\otimes\cH_B}=\{(\hat{M}_i)_{\cH_A\otimes\cH_B}\}$. Clearly, $\hat{\cM}_{\cH_A\otimes\cH_B}=\cM\otimes\cT_{\cH_B}$. Here, $\cT_{\cH_B}$ is considered as a set with one element.

Now, we define the resource measure

\begin{align}
    \overline{\mathbbm{R}}(\cM)=\inf_{\cH_B}\min_{\cN_{\cH_A\otimes\cH_B}\in\cF_{\cH_A\otimes\cH_B}}\mathbbm{D}(\hat{\cM}_{\cH_A\otimes\cH_B},\cN_{\cH_A\otimes\cH_B}), \label{Eq:def_dist_res_meas}
\end{align}
where $\cF_{\cH_A\otimes\cH_B}$ is the set of free sets of measurements (obviously with the same number of elements as the set $\hat{\cM}_{\cH_A\otimes\cH_B}$) acting on the Hilbert space $\cH_A\otimes\cH_B$ and $\cH_B$ being an arbitrary finite dimensional Hilbert space. We have used the suffix $\cH_A\otimes\cH_B$ to denote the Hilbert space the measurements act on. Throughout this subsection, we will stick to this particular notation.

\begin{theorem}
    $\overline{\mathbbm{R}}$ is a resource measure for a generic measurement-based resource theory.
\end{theorem}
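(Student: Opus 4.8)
The plan is to verify the two mandatory axioms [R1] (non-negativity and faithfulness) and [R2] (monotonicity) for $\overline{\mathbbm{R}}$ as defined in Eq. \eqref{Eq:def_dist_res_meas}, and then the desirable axiom [R3] (convexity); throughout I would exploit the properties [D1]--[D5] of the generic distance $\mathbbm{D}$ together with the structural assumptions [A1]--[A6]. The recurring device is to take, for a given set of measurements $\cM=\{M_i\}\subset\mathscr{M}(\cH_A)$, a (near-)optimal witness in Eq. \eqref{Eq:def_dist_res_meas}, i.e.\ an ancilla $\cH_B$ and a free set $\cN_{\cH_A\otimes\cH_B}\in\cF_{\cH_A\otimes\cH_B}$ with $\mathbbm{D}(\hat{\cM}_{\cH_A\otimes\cH_B},\cN_{\cH_A\otimes\cH_B})$ equal (or arbitrarily close) to $\overline{\mathbbm{R}}(\cM)$, and then manipulate this witness.

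For [R1], non-negativity $\overline{\mathbbm{R}}(\cM)\ge 0$ is immediate since $\mathbbm{D}$ is a distance. For faithfulness, if $\cM\in\cF$ then $\cT_{\cH_B}$ is free by [A5] and hence $\hat{\cM}_{\cH_A\otimes\cH_B}=\cM\otimes\cT_{\cH_B}$ is free by [A1]; choosing $\cN=\hat{\cM}$ gives $\mathbbm{D}=0$, so $\overline{\mathbbm{R}}(\cM)=0$. Conversely, if $\overline{\mathbbm{R}}(\cM)=0$ I would use compactness of $\cF_{\cH_A\otimes\cH_B}$ and continuity of $\mathbbm{D}$ to secure a genuine minimiser $\cN_{\cH_A\otimes\cH_B}$ with $\mathbbm{D}(\hat{\cM},\cN)=0$; faithfulness of the distance then forces $\hat{\cM}=\cN\in\cF$, and since $\cT_{\cH_B}$ is free, the ``only if'' part of [A1] yields $\cM\in\cF$. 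The delicate point here is the infimum over the ancilla $\cH_B$: one must argue that it is attained (equivalently, that $\dim\cH_B$ may be bounded), which I expect to handle by a compactness argument rather than by a dimension-independent identity.

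The heart of the proof is monotonicity [R2]: for an arbitrary free transformation $\cW$ taking $\cM\subset\mathscr{M}(\cH_A)$ to $\cW[\cM]\subset\mathscr{M}(\cK_A)$ I must show $\overline{\mathbbm{R}}(\cW[\cM])\le\overline{\mathbbm{R}}(\cM)$. Starting from a near-optimal witness $(\cH_B,\cN_{\cH_A\otimes\cH_B})$ for $\cM$, I would form the enlarged transformation $\hat{\cW}=\cW\otimes\mathbbm{I}_{\cH_B}$, which acts as $\cW$ on the $\cH_A$ factor and trivially on $\cH_B$. By [A3] the identity is free and by [A2] the tensor product of free transformations is free, so $\hat{\cW}$ is free; consequently $\hat{\cW}[\cN_{\cH_A\otimes\cH_B}]$ is again a free set, now in $\cF_{\cK_A\otimes\cH_B}$. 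The key algebraic identity is that $\hat{\cW}$ sends the enlargement of $\cM$ to the enlargement of $\cW[\cM]$, namely $\hat{\cW}[\cM\otimes\cT_{\cH_B}]=\cW[\cM]\otimes\cT_{\cH_B}=\widehat{\cW[\cM]}_{\cK_A\otimes\cH_B}$, because $\mathbbm{I}_{\cH_B}$ leaves the trivial measurement invariant. Contractivity of $\mathbbm{D}$ under free transformations (property [D3], which for $\widetilde{\cD}$ is Theorem \ref{theor:mono_set_tran}) then gives $\mathbbm{D}(\widehat{\cW[\cM]}_{\cK_A\otimes\cH_B},\hat{\cW}[\cN_{\cH_A\otimes\cH_B}])\le\mathbbm{D}(\hat{\cM}_{\cH_A\otimes\cH_B},\cN_{\cH_A\otimes\cH_B})=\overline{\mathbbm{R}}(\cM)$, and since the left-hand side is an admissible value in the infimum defining $\overline{\mathbbm{R}}(\cW[\cM])$, the claim follows. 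I expect the main obstacle to lie precisely here: making the enlarged map $\hat{\cW}$ fit the general transformation form of Eq. \eqref{eq:far_gen_tra} and confirming that it is free even when $\cW$ itself already uses an internal ancilla or controlled implementation, so that the ancilla of the measure and the ancilla of the transformation are correctly reconciled.

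Finally, for convexity [R3], given sets $\{\cM_k\}$ with probabilities $\{p_k\}$ I would take near-optimal witnesses $\cN^{(k)}$ for each $\cM_k$, promote them to a common ancilla $\cH_B$ (padding the lower-dimensional ones), use convexity of $\cF$ to conclude that $\sum_k p_k\cN^{(k)}$ is free, and note that enlargement commutes with convex combination so that $\sum_k p_k\hat{\cM}_k=\widehat{\left(\sum_k p_k\cM_k\right)}$. Joint convexity of $\mathbbm{D}$ (property [D1]) then bounds $\mathbbm{D}(\sum_k p_k\hat{\cM}_k,\sum_k p_k\cN^{(k)})\le\sum_k p_k\mathbbm{D}(\hat{\cM}_k,\cN^{(k)})=\sum_k p_k\overline{\mathbbm{R}}(\cM_k)$, and since the left-hand side dominates $\overline{\mathbbm{R}}(\sum_k p_k\cM_k)$, the result follows.
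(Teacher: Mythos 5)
Your proposal is correct and follows essentially the same route as the paper's proof: non-negativity and faithfulness via assumptions [A1] and [A5], and monotonicity by tensoring the free transformation $\cW$ with the identity on the ancilla (free by [A2]–[A3]), noting $\cW\otimes\cI^{\dagger}_{\cH_B}[\hat{\cM}_{\cH_A\otimes\cH_B}]=\widehat{\cW[\cM]}_{\cK_A\otimes\cH_B}$, and invoking contractivity [D3] before taking the infimum over $\cH_B$. The only differences are cosmetic: the paper proves only [R1]–[R2] inside this theorem and relegates convexity [R3] to a separate proposition with the same argument you give, and the attainment issue for the infimum over the ancilla that you rightly flag is present in the paper's faithfulness argument as well (it asserts $\mathbbm{K}(\cM,\cH_B)>0$ for every $\cH_B$ and concludes the infimum is positive without further justification).
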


\begin{proof}

    Let, for an arbitrary $\cH_B$, 

    \begin{align}
        \mathbbm{K}(\cM,\cH_B):=&\min_{\cN_{\cH_A\otimes\cH_B}\in\cF_{\cH_A\otimes\cH_B}}\mathbbm{D}(\hat{\cM}_{\cH_A\otimes\cH_B},\cN_{\cH_A\otimes\cH_B}) \\
        =&\mathbbm{D}(\hat{\cM}_{\cH_A\otimes\cH_B},\cN^*_{\cH_A\otimes\cH_B}),
    \end{align}
    where $\cN^*_{\cH_A\otimes\cH_B}\in\cF_{\cH_A\otimes\cH_B}$ is a free set of measurements. Note that $\overline{\mathbbm{R}}(\cM)\geq 0$ for all $\cM$ as $\mathbbm{K}(\cM,\cH_B)\geq 0$ for all $\cM$ and for all $\cH_B$. Now, if $\cM\notin\cF_{\cH_A}$, then by assumption [A1], we have $\hat{\cM}_{\cH_A\otimes\cH_B}\notin\cF_{\cH_A\otimes\cH_B}~\forall \cH_B$. Then for all $\cH_B$, we have $\mathbbm{K}(\cM,\cH_B)>0$ and therefore, from Eq. \ref{Eq:def_dist_res_meas}, we have $\overline{\mathbbm{R}}>0$. Now, assume, $\cM\in\cF_{\cH_A}$. Then as $\mathbbm{D}(\cM,\cM)=0$ from Eq. \ref{Eq:def_dist_res_meas}, we have $\overline{\mathbbm{R}}=0$.
    
    Consider a free transformation $\cW$ that transforms the set of measurements $\cM\subset\mathscr{M}(\cH_A)$ to another set of measurements $\cW[M]\subset\mathscr{M}(\cK_A)$. Now, for an arbitrary $\cH_B$
    \begin{align}
        \mathbbm{K}(\cM,\cH_B)=&\mathbbm{D}(\hat{\cM}_{\cH_A\otimes\cH_B},\cN^*_{\cH_A\otimes\cH_B})\\
        \geq& \mathbbm{D}(\cW\otimes\cI^{\dagger}_{\cH_B}[\hat{\cM}_{\cH_A\otimes\cH_B}],\cW\otimes\cI^{\dagger}_{\cH_B}[\cN^*_{\cH_A\otimes\cH_B}])\nonumber\\
        \geq& \min_{\cN_{\cH_B}\in\cF_{\cK_A\otimes\cH_B}}\mathbbm{D}(\hat{\cW[\cM]}_{\cK_A\otimes\cH_B},\cN_{\cK_A\otimes\cH_B})\nonumber\\
        =& \mathbbm{K}(\cW[\cM],\cH_B),
    \end{align}
    where in second line, we have used the fact that $\cW\otimes\cI^{\dagger}_{\cH_B}$ is a free transformation as $\cW$ is a free transformation as given by assumptions [A2] and [A3], and in the third line, we used the obvious fact that $\cW\otimes\cI^{\dagger}_{\cH_B}[\hat{\cM}_{\cH_A\otimes\cH_B}]=\hat{\cW[\cM]}_{\cK_A\otimes\cH_B}$ as $\hat{\cM}_{\cH_A\otimes\cH_B}=\cM\otimes\cT_{\cH_B}$.
    Therefore,

    \begin{align}
        \mathbbm{K}(\cM,\cH_B)\geq& \mathbbm{K}(\cW[\cM],\cH_B)~\forall \cH_B\nonumber\\
        or,~\inf_{\cH_B}\mathbbm{K}(\cM,\cH_B)\geq& \inf_{\cH_B}\mathbbm{K}(\cW[\cM],\cH_B)\nonumber\\
        or,~\overline{\mathbbm{R}}(\cM)\geq& \overline{\mathbbm{R}}(\cW[\cM]).
    \end{align}
\end{proof}

\begin{remark}
    \rm{Note that usually a distance-based resource measure for a set of measurement $\cM$ is defined as $\overline{\mathbf{R}}(\cM)=\min_{\cN\in\cF_{\cH}}\mathbbm{D}(\cM,\cN)$ \cite{Tendick_dist_res_meas}. But in Eq. \eqref{Eq:def_dist_res_meas}, we have defined the resource measure in a slightly more general way and our measure is consistent with many different aspects (including tensor product structure) of measurement-based resource theories that we show in the rest of this subsection.}
\end{remark}

\begin{proposition}
    $\overline{\mathbbm{R}}(\cM\otimes\cT_{\cH_B})=\overline{\mathbbm{R}}(\cM)$ for any $\cM=\{M_i\in\mathscr{M}(\cH_A)\}$. \label{Proposi:dist_meas_inv_add_triv_meas}
\end{proposition}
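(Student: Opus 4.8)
The plan is to collapse both sides of the identity into infima of a single auxiliary quantity taken over two families of enlargement spaces, and then to compare those families. First I would introduce, for each finite-dimensional Hilbert space $\cH_E$, the helper quantity $g(\cH_E):=\min_{\cN\in\cF_{\cH_A\otimes\cH_E}}\mathbbm{D}(\cM\otimes\cT_{\cH_E},\cN)$, where the minimum is attained because $\cF$ is compact and $\mathbbm{D}$ is continuous. By Eq. \eqref{Eq:def_dist_res_meas} we then have $\overline{\mathbbm{R}}(\cM)=\inf_{\cH_E}g(\cH_E)$. For the left-hand side, the enlargement of $\cM\otimes\cT_{\cH_B}$ by an auxiliary space $\cH_C$ is $(\cM\otimes\cT_{\cH_B})\otimes\cT_{\cH_C}=\cM\otimes\cT_{\cH_B\otimes\cH_C}$, where I use the preliminary identity $\cT_{\cH_B\otimes\cH_C}=\cT_{\cH_B}\otimes\cT_{\cH_C}$. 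Hence $\overline{\mathbbm{R}}(\cM\otimes\cT_{\cH_B})=\inf_{\cH_C}g(\cH_B\otimes\cH_C)$, and the whole statement reduces to showing $\inf_{\cH_C}g(\cH_B\otimes\cH_C)=\inf_{\cH_E}g(\cH_E)$.

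The inequality $\overline{\mathbbm{R}}(\cM\otimes\cT_{\cH_B})\geq\overline{\mathbbm{R}}(\cM)$ is then immediate, since $\{\cH_B\otimes\cH_C\}_{\cH_C}$ is a subfamily of the family of all finite-dimensional Hilbert spaces, and the infimum over a subfamily can only be larger or equal.

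The reverse inequality is the crux, and it rests on a monotonicity property of $g$ under further enlargement, namely $g(\cH_B\otimes\cH_C)\leq g(\cH_C)$ for every $\cH_C$. To establish this I would take $\cN^*\in\cF_{\cH_A\otimes\cH_C}$ optimal for $g(\cH_C)$; by assumption [A5] the trivial measurement $\cT_{\cH_B}$ is free, and by [A1] the set $\cN^*\otimes\cT_{\cH_B}$ is free, so after reordering the tensor factors (which preserves freeness by [A4]) it is a feasible candidate on $\cH_A\otimes\cH_B\otimes\cH_C$. Writing $\cM\otimes\cT_{\cH_B\otimes\cH_C}=(\cM\otimes\cT_{\cH_C})\otimes\cT_{\cH_B}$ up to the same swap of factors (under which $\mathbbm{D}$ is invariant by [D4]) and then invoking the invariance of $\mathbbm{D}$ under tensoring both arguments with the \emph{same} set [D2], I obtain $\mathbbm{D}(\cM\otimes\cT_{\cH_B\otimes\cH_C},\cN^*\otimes\cT_{\cH_B})=\mathbbm{D}(\cM\otimes\cT_{\cH_C},\cN^*)=g(\cH_C)$, whence $g(\cH_B\otimes\cH_C)\leq g(\cH_C)$. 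Taking the infimum over $\cH_C$ gives $\overline{\mathbbm{R}}(\cM\otimes\cT_{\cH_B})=\inf_{\cH_C}g(\cH_B\otimes\cH_C)\leq\inf_{\cH_C}g(\cH_C)=\overline{\mathbbm{R}}(\cM)$, and combining the two bounds yields the equality.

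The main obstacle I anticipate is essentially bookkeeping rather than substance: the definition always appends trivial measurements on the right, so one must keep the order of the tensor factors consistent between the enlarged measurement $\cM\otimes\cT_{\cH_B\otimes\cH_C}$ and the candidate free set $\cN^*\otimes\cT_{\cH_B}$. This is exactly what the swap-invariance of the distance [D4] and the swap-invariance of free objects [A4] are there to absorb; once they are invoked, the real content collapses to the combination of [A1], [A5], and the key property [D2] that tensoring both arguments with the same measurement leaves $\mathbbm{D}$ unchanged.
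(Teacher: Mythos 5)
Your proposal is correct and follows essentially the same route as the paper: your helper quantity $g(\cH_E)$ is the paper's $\mathbbm{K}(\cM,\cH_E)$, the "$\geq$" direction is the same subfamily-of-enlargements observation, and the "$\leq$" direction is the paper's chain $\mathbbm{K}(\cM,\cH_C)\geq\mathbbm{K}(\cM\otimes\cT_{\cH_B},\cH_C)$ obtained by tensoring the optimal free set with $\cT_{\cH_B}$ and invoking [A1], [A4], [A5] together with [D2] and [D4]. No gaps.
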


\begin{proof}
For the simplicity of notation in the following arguments let us denote $\cM\otimes\cT_{\cH_B}$ as $\cM^{\prime} $. Then, we have $\hat{\cM^{\prime}}_{\cH_A\otimes\cH_B\otimes\cH_C}=\hat{\cM}_{\cH_A\otimes\cH_B\otimes\cH_C}$ and therefore we can write the following,

\begin{align}
&\overline{\mathbbm{R}}(\cM^{\prime})\nonumber\\
=&\inf_{\cH_C}\min_{\cN_{\cH_A\otimes\cH_B\otimes\cH_C}\in\cF_{\cH_A\otimes\cH_B\otimes\cH_C}}\mathbbm{D}(\hat{\cM^{\prime}}_{\cH_A\otimes\cH_B\otimes\cH_C},\cN_{\cH_A\otimes\cH_B\otimes\cH_C})\nonumber\\
\geq&\inf_{\cH_B\otimes\cH_C}\min_{\cN_{\cH_A\otimes(\cH_B\otimes\cH_C)}\in\cF_{\cH_A\otimes(\cH_B\otimes\cH_C)}}\mathbbm{D}(\hat{\cM}_{\cH_A\otimes\cH_B\otimes\cH_C},\cN_{\cH_A\otimes\cH_B\otimes\cH_C})\nonumber\\
=&\overline{\mathbbm{R}}(\cM).
\end{align}

Now, let  

\begin{align}
    \mathbbm{K}(\cM,\cH_C)=&\min_{\cN_{\cH_A\otimes\cH_C}\in\cF_{\cH_A\otimes\cH_C}}\mathbbm{D}(\hat{\cM}_{\cH_A\otimes\cH_C},\cN_{\cH_A\otimes\cH_C})\nonumber\\
    =&\mathbbm{D}(\hat{\cM}_{\cH_A\otimes\cH_C},\cN^*_{\cH_A\otimes\cH_C}),
\end{align}

Let us denote the elements of $\cN^*_{\cH_A\otimes\cH_C}$ as $N^*_j$ i.e., $\cN^*_{\cH_A\otimes\cH_C}=\{N^*_j\in\mathscr{M}(\cH_A\otimes\cH_C)\}$. \\

Then for an arbitrary $\cH_C$ and arbitrary $\cH_B$,
\begin{align}
    \mathbbm{K}(\cM,\cH_C)=&\mathbbm{D}(\hat{\cM}_{\cH_A\otimes\cH_C},\cN^*_{\cH_A\otimes\cH_C})\nonumber\\
    =&\mathbbm{D}(\hat{\cM}_{\cH_A\otimes\cH_C}\otimes\cT_{\cH_B},\cN^*_{\cH_A\otimes\cH_C}\otimes\cT_{\cH_B})\nonumber\\
    =&\mathbbm{D}(\hat{\cM}_{\cH_A\otimes\cH_C\otimes\cH_B},\cN^*_{\cH_A\otimes\cH_C}\otimes\cT_{\cH_B})\nonumber\\
   =&\mathbbm{D}(\hat{\cM}_{\cH_A\otimes\cH_B\otimes\cH_C},\cN^{\prime}_{\cH_A\otimes\cH_B\otimes\cH_C})\nonumber\\
   \geq&\min_{\cN_{\cK_{ABC}}\in\cF_{\cK_{ABC}}}\mathbbm{D}(\hat{\cM}_{\cK_{ABC}},\cN_{\cK_{ABC}})\nonumber\\
   =&\mathbbm{K}(\cM\otimes\cT_{\cH_B},\cH_C),
\end{align}
where in fourth line,
\begin{equation}
    \cN^{\prime}_{\cH_A\otimes\cH_B\otimes\cH_C}:= \{N^{\prime}_j =\mathbbm{I}^{\dagger}_{\cH_A}\otimes\mathbf{SWAP}^{\dagger}_{\cH_B\leftrightarrow \cH_C}[N^*_j\otimes\cT_{\cH_B}]\},\nonumber
\end{equation}
 clearly, 
$\mathbbm{I}^{\dagger}_{\cH_A}\otimes\mathbf{SWAP}^{\dagger}_{\cH_B\leftrightarrow \cH_C}[\hat{\cM}_{\cH_A\otimes\cH_C\otimes\cH_B}]=\hat{\cM}_{\cH_A\otimes\cH_B\otimes\cH_C}$, and we have used Property [D4], and in the fifth line $\cK_{ABC}=\cH_A\otimes\cH_B\otimes\cH_C$, and we have used the assumptions [A1], [A4], and [A5] (and therefore, $\cN^{\prime}_{\cH_A\otimes\cH_B\otimes\cH_C}$ is a free object).

Hence,

\begin{align}
    \mathbbm{K}(\cM,\cH_C)\geq&\mathbbm{K}(\cM\otimes\cT_{\cH_B},\cH_C)~\forall \cH_C\nonumber\\
    or,~\inf_{\cH_C}\mathbbm{K}(\cM,\cH_C)\geq&\inf_{\cH_C}\mathbbm{K}(\cM\otimes\cT_{\cH_B},\cH_C)\nonumber\\
    or,~\overline{\mathbbm{R}}(\cM)\geq&\overline{\mathbbm{R}}(\cM\otimes\cT_{\cH_B})
\end{align}
Therefore,
\begin{align}
\overline{\mathbbm{R}}(\cM)=\overline{\mathbbm{R}}(\cM\otimes\cT_{\cH_B}).
\end{align}
\end{proof}

\begin{proposition}
   $\overline{\mathbbm{R}}(\cM\otimes\cM^{\prime})\leq\overline{\mathbbm{R}}(\cM)+\overline{\mathbbm{R}}(\cM^{\prime})$ for any $\cM=\{M_i\in\mathscr{M}(\cH_A)\}$ and $\cM^{\prime}=\{M^{\prime}_i\in\mathscr{M}(\cH_B)\}$. In other words, $\overline{\mathbbm{R}}$ is subadditive under tensor product. \label{Proposi:res_mes_ten_sub_addit}
\end{proposition}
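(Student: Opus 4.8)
The plan is to feed a tensor product of (near-)optimal free approximations for $\cM$ and $\cM^{\prime}$ into the optimization that defines $\overline{\mathbbm{R}}(\cM\otimes\cM^{\prime})$, and to bound the result using the subadditivity [D2] and swap invariance [D4] of $\mathbbm{D}$. Fix $\epsilon>0$. Since the outer infimum in Eq.~\eqref{Eq:def_dist_res_meas} ranges over auxiliary Hilbert spaces while the inner minimum is attained (by compactness of $\cF$), I would choose an auxiliary space $\cH_C$ together with a free set $\cN^*_{\cH_A\otimes\cH_C}\in\cF_{\cH_A\otimes\cH_C}$ such that $\mathbbm{D}(\hat{\cM}_{\cH_A\otimes\cH_C},\cN^*_{\cH_A\otimes\cH_C})\le\overline{\mathbbm{R}}(\cM)+\epsilon$, and likewise $\cH_D$ and $\cN^{\prime*}_{\cH_B\otimes\cH_D}\in\cF_{\cH_B\otimes\cH_D}$ with $\mathbbm{D}(\hat{\cM^{\prime}}_{\cH_B\otimes\cH_D},\cN^{\prime*}_{\cH_B\otimes\cH_D})\le\overline{\mathbbm{R}}(\cM^{\prime})+\epsilon$.

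Next I would combine these two approximations. The product set $\cN^*_{\cH_A\otimes\cH_C}\otimes\cN^{\prime*}_{\cH_B\otimes\cH_D}$ is free by assumption [A1], and it carries the same product index set as $\hat{\cM}_{\cH_A\otimes\cH_C}\otimes\hat{\cM^{\prime}}_{\cH_B\otimes\cH_D}$. Applying the joint subadditivity under tensor product [D2] then gives
\begin{align}
&\mathbbm{D}\big(\hat{\cM}_{\cH_A\otimes\cH_C}\otimes\hat{\cM^{\prime}}_{\cH_B\otimes\cH_D},\,\cN^*_{\cH_A\otimes\cH_C}\otimes\cN^{\prime*}_{\cH_B\otimes\cH_D}\big)\nonumber\\
&\le \mathbbm{D}(\hat{\cM}_{\cH_A\otimes\cH_C},\cN^*_{\cH_A\otimes\cH_C})+\mathbbm{D}(\hat{\cM^{\prime}}_{\cH_B\otimes\cH_D},\cN^{\prime*}_{\cH_B\otimes\cH_D})\nonumber\\
&\le\overline{\mathbbm{R}}(\cM)+\overline{\mathbbm{R}}(\cM^{\prime})+2\epsilon.\nonumber
\end{align}

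The remaining step is to recognize the left argument as an enlargement of $\cM\otimes\cM^{\prime}$. The element of $\hat{\cM}_{\cH_A\otimes\cH_C}\otimes\hat{\cM^{\prime}}_{\cH_B\otimes\cH_D}$ indexed by $(i,j)$ is $M_i\otimes\Id_{\cH_C}\otimes M^{\prime}_j\otimes\Id_{\cH_D}$ on $\cH_A\otimes\cH_C\otimes\cH_B\otimes\cH_D$. A single swap of the (adjacent) factors $\cH_C$ and $\cH_B$, implemented in the Heisenberg picture by $\mathbbm{I}^{\dagger}_{\cH_A}\otimes\mathbf{SWAP}^{\dagger}_{\cH_C\leftrightarrow\cH_B}\otimes\mathbbm{I}^{\dagger}_{\cH_D}$, reorders this into $M_i\otimes M^{\prime}_j\otimes\Id_{\cH_C}\otimes\Id_{\cH_D}$, i.e.\ into the trivially enlarged set $\widehat{\cM\otimes\cM^{\prime}}_{(\cH_A\otimes\cH_B)\otimes(\cH_C\otimes\cH_D)}$ with auxiliary space $\cH_C\otimes\cH_D$. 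Applying the same swap to $\cN^*_{\cH_A\otimes\cH_C}\otimes\cN^{\prime*}_{\cH_B\otimes\cH_D}$ produces a set that is still free by [A4], and by swap invariance [D4] the distance above is unchanged. Hence the swapped free set is admissible in the inner minimum and $\cH_C\otimes\cH_D$ is an admissible auxiliary space for the outer infimum in $\overline{\mathbbm{R}}(\cM\otimes\cM^{\prime})$, which yields $\overline{\mathbbm{R}}(\cM\otimes\cM^{\prime})\le\overline{\mathbbm{R}}(\cM)+\overline{\mathbbm{R}}(\cM^{\prime})+2\epsilon$; letting $\epsilon\to0$ finishes the argument.

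I expect the main obstacle to be precisely the Hilbert-space bookkeeping in this last step: one must track the tensor-factor ordering so that the product of the two separately enlarged sets is identified, after the middle swap $\cH_C\leftrightarrow\cH_B$, with the single enlarged object $\widehat{\cM\otimes\cM^{\prime}}$ that appears in the definition of $\overline{\mathbbm{R}}$, and one must confirm that the correspondingly swapped product free set remains free (via [A1] together with [A4]) so that it is a legitimate competitor in the minimization. Everything else is a direct application of the already-established properties [D2] and [D4] of $\mathbbm{D}$, plus the elementary fact that a feasible point bounds an infimum.
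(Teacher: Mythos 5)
Your proposal is correct and follows essentially the same route as the paper: take (near-)optimal free approximants for $\cM$ and $\cM^{\prime}$, tensor them, apply [D2], and then use a middle $\mathbf{SWAP}$ of $\cH_B\leftrightarrow\cH_C$ together with [A1], [A4] and [D4] to identify the result as a feasible point for $\overline{\mathbbm{R}}(\cM\otimes\cM^{\prime})$ with auxiliary space $\cH_C\otimes\cH_D$. The only cosmetic difference is that you handle the outer infimum via an explicit $\epsilon$-argument, whereas the paper works with the attained inner minima $\mathbbm{K}(\cdot,\cH_C)$ and passes to the infimum over auxiliary spaces at the end.
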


\begin{proof}
Let us consider,

\begin{align}
    \mathbbm{K}(\cM,\cH_C)=&\min_{\cN_{\cH_A\otimes\cH_C}\in\cF_{\cH_A\otimes\cH_C}}\mathbbm{D}(\hat{\cM}_{\cH_A\otimes\cH_C},\cN_{\cH_A\otimes\cH_C})\nonumber\\
    =&\mathbbm{D}(\hat{\cM}_{\cH_A\otimes\cH_C},\bar{\cN}_{\cH_A\otimes\cH_C}),
\end{align}

and 

\begin{align}
    \mathbbm{K}(\cM^{\prime},\cH_D)=&\min_{\cN^{\prime}_{\cH_B\otimes\cH_D}\in\cF_{\cH_B\otimes\cH_D}}\mathbbm{D}(\hat{\cM}^{\prime}_{\cH_B\otimes\cH_D},\cN_{\cH_B\otimes\cH_D})\nonumber\\
    =&\mathbbm{D}(\hat{\cM}^{\prime}_{\cH_B\otimes\cH_D},\bar{\cN}^{\prime}_{\cH_B\otimes\cH_D}).
\end{align}

 We denote the elements of  $\bar{\cN}_{\cH_A\otimes\cH_C}$ as $N_i$ i.e., $\bar{\cN}_{\cH_A\otimes\cH_C}=\{N_i\in\mathscr{M}(\cH_A\otimes\cH_C)\}$ and the elements of  $\bar{\cN}^{\prime}_{\cH_B\otimes\cH_D}$ as $N^{\prime}_i$ i.e., $\bar{\cN}^{\prime}_{\cH_B\otimes\cH_D}=\{N^{\prime}_j\in\mathscr{M}(\cH_B\otimes\cH_D)\}$. For notational simplicity let us now denote $\cM\otimes\cM^{\prime}$ by $\cM^{\prime\prime}$.\\
 
Then,
\begin{align}
&\mathbbm{K}(\cM,\cH_C)+\mathbbm{K}(\cM^{\prime},\cH_D)\nonumber\\
=&\mathbbm{D}(\hat{\cM}_{\cH_A\otimes\cH_C},\bar{\cN}_{\cH_A\otimes\cH_C})+\mathbbm{D}(\hat{\cM}^{\prime}_{\cH_B\otimes\cH_D},\bar{\cN}^{\prime}_{\cH_B\otimes\cH_D})\nonumber\\
\geq& \mathbbm{D}(\hat{\cM}_{\cH_A\otimes\cH_C}\otimes\hat{\cM}^{\prime}_{\cH_B\otimes\cH_D},\bar{\cN}_{\cH_A\otimes\cH_C}\otimes\bar{\cN}^{\prime}_{\cH_B\otimes\cH_D})\nonumber\\
=& \mathbbm{D}(\hat{\cM}^{\prime\prime}_{\cH_A\otimes\cH_B\otimes\cH_C\otimes\cH_D},\tilde{\cN}_{\cH_A\otimes\cH_B\otimes\cH_C\otimes\cH_D})\nonumber\\
\geq&\min_{\cN_{\cK_{ABCD}}\in\cF_{\cK_{ABCD}}}\mathbbm{D}(\hat{\cM}^{\prime\prime}_{\cK_{ABCD}},\cN_{\cK_{ABCD}})\nonumber\\
=&\mathbbm{K}(\cM^{\prime\prime},\cH_C\otimes\cH_D)\nonumber\\
=&\mathbbm{K}(\cM\otimes\cM^{\prime},\cH_C\otimes\cH_D),
\end{align}
where in the third line we have used the property [D2], in the fourth line we have made the substitution, 
\begin{equation}
    \tilde{\cN}_{\cH_A\otimes\cH_B\otimes\cH_C\otimes\cH_D}:=\{\tilde{N}_{ij}=\mathbbm{I}^{\dagger}_{\cH_A}\otimes\mathbf{SWAP}^{\dagger}_{\cH_B\leftrightarrow \cH_C}\otimes\mathbbm{I}^{\dagger}_{\cH_D}[N_i\otimes N^{\prime}_j]\},
\end{equation}
    
 as well as used property [D4], in the fifth line we defined $\cK_{ABCD}:=\cH_A\otimes\cH_B\otimes\cH_C\otimes\cH_D$ and also have used the assumptions [A1] and [A4] (and therefore, $\tilde{\cN}_{\cH_A\otimes\cH_B\otimes\cH_C\otimes\cH_D}$ is a free object).
Now, note that both $\mathbbm{K}(\cM,\cH_C)$ and $\mathbbm{K}(\cM^{\prime},\cH_D)$ are positive real numbers and therefore, 
$\overline{\mathbbm{R}}(\cM)+\overline{\mathbbm{R}}(\cM^{\prime})=\inf_{\cH_C}\mathbbm{K}(\cM,\cH_C)+\inf_{\cH_D}\mathbbm{K}(\cM^{\prime},\cH_D)=\inf_{\cH_C\otimes\cH_D}[\mathbbm{K}(\cM,\cH_C)+\mathbbm{K}(\cM^{\prime},\cH_D)]$. Then we can write,
\begin{align}
&\mathbbm{K}(\cM,\cH_C)+\mathbbm{K}(\cM^{\prime},\cH_D)\nonumber\\&\hspace{2cm}\geq\mathbbm{K}(\cM\otimes\cM^{\prime},\cH_C\otimes\cH_D)~\forall \cH_C,\cH_D\nonumber\\
or,&\inf_{\cH_C\otimes\cH_D}[\mathbbm{K}(\cM,\cH_C)+\mathbbm{K}(\cM^{\prime},\cH_D)]\nonumber\\&\hspace{2cm}\geq\inf_{\cH_C\otimes\cH_D}\mathbbm{K}(\cM\otimes\cM^{\prime},\cH_C\otimes\cH_D)\nonumber\\
or,&~\inf_{\cH_C}\mathbbm{K}(\cM,\cH_C)+\inf_{\cH_D}\mathbbm{K}(\cM^{\prime},\cH_D)\nonumber\\
&\hspace{2cm}\geq\inf_{\cH_C\otimes\cH_D}\mathbbm{K}(\cM\otimes\cM^{\prime},\cH_C\otimes\cH_D)\nonumber\\
or,&~\overline{\mathbbm{R}}(\cM)+\overline{\mathbbm{R}}(\cM^{\prime})\geq\overline{\mathbbm{R}}(\cM\otimes\cM^{\prime}).
\end{align}
\end{proof}

\begin{proposition}
For a convex resource theory,    $\overline{\mathbbm{R}}(p\cM+(1-p)\cM^{\prime})\leq p\overline{\mathbbm{R}}(\cM)+(1-p)\overline{\mathbbm{R}}(\cM^{\prime})$ for any $\cM=\{M_i\in\mathscr{M}(\cH_A)\}$ and $\cM^{\prime}=\{M^{\prime}_i\in\mathscr{M}(\cH_A)\}$ and any $0\leq p\leq 1$.
\end{proposition}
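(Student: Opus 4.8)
The plan is to produce, for each $\epsilon>0$, a \emph{single} common ancilla space $\cH_B$ on which near-optimal free witnesses for both $\cM$ and $\cM'$ simultaneously live, and then to combine them using the joint convexity [D1] of $\mathbbm{D}$ together with the convexity of the free set $\cF$. Recall the shorthand $\mathbbm{K}(\cM,\cH_B)=\min_{\cN\in\cF_{\cH_A\otimes\cH_B}}\mathbbm{D}(\hat{\cM}_{\cH_A\otimes\cH_B},\cN)$ introduced in the earlier proofs, so that $\overline{\mathbbm{R}}(\cM)=\inf_{\cH_B}\mathbbm{K}(\cM,\cH_B)$. The starting observation is that trivial enlargement is linear in the POVM elements: since $(pM_i+(1-p)M_i')\otimes\Id_{\cH_B}=p(M_i\otimes\Id_{\cH_B})+(1-p)(M_i'\otimes\Id_{\cH_B})$, one has $\widehat{(p\cM+(1-p)\cM')}_{\cH_A\otimes\cH_B}=p\hat{\cM}_{\cH_A\otimes\cH_B}+(1-p)\hat{\cM}'_{\cH_A\otimes\cH_B}$ for every $\cH_B$.

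The key preparatory step is to show that $\mathbbm{K}(\cM,\cH_B)$ is non-increasing under enlargement of the ancilla, i.e.\ $\mathbbm{K}(\cM,\cH_{B_1}\otimes\cH_{B_2})\leq\mathbbm{K}(\cM,\cH_{B_1})$. This follows just as in the proof of Proposition \ref{Proposi:dist_meas_inv_add_triv_meas}: if $\cN_1\in\cF_{\cH_A\otimes\cH_{B_1}}$ is optimal for $\mathbbm{K}(\cM,\cH_{B_1})$, then $\cN_1\otimes\cT_{\cH_{B_2}}$ is free by [A1] and [A5], and the tensor-invariance part of [D2] gives $\mathbbm{D}(\hat{\cM}_{\cH_A\otimes\cH_{B_1}}\otimes\cT_{\cH_{B_2}},\cN_1\otimes\cT_{\cH_{B_2}})=\mathbbm{D}(\hat{\cM}_{\cH_A\otimes\cH_{B_1}},\cN_1)$; since $\hat{\cM}_{\cH_A\otimes\cH_{B_1}\otimes\cH_{B_2}}=\hat{\cM}_{\cH_A\otimes\cH_{B_1}}\otimes\cT_{\cH_{B_2}}$, restricting the minimization over the larger space to such product witnesses already attains $\mathbbm{K}(\cM,\cH_{B_1})$.

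With this in hand I would fix $\epsilon>0$, choose $\cH_{B_1},\cH_{B_2}$ with $\mathbbm{K}(\cM,\cH_{B_1})\leq\overline{\mathbbm{R}}(\cM)+\epsilon$ and $\mathbbm{K}(\cM',\cH_{B_2})\leq\overline{\mathbbm{R}}(\cM')+\epsilon$, and set $\cH_B:=\cH_{B_1}\otimes\cH_{B_2}$. By the monotonicity just established, $\mathbbm{K}(\cM,\cH_B)\leq\overline{\mathbbm{R}}(\cM)+\epsilon$ and $\mathbbm{K}(\cM',\cH_B)\leq\overline{\mathbbm{R}}(\cM')+\epsilon$. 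Let $\cN_1^*,\cN_2^*\in\cF_{\cH_A\otimes\cH_B}$ be the corresponding optimizers; their mixture $p\cN_1^*+(1-p)\cN_2^*$ is again free by convexity of $\cF$ and hence feasible, so that
\begin{align}
\overline{\mathbbm{R}}(p\cM+(1-p)\cM') &\leq \mathbbm{K}(p\cM+(1-p)\cM',\cH_B)\nonumber\\
&\leq \mathbbm{D}\big(p\hat{\cM}_{\cH_A\otimes\cH_B}+(1-p)\hat{\cM}'_{\cH_A\otimes\cH_B},\,p\cN_1^*+(1-p)\cN_2^*\big)\nonumber\\
&\leq p\,\mathbbm{D}(\hat{\cM}_{\cH_A\otimes\cH_B},\cN_1^*)+(1-p)\,\mathbbm{D}(\hat{\cM}'_{\cH_A\otimes\cH_B},\cN_2^*)\nonumber\\
&= p\,\mathbbm{K}(\cM,\cH_B)+(1-p)\,\mathbbm{K}(\cM',\cH_B)\nonumber\\
&\leq p\,\overline{\mathbbm{R}}(\cM)+(1-p)\,\overline{\mathbbm{R}}(\cM')+\epsilon,
\end{align}
where the second line uses the linearity of the enlargement together with feasibility of the mixed witness, and the third line uses joint convexity [D1]. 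Letting $\epsilon\to0$ yields the claim.

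The step I expect to be the main obstacle is exactly this passage to a common ancilla. The naive per-space bound $\mathbbm{K}(p\cM+(1-p)\cM',\cH_B)\leq p\,\mathbbm{K}(\cM,\cH_B)+(1-p)\,\mathbbm{K}(\cM',\cH_B)$ holds for each fixed $\cH_B$, but one cannot simply take infima afterward, since $\inf(f+g)\geq\inf f+\inf g$ points the wrong way. The monotonicity of $\mathbbm{K}$ under ancilla enlargement is precisely what lets one merge the two separately chosen optimal ancillas onto the single space $\cH_B=\cH_{B_1}\otimes\cH_{B_2}$ at no cost, circumventing this difficulty.
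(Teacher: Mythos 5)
Your proof is correct, and at the one genuinely delicate point it is \emph{more} careful than the paper's own argument. The paper proceeds as you anticipated in your closing remark: it fixes a single ancilla $\cH_C$, picks optimal free witnesses $\bar\cN_{\cH_A\otimes\cH_C}$ and $\bar\cN'_{\cH_A\otimes\cH_C}$ for $\cM$ and $\cM'$ on that \emph{same} space, applies [D1] and convexity of $\cF$ to get the pointwise bound $p\,\mathbbm{K}(\cM,\cH_C)+(1-p)\,\mathbbm{K}(\cM',\cH_C)\geq\mathbbm{K}(p\cM+(1-p)\cM',\cH_C)$, and then passes to the infimum over $\cH_C$. That last passage is exactly the step you flagged: the paper goes from $\inf_{\cH_C}[p\,\mathbbm{K}(\cM,\cH_C)+(1-p)\,\mathbbm{K}(\cM',\cH_C)]\geq\overline{\mathbbm{R}}(p\cM+(1-p)\cM')$ to $p\inf_{\cH_C}\mathbbm{K}(\cM,\cH_C)+(1-p)\inf_{\cH_C}\mathbbm{K}(\cM',\cH_C)\geq\overline{\mathbbm{R}}(p\cM+(1-p)\cM')$, which replaces the left-hand side by something smaller and so does not follow; the justification it cites ($\inf_{\cH_C\otimes\cH_D}[\mathbbm{K}(\cM,\cH_C)+\mathbbm{K}(\cM',\cH_D)]=\inf_{\cH_C}\mathbbm{K}(\cM,\cH_C)+\inf_{\cH_D}\mathbbm{K}(\cM',\cH_D)$) is an identity for \emph{independent} ancilla variables, whereas the pointwise inequality was only derived on the diagonal $\cH_C=\cH_D$. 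Your ancilla-merging lemma --- $\mathbbm{K}(\cM,\cH_{B_1}\otimes\cH_{B_2})\leq\mathbbm{K}(\cM,\cH_{B_1})$ via [A1], [A5] and the tensor-invariance part of [D2], mirroring the paper's Proposition \ref{Proposi:dist_meas_inv_add_triv_meas} --- is precisely what legitimizes moving both near-optimal witnesses onto the common space $\cH_{B_1}\otimes\cH_{B_2}$ and closes this gap. The only cosmetic omission is that for $\cM'$ you need $\mathbbm{K}(\cM',\cH_{B_1}\otimes\cH_{B_2})\leq\mathbbm{K}(\cM',\cH_{B_2})$, which requires an additional swap of the two ancilla factors handled by [D4] together with [A4], exactly as the paper does elsewhere; this is routine and does not affect the validity of your argument.
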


\begin{proof}
Let 

\begin{align}
    \mathbbm{K}(\cM,\cH_C)=&\min_{\cN_{\cH_A\otimes\cH_C}\in\cF_{\cH_A\otimes\cH_C}}\mathbbm{D}(\hat{\cM}_{\cH_A\otimes\cH_C},\cN_{\cH_A\otimes\cH_C})\nonumber\\
    =&\mathbbm{D}(\hat{\cM}_{\cH_A\otimes\cH_C},\bar{\cN}_{\cH_A\otimes\cH_C}),
\end{align}

and 

\begin{align}
    \mathbbm{K}(\cM^{\prime},\cH_C)=&\min_{\cN^{\prime}_{\cH_A\otimes\cH_C}\in\cF_{\cH_A\otimes\cH_C}}\mathbbm{D}(\hat{\cM}^{\prime}_{\cH_A\otimes\cH_C},\cN_{\cH_A\otimes\cH_C})\nonumber\\
    =&\mathbbm{D}(\hat{\cM}^{\prime}_{\cH_A\otimes\cH_C},\bar{\cN}^{\prime}_{\cH_A\otimes\cH_C}).
\end{align}

Again let $\cM^{\prime\prime}:=p\cM+(1-p)\cM^{\prime}$. Then for an arbitrary $\cH_C$

\begin{align}
    &p\mathbbm{K}(\cM,\cH_C)+(1-p)\mathbbm{K}(\cM^{\prime},\cH_C)\nonumber\\
    =&p\mathbbm{D}(\hat{\cM}_{\cH_A\otimes\cH_C},\bar{\cN}_{\cH_A\otimes\cH_C})+(1-p)\mathbbm{D}(\hat{\cM}^{\prime}_{\cH_A\otimes\cH_C},\bar{\cN}^{\prime}_{\cH_A\otimes\cH_C})\nonumber\\
    \geq& \mathbbm{D}(p\hat{\cM}_{\cH_A\otimes\cH_C}+(1-p)\hat{\cM}^{\prime}_{\cH_A\otimes\cH_C}, \nonumber \\ & \hspace{3.6cm} p\bar{\cN}_{\cH_A\otimes\cH_C}+(1-p)\bar{\cN}^{\prime}_{\cH_A\otimes\cH_C})\nonumber\\
    \geq& \min_{\cN_{\cH_A\otimes\cH_C}\in\cF_{\cH_A\otimes\cH_C}}\mathbbm{D}(\hat{\cM^{\prime\prime}}_{\cH_A\otimes\cH_C}, \cN_{\cH_A\otimes\cH_C})\nonumber\\
    =&\mathbbm{K}(\cM^{\prime\prime},\cH_C)\nonumber\\
    =&\mathbbm{K}(p\cM+(1-p)\cM^{\prime},\cH_C)
\end{align}
where in the third line, we have used the Property [D1], in the fourth line, we used the obvious fact that $\hat{\cM^{\prime\prime}}_{\cH_A\otimes\cH_C}=p\hat{\cM}_{\cH_A\otimes\cH_C}+(1-p)\hat{\cM}^{\prime}_{\cH_A\otimes\cH_C}$, and we have used the fact that as the resource theory is convex, free objects form a convex set and therefore, $p\bar{\cN}_{\cH_A\otimes\cH_C}+(1-p)\bar{\cN}^{\prime}_{\cH_A\otimes\cH_C}$ is free object.

Then
\begin{align}
    &p\mathbbm{K}(\cM,\cH_C)+(1-p)\mathbbm{K}(\cM^{\prime},\cH_C)\nonumber\\
    &\hspace{2cm}\geq\mathbbm{K}(p\cM+(1-p)\cM^{\prime},\cH_C)\nonumber\\
    or,~&\inf_{\cH_C}[p\mathbbm{K}(\cM,\cH_C)+(1-p)\mathbbm{K}(\cM^{\prime},\cH_C)]\nonumber\\
    &\hspace{2cm}\geq\inf_{\cH_C}\mathbbm{K}(p\cM+(1-p)\cM^{\prime},\cH_C)\nonumber\\
    or,~&\inf_{\cH_C}p\mathbbm{K}(\cM,\cH_C)+\inf_{\cH_C}(1-p)\mathbbm{K}(\cM^{\prime},\cH_C)\nonumber\\
    &\hspace{2cm}\geq\inf_{\cH_C}\mathbbm{K}(p\cM+(1-p)\cM^{\prime},\cH_C)\nonumber\\
    or,~&p\overline{\mathbbm{R}}(\mathcal{M})+(1-p)\overline{\mathbbm{R}}(\mathcal{M}^{\prime})\nonumber\\
    &\hspace{2cm}\geq\overline{\mathbbm{R}}(p\cM+(1-p)\cM^{\prime}).
\end{align}

Where in the second line we have used the fact that
$\overline{\mathbbm{R}}(\cM)+\overline{\mathbbm{R}}(\cM^{\prime})=\inf_{\cH_C}\mathbbm{K}(\cM,\cH_C)+\inf_{\cH_D}\mathbbm{K}(\cM^{\prime},\cH_D)=\inf_{\cH_C\otimes\cH_D}[\mathbbm{K}(\cM,\cH_C)+\mathbbm{K}(\cM^{\prime},\cH_D)]$ as both $\mathbbm{K}(\cM,\cH_C)$ and $\mathbbm{K}(\cM^{\prime},\cH_D)$ are positive real numbers.
\end{proof}

\begin{proposition} \label{prop:free}
    If marginalization is a free transformation (i.e., if assumption [A6] holds) for a given resource theory, then, $\overline{\mathbbm{R}}(\cM\otimes \cM^{\prime})\geq\overline{\mathbbm{R}}(\cM)$ for any $\cM=\{M_i\in\mathscr{M}(\cH_A)\}$ and any $\mathcal{N}=\{N_i\in\mathscr{M}(\mathcal{H}_B)\}$. \label{Proposi:dist_meas_merg_non_increas}
\end{proposition}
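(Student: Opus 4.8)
The plan is to obtain the bound from the monotonicity of $\overline{\mathbbm{R}}$ under free transformations (condition [R2], which holds because $\overline{\mathbbm{R}}$ was just shown to be a resource measure), using as the free operation the marginalization map, whose freeness is exactly the hypothesis [A6], followed by the invariance under appending a trivial measurement, Proposition~\ref{Proposi:dist_meas_inv_add_triv_meas}.

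First I would make explicit the action of marginalization on the product set. Writing $\cM=\{M_i\in\mathscr{M}(\cH_A)\}$ and $\cM^{\prime}=\{M^{\prime}_j\in\mathscr{M}(\cH_B)\}$, the product is $\cM\otimes\cM^{\prime}=\{M_i\otimes M^{\prime}_j\}_{i,j}$, each element acting on $\cH_A\otimes\cH_B$ with outcome set $\Omega_{M_i}\times\Omega_{M^{\prime}_j}$. Marginalizing every element over its $\cH_B$-outcomes replaces the $(i,j)$-th POVM element $M_i(a)\otimes M^{\prime}_j(b)$ by $\sum_b M_i(a)\otimes M^{\prime}_j(b)=M_i(a)\otimes\Id_{\cH_B}$, i.e. it returns $M_i\otimes\cT_{\cH_B}=\hat{M}_i$. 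Hence the set-marginalization $\mathbbm{Merg}_{\cY}$, with $\cY$ chosen to collapse the $\cM^{\prime}$-registers, sends $\cM\otimes\cM^{\prime}$ to $\{M_i\otimes\cT_{\cH_B}\}_{i,j}=\cM\otimes\cT^{(m)}_{\cH_B}$, where $\cT^{(m)}_{\cH_B}$ denotes the set of $m=|\cM^{\prime}|$ copies of the trivial measurement on $\cH_B$. Since $\mathbbm{Merg}_{\cY}$ is free by [A6], condition [R2] gives $\overline{\mathbbm{R}}(\cM\otimes\cM^{\prime})\geq\overline{\mathbbm{R}}(\cM\otimes\cT^{(m)}_{\cH_B})$, and it then remains only to identify the right-hand side with $\overline{\mathbbm{R}}(\cM)$.

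The hard part is precisely this last identification: the marginalized set carries $m$ duplicated copies of each $\hat{M}_i$ rather than a single trivial tensor factor, so Proposition~\ref{Proposi:dist_meas_inv_add_triv_meas} does not apply verbatim and one must check that the redundant copies leave the resource value unchanged. What I actually need is the lower bound $\overline{\mathbbm{R}}(\cM\otimes\cT^{(m)}_{\cH_B})\geq\overline{\mathbbm{R}}(\cM)$ (note that subadditivity, Proposition~\ref{Proposi:res_mes_ten_sub_addit}, only gives the opposite inequality, together with $\overline{\mathbbm{R}}(\cT^{(m)}_{\cH_B})=0$, which follows from [A5] and [R1]). To obtain the required lower bound I would invoke [R2] once more, for the free transformation that discards the redundant $j$-registers: a controlled implementation of the form \eqref{eq:far_gen_tra} that retains, for each $i$, a single representative $M_i\otimes\cT_{\cH_B}$, thereby mapping the $nm$-element set $\cM\otimes\cT^{(m)}_{\cH_B}$ onto the $n$-element set $\cM\otimes\cT_{\cH_B}$; since this map carries free sets to free sets it is a free transformation, so [R2] yields $\overline{\mathbbm{R}}(\cM\otimes\cT^{(m)}_{\cH_B})\geq\overline{\mathbbm{R}}(\cM\otimes\cT_{\cH_B})$.

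Finally I would close the chain with Proposition~\ref{Proposi:dist_meas_inv_add_triv_meas}, which gives $\overline{\mathbbm{R}}(\cM\otimes\cT_{\cH_B})=\overline{\mathbbm{R}}(\cM)$. Stringing the inequalities together, $\overline{\mathbbm{R}}(\cM\otimes\cM^{\prime})\geq\overline{\mathbbm{R}}(\cM\otimes\cT^{(m)}_{\cH_B})\geq\overline{\mathbbm{R}}(\cM\otimes\cT_{\cH_B})=\overline{\mathbbm{R}}(\cM)$, which is the claim. I expect the duplication bookkeeping of the middle step to be the only genuinely delicate point; for the concrete distance $\widetilde{\cD}$ of the previous subsection it is automatic, since the defining $\max_i$ is insensitive to repeated entries (one simply chooses the comparison free set to be independent of the $j$-index), but for a generic $\mathbbm{D}$ satisfying [D1]--[D5] the freeness of the discarding transformation is what makes the argument go through.
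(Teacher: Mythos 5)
Your argument is correct in substance but follows a genuinely different route from the paper's. The paper never invokes the monotonicity [R2] of $\overline{\mathbbm{R}}$ or Proposition~\ref{Proposi:dist_meas_inv_add_triv_meas}; instead it unfolds the definition \eqref{Eq:def_dist_res_meas} directly: it takes the optimal free comparison set $\bar{\cN}_{\cK_{ABC}}$ for $\hat{\cM}^{\prime\prime}_{\cK_{ABC}}$ with $\cM^{\prime\prime}=\cM\otimes\cM^{\prime}$, applies Property [D5] to contract the distance under $\mathbbm{Merg}_{\Omega_{\cM^{\prime}}}$, identifies the marginalized product with $\hat{\cM}_{\cK_{ABC}}$, uses [A6] only to certify that $\mathbbm{Merg}_{\Omega_{\cM^{\prime}}}(\bar{\cN}_{\cK_{ABC}})$ is still free, and thereby obtains $\mathbbm{K}(\cM\otimes\cM^{\prime},\cH_C)\geq\mathbbm{K}(\cM,\cH_B\otimes\cH_C)$ before taking $\inf_{\cH_C}\geq\inf_{\cH_B\otimes\cH_C}$. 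Note in particular that the paper keeps $\cH_B$ as part of the ancillary Hilbert space of $\cM$ rather than discarding it, so it needs neither your step~2 nor Proposition~\ref{Proposi:dist_meas_inv_add_triv_meas}. Your version buys modularity (everything is reduced to already-proven monotonicity statements about $\overline{\mathbbm{R}}$ itself), while the paper's version stays at the level of the distance and therefore needs fewer structural assumptions about which maps belong to $\cO$.

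The one place where the two proofs are not equivalent is exactly the multiplicity issue you flag. Your step~2 requires that the map discarding the redundant $j$-copies be a \emph{free} transformation; being realizable in the form \eqref{eq:far_gen_tra} and mapping free sets to free sets is necessary but not sufficient for membership in $\cO$, and none of [A1]--[A6] grants this, so for a generic $\mathbbm{D}$ your chain has a genuine (if mild) extra hypothesis. You should be aware, however, that the paper's proof quietly makes the same identification: since $\cM\otimes\cM^{\prime}$ is the doubly indexed set $\{M_i\otimes M^{\prime}_j\}_{i,j}$, the asserted equality $\mathbbm{Merg}_{\Omega_{\cM^{\prime}}}(\hat{\cM}^{\prime\prime}_{\cK_{ABC}})=\hat{\cM}_{\cK_{ABC}}$ equates an $nm$-element set with an $n$-element one, and the subsequent minimization over free sets is then taken over sets of the smaller cardinality. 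So the duplication bookkeeping is elided in the paper rather than resolved; your observation that it is harmless for $\widetilde{\cD}$ (the $\max_i$ ignores repeats, and one may choose the free comparison set constant in $j$) is the cleanest way to close it, and your explicit flagging of the issue is a point in your proof's favor rather than a defect.
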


\begin{proof}
    For notational simplicity let us define $\cM^{\prime\prime}:=\cM\otimes\cM^{\prime}$ and $\cK_{ABC}:=\cH_A\otimes\cH_B\otimes\cH_C$.
    
   Again let 

   \begin{align}
       \mathbbm{K}(\cM\otimes\cM^{\prime},\cH_C)=&\min_{\cN_{\cK_{ABC}}\in\cF_{\cK_{ABC}}}\mathbbm{D}(\hat{\cM}^{\prime\prime}_{\cK_{ABC}},\cN_{\cK_{ABC}})\nonumber\\
       =&\mathbbm{D}(\hat{\cM}^{\prime\prime}_{\cK_{ABC}},\bar{\cN}_{\cK_{ABC}}).
   \end{align}

   Then for an arbitrary $\cH_C$

    \begin{align}
        &\mathbbm{K}(\cM\otimes\cM^{\prime},\cH_C)\nonumber\\
        =&\mathbbm{D}(\hat{\cM}^{\prime\prime}_{\cK_{ABC}},\bar{\cN}_{\cK_{ABC}})\nonumber\\
        \geq&\mathbbm{D}(\mathbbm{Merg}_{\Omega_{\cM^{\prime}}}(\hat{\cM}^{\prime\prime}_{\cK_{ABC}}),\mathbbm{Merg}_{\Omega_{\cM^{\prime}}}(\bar{\cN}_{\cK_{ABC}}))\nonumber\\
        =&\mathbbm{D}(\hat{\cM}_{\cK_{ABC}},\mathbbm{Merg}_{\Omega_{\cM^{\prime}}}(\bar{\cN}_{\cK_{ABC}}))\nonumber\\
        \geq&\min_{\cN_{\cK_{ABC}}\in\cF_{\cK_{ABC}}}\mathbbm{D}(\hat{\cM}_{\cK_{ABC}},\cN_{\cK_{ABC}})\nonumber\\
        =&\mathbbm{K}(\cM,\cH_B\otimes\cH_C).
    \end{align}
    where in the third line, we have used the Property [D5] and in the fourth line we have used the obvious fact that $\mathbbm{Merg}_{\Omega_{\cM^{\prime}}}(\hat{\cM}^{\prime\prime}_{\cK_{ABC}})=\hat{\cM}_{\cK_{ABC}}$, and in the fifth line, we have used the fact that the marginalization is a free transformation (i.e., if assumption [A6] holds) and therefore, $\mathbbm{Merg}_{\Omega_{\cM^{\prime}}}(\bar{\cN}_{\cK_{ABC}})$ is a free object.

    Therefore,
    \begin{align}
        &\mathbbm{K}(\cM\otimes\cM^{\prime},\cH_C)\geq\mathbbm{K}(\cM,\cH_B\otimes\cH_C)~\forall \cH_C\nonumber\\
         or,~&\inf_{\cH_C}\mathbbm{K}(\cM\otimes\cM^{\prime},\cH_C)\geq\inf_{\cH_C}\mathbbm{K}(\cM,\cH_B\otimes\cH_C)\nonumber\\
         or,~&\overline{\mathbbm{R}}(\cM\otimes\cM^{\prime})\geq\inf_{\cH_C}\mathbbm{K}(\cM,\cH_B\otimes\cH_C)\nonumber\\
         or,~&\overline{\mathbbm{R}}(\cM\otimes\cM^{\prime})\geq\inf_{\cH_B\otimes\cH_C}\mathbbm{K}(\cM,\cH_B\otimes\cH_C)\nonumber\\
         or,~&\overline{\mathbbm{R}}(\cM\otimes\cM^{\prime})\geq\overline{\mathbbm{R}}(\cM) \label{Eq:dist_meas_ten_merg_dec}.
    \end{align}
    
\end{proof}

\begin{corollary}\label{cor:free}
    If marginalization is a free transformation (i.e., if assumption [A6] holds) for a given resource theory, then, we have $\overline{\mathbbm{R}}(\cM\otimes \cN)=\overline{\mathbbm{R}}(\cM)$ for any $\cM=\{M_i\in\mathscr{M}(\cH_A)\}$ and any $\cN=\{N_i\in\mathscr{M}(\cH_B)\}\in\cF_{\cH_B}$.
\end{corollary}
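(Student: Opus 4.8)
The plan is to obtain the equality by proving the two opposing inequalities, each of which reduces directly to a result already established in this subsection. For the lower bound, I would simply apply Proposition \ref{prop:free} with the choice $\cM^{\prime}=\cN$: that proposition guarantees $\overline{\mathbbm{R}}(\cM\otimes\cM^{\prime})\geq\overline{\mathbbm{R}}(\cM)$ for \emph{any} second set of measurements whenever marginalization is free (assumption [A6]), and the corollary assumes precisely this hypothesis. Note that this direction does not even use the freeness of $\cN$.

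For the reverse bound, I would use subadditivity under tensor product from Proposition \ref{Proposi:res_mes_ten_sub_addit}, which yields $\overline{\mathbbm{R}}(\cM\otimes\cN)\leq\overline{\mathbbm{R}}(\cM)+\overline{\mathbbm{R}}(\cN)$. Here the freeness of $\cN$ enters: since $\cN\in\cF_{\cH_B}$, the faithfulness property [R1]---established in the theorem showing that $\overline{\mathbbm{R}}$ is a resource measure---forces $\overline{\mathbbm{R}}(\cN)=0$. Substituting this gives $\overline{\mathbbm{R}}(\cM\otimes\cN)\leq\overline{\mathbbm{R}}(\cM)$, and combining with the lower bound yields the desired equality.

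There is no substantive obstacle in this argument, as both ingredients are already in hand; the only points demanding care are bookkeeping ones. First, I must confirm that the hypothesis of Proposition \ref{prop:free} matches the corollary's assumption so that the lower bound is licensed. Second, I should be careful that the vanishing of $\overline{\mathbbm{R}}(\cN)$ rests solely on $\cN$ being a free object (via faithfulness) and is unaffected by $\cM$, so that the upper bound remains valid for an arbitrary---possibly resourceful---$\cM$.
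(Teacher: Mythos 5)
Your proposal is correct and follows exactly the paper's own argument: the upper bound via subadditivity (Proposition \ref{Proposi:res_mes_ten_sub_addit}) combined with $\overline{\mathbbm{R}}(\cN)=0$ from faithfulness, and the lower bound via Proposition \ref{prop:free}. The two bookkeeping remarks you add (that the lower bound does not need freeness of $\cN$, and that the vanishing of $\overline{\mathbbm{R}}(\cN)$ is independent of $\cM$) are accurate and consistent with the paper.
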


\begin{proof}
  As $\cN\in\cF_{\cH_B}$, $\overline{\mathbbm{R}}(\cN)=0$. Therefore, from Proposition \ref{Proposi:res_mes_ten_sub_addit}, we get,
  \begin{align}
      \overline{\mathbbm{R}}(\cM\otimes\cN)\leq\overline{\mathbbm{R}}(\cM).
  \end{align}

Again from Proposition \ref{Proposi:dist_meas_merg_non_increas}, we have
 \begin{align}
      \overline{\mathbbm{R}}(\cM\otimes\cN)\geq\overline{\mathbbm{R}}(\cM).
  \end{align}
  
  Therefore, 
   \begin{align}
      \overline{\mathbbm{R}}(\cM\otimes\cN)=\overline{\mathbbm{R}}(\cM)\label{Eq:dist_meas_add_free_inv}.
  \end{align}
\end{proof}

Note that the Eq. \eqref{Eq:dist_meas_ten_merg_dec} and Eq. \eqref{Eq:dist_meas_add_free_inv} may hold even if the assumption [A6] does not hold as the statements of Proposition \ref{prop:free} and Corollary \ref{cor:free} are only sufficiency conditions.



\begin{proposition}
    If $\mathbbm{D}=\widetilde{\cD}$, then for any set of measurements $\cM\subset \mathscr{M}(\cH_A)$, we have
       \begin{align}
    \overline{\mathbbm{R}}(\cM)\leq&\min\left\{\frac{2\mathscr{R}(\cM)}{1+\mathscr{R}(\cM)},\frac{2\mathscr{W}(\cM)}{1+\mathscr{W}(\cM)}\right\}.    
    \end{align}
 \end{proposition}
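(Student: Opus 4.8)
The plan is to bound $\overline{\mathbbm{R}}(\cM)$ separately by each of the two quantities inside the minimum and then combine them. The starting point is that, by choosing $\cH_B$ to be the trivial one-dimensional Hilbert space in the definition \eqref{Eq:def_dist_res_meas}, we have $\hat{\cM}_{\cH_A}=\cM$ (since $\cM\otimes\cT_{\cH_B}\cong\cM$), so that
\begin{align}
\overline{\mathbbm{R}}(\cM)\leq\min_{\cN\in\cF_{\cH_A}}\widetilde{\cD}(\cM,\cN)\leq\widetilde{\cD}(\cM,\cN^\star)\nonumber
\end{align}
for any single free set $\cN^\star\in\cF_{\cH_A}$. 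The strategy is then to exhibit, for each of robustness and weight, a convenient free set $\cN^\star$ coming from the corresponding optimal decomposition, and to estimate $\widetilde{\cD}(\cM,\cN^\star)=\max_i\cD_{\Diamond}(\Gamma_{M_i},\Gamma_{N^\star_i})$ via Eq. \eqref{Eq:def_dist_meas_of_set_meas}.

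For the robustness bound, I would set $r=\mathscr{R}(\cM)$ and use the optimal decomposition $N_i=\frac{M_i}{1+r}+\frac{r\tilde M_i}{1+r}$ with $\cN=\{N_i\}\in\cF_{\cH_A}$ from Eq. \eqref{Eq:def_robust}. Using that the measure-prepare assignment $M\mapsto\Gamma_M$ is linear in the POVM elements, one computes $\Gamma_{N_i}-\Gamma_{M_i}=\frac{r}{1+r}(\Gamma_{\tilde M_i}-\Gamma_{M_i})$; since the diamond norm is positively homogeneous this yields $\cD_{\Diamond}(\Gamma_{M_i},\Gamma_{N_i})=\frac{r}{1+r}\cD_{\Diamond}(\Gamma_{M_i},\Gamma_{\tilde M_i})$. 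Because $\Gamma_{M_i}$ and $\Gamma_{\tilde M_i}$ are genuine quantum channels, Remark \ref{Re:diam_dist_half_fact} bounds their diamond distance by $2$, giving $\cD_{\Diamond}(\Gamma_{M_i},\Gamma_{N_i})\leq\frac{2r}{1+r}$ uniformly in $i$, and hence $\overline{\mathbbm{R}}(\cM)\leq\frac{2\mathscr{R}(\cM)}{1+\mathscr{R}(\cM)}$ (noting that $r\mapsto\frac{2r}{1+r}$ is increasing, so the minimal $r$ gives the sharpest bound).

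The weight bound is entirely parallel: setting $r=\mathscr{W}(\cM)$ and using the optimal decomposition $M_i=\frac{\tilde N_i}{1+r}+\frac{r\tilde M_i}{1+r}$ with $\tilde{\cN}=\{\tilde N_i\}\in\cF_{\cH_A}$ from Eq. \eqref{Eq:def_weight}, the same linearity gives $\Gamma_{M_i}-\Gamma_{\tilde N_i}=\frac{r}{1+r}(\Gamma_{\tilde M_i}-\Gamma_{\tilde N_i})$, and the identical $\leq 2$ estimate produces $\overline{\mathbbm{R}}(\cM)\leq\frac{2\mathscr{W}(\cM)}{1+\mathscr{W}(\cM)}$. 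Taking the minimum of the two bounds completes the argument. I do not expect a serious obstacle here; the computation is mechanical once the two decompositions are substituted. The only points requiring care are confirming that the sets $\cN$ and $\tilde{\cN}$ are genuinely free sets of measurements on $\cH_A$ (so that they are admissible in the optimization defining $\overline{\mathbbm{R}}$) and correctly tracking the non-normalized convention of Remark \ref{Re:diam_dist_half_fact}, in which the diamond distance between two channels is at most $2$ rather than $1$---this is precisely the origin of the factor $2$ in the numerators.
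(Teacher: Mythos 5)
Your proposal is correct and follows essentially the same route as the paper: substitute the optimal robustness/weight decompositions, use linearity of $M\mapsto\Gamma_M$ and positive homogeneity of the diamond norm to extract the factor $\frac{r}{1+r}$, and bound the remaining channel distance by $2$ per Remark \ref{Re:diam_dist_half_fact}. The only (immaterial) difference is that you fix a trivial ancilla $\cH_B$ to upper-bound the infimum in Eq.~\eqref{Eq:def_dist_res_meas}, whereas the paper carries an arbitrary $\cH_B$ through and observes the bound holds uniformly.
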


\begin{proof}
We have $\mathbbm{D}=\widetilde{\cD}$. Suppose in Eq. \eqref{Eq:def_robust}, the minimum occurs for $\tilde{\cM}=\cM^*=\{M_i^*\}$ and $ \mathscr{R}(\cM)=r^*$. Then  
\begin{align}
&N_i=\frac{M_i}{1+r^*}+ \frac{r^*M^*_i}{1+r^*}~\forall i\nonumber\\
&or,~N_i\otimes\cT_{\cB}=\frac{M_i\otimes\cT_{\cB}}{1+r^*}+ \frac{r^*\tilde{M}_i\otimes\cT_{\cB}}{1+r^*}~\forall i,\cH_B\nonumber\\
&or,~(\hat{N}_i)_{\cH_A\otimes\cH_B}=\frac{(\hat{M}_i)_{\cH_A\otimes\cH_B}}{1+r^*}+ \frac{r^*(\hat{M}^*_i)_{\cH_A\otimes\cH_B}}{1+r^*}~\forall i,\cH_B\nonumber\\
&or,~(\hat{M}_i)_{\cH_A\otimes\cH_B}-(\hat{N}_i)_{\cH_A\otimes\cH_B}\nonumber\\
&\hspace{2cm}=\frac{r^*[(\hat{M}_i)_{\cH_A\otimes\cH_B}-(\hat{M}^*_i)_{\cH_A\otimes\cH_B}]}{1+r^*}~\forall i,\cH_B\nonumber\\
&or,~\cD_{\Diamond}((\hat{M}_i)_{\cH_A\otimes\cH_B},(\hat{N}_i)_{\cH_A\otimes\cH_B})\nonumber\\
&\hspace{2cm}=\frac{r^*}{1+r^*}\cD_{\Diamond}((\hat{M}_i)_{\cH_A\otimes\cH_B},(\hat{M}^*_i)_{\cH_A\otimes\cH_B})~\forall i,\cH_B\nonumber\\
&or,~\cD_{\Diamond}((\hat{M}_i)_{\cH_A\otimes\cH_B},(\hat{N}_i)_{\cH_A\otimes\cH_B})\nonumber\\
&\hspace{2cm}\leq\frac{2r^*}{1+r^*}~\forall i,\cH_B\label{Eq:diam_robust_ineq}
\end{align}
In the last inequality, we have used the fact that $\cD_{\Diamond}((\hat{M}_i)_{\cH_A\otimes\cH_B},(\hat{M}^*_i)_{\cH_A\otimes\cH_B})\leq 2$ (see Remark \ref{Re:diam_dist_half_fact}). Let 
\begin{align}
\widetilde{\cD}(\hat{M}_{\cH_A\otimes\cH_B},\hat{\cN}_{\cH_A\otimes\cH_B})=\cD_{\Diamond}((\hat{M}_{i^*})_{\cH_A\otimes\cH_B},(\hat{N}_{i^*})_{\cH_A\otimes\cH_B})\label{Eq:ov_dist_robust},
\end{align}
and 
\begin{align}
    \mathbbm{K}(\cM,\cH_B)=&\min_{\cN_{\cH_A\otimes\cH_B}\in\cF_{\cH_A\otimes\cH_B}}\widetilde{\cD}(\hat{\cM}_{\cH_A\otimes\cH_C},\cN_{\cH_A\otimes\cH_B})\label{Eq:K_robust}.
\end{align}

Then observing (from  Eq.\eqref{Eq:def_robust}, the assumption [A1] and the assumption [A5]) that $(\hat{N}_{i^*})_{\cH_A\otimes\cH_B}\in\cF_{\cH_A\otimes\cH_B}$, from Eq. \eqref{Eq:diam_robust_ineq}, Eq. \eqref{Eq:ov_dist_robust} and Eq.  \eqref{Eq:K_robust}, we have for all $\cH_B$
\begin{align}
    \mathbbm{K}(\cM,\cH_B)&\leq \widetilde{\cD}(\hat{M}_{\cH_A\otimes\cH_B},\hat{\cN}_{\cH_A\otimes\cH_B})\nonumber\\
    &= \cD_{\Diamond}((\hat{M}_{i^*})_{\cH_A\otimes\cH_B},(\hat{N}_{i^*})_{\cH_A\otimes\cH_B})\nonumber\\
    &\leq\frac{2r^*}{1+r^*}.
\end{align}
Then recalling $\mathscr{R}(\cM)=r^*$, noting the fact that $\mathbbm{D}=\overline{\cD}$ and from Eq. \eqref{Eq:def_dist_res_meas}, we have
\begin{align}
\label{loosebound}
    \overline{\mathbbm{R}}(\cM)&\leq \mathbbm{K}(\cM,\cH_B)\nonumber\\
    &\leq\frac{2\mathscr{R}(\cM)}{1+\mathscr{R}(\cM)}.
\end{align}
Suppose in Eq. \eqref{Eq:def_weight}, the minimum occurs for $\tilde{\cM}=\cM^*=\{M_i^*\}$, $\tilde{\cN}=\cN^*=\{N_i^*\}$ and $ \mathscr{W}(\cM)=r^*$. Then  
\begin{align}
&M_i=\frac{N^*_i}{1+r}+ \frac{r^*M^*_i}{1+r}~\forall i\nonumber\\
&or,~M_i\otimes\cT_{\cB}=\frac{N^*_i\otimes\cT_{\cB}}{1+r}+ \frac{r^*M^*_i\otimes\cT_{\cB}}{1+r}~\forall i,\cH_B\nonumber\\
&or,~(\hat{M}_i)_{\cH_A\otimes\cH_B}
=\frac{(\hat{N}^*_i)_{\cH_A\otimes\cH_B}}{1+r}+ \frac{r^*(\hat{M}^*_i)_{\cH_A\otimes\cH_B}}{1+r}~\forall i,\cH_B\nonumber\\
&or,~(\hat{M}_i)_{\cH_A\otimes\cH_B}-(\hat{N}_i)_{\cH_A\otimes\cH_B}\nonumber\\
&\hspace{2cm}=\frac{r^*[(\hat{M}^*_i)_{\cH_A\otimes\cH_B}-(\hat{N}^*_i)_{\cH_A\otimes\cH_B}]}{1+r^*}~\forall i,\cH_B\nonumber\\
&or,~\cD_{\Diamond}((\hat{M}_i)_{\cH_A\otimes\cH_B},(\hat{N}_i)_{\cH_A\otimes\cH_B})\nonumber\\
&\hspace{2cm}=\frac{r^*}{1+r^*}\cD_{\Diamond}((\hat{M}^*_i)_{\cH_A\otimes\cH_B},(\hat{N}^*_i)_{\cH_A\otimes\cH_B})~\forall i,\cH_B\nonumber\\
&or,~\cD_{\Diamond}((\hat{M}_i)_{\cH_A\otimes\cH_B},(\hat{N}_i)_{\cH_A\otimes\cH_B})\nonumber\\
&\hspace{2cm}\leq\frac{2r^*}{1+r^*}~\forall i,\cH_B\label{Eq:diam_weight_ineq}
\end{align}
In the last inequality, we have used the fact that $\cD_{\Diamond}((\hat{M}^*_i)_{\cH_A\otimes\cH_B},(\hat{N}^*_i)_{\cH_A\otimes\cH_B})\leq 2$. Let 
\begin{align}
\widetilde{\cD}(\hat{M}_{\cH_A\otimes\cH_B},\hat{\cN}_{\cH_A\otimes\cH_B})=\cD_{\Diamond}((\hat{M}_{i^*})_{\cH_A\otimes\cH_B},(\hat{N}_{i^*})_{\cH_A\otimes\cH_B})\label{Eq:ov_dist_weight},
\end{align}
and 
\begin{align}
    \mathbbm{K}(\cM,\cH_B)=&\min_{\cN_{\cH_A\otimes\cH_B}\in\cF_{\cH_A\otimes\cH_B}}\widetilde{\cD}(\hat{\cM}_{\cH_A\otimes\cH_C},\cN_{\cH_A\otimes\cH_B})\label{Eq:K_weight}.
\end{align}
Then (from Eq. \eqref{Eq:def_weight}, the assumption [A1] and the assumption [A5]) that $(\hat{N}_{i^*})_{\cH_A\otimes\cH_B}\in\cF_{\cH_A\otimes\cH_B}$, from Eq. \eqref{Eq:diam_weight_ineq}, Eq. \eqref{Eq:ov_dist_weight} and Eq.  \eqref{Eq:K_weight}, we have for all $\cH_B$
\begin{align}
    \mathbbm{K}(\cM,\cH_B)&\leq \widetilde{\cD}(\hat{M}_{\cH_A\otimes\cH_B},\hat{\cN}_{\cH_A\otimes\cH_B})\nonumber\\
    &= \cD_{\Diamond}((\hat{M}_{i^*})_{\cH_A\otimes\cH_B},(\hat{N}_{i^*})_{\cH_A\otimes\cH_B})\nonumber\\
    & \leq \frac{2r^*}{1+r^*}.
\end{align}
Then recalling $\mathscr{W}(\cM)=r^*$, noting the fact that $\mathbbm{D}=\widetilde{\cD}$ and from Eq. \eqref{Eq:def_dist_res_meas}, we have
\begin{align}
\label{looseboundW}
    \overline{\mathbbm{R}}(\cM)&\leq \mathbbm{K}(\cM,\cH_B)\nonumber\\
    &\leq\frac{2\mathscr{W}(\cM)}{1+\mathscr{W}(\cM)}.
\end{align}
Therefore, we finally have 
\begin{align}
\label{eq:tightboundRW}
\overline{\mathbbm{R}}(\cM)\leq&\min\left\{\frac{2\mathscr{R}(\cM)}{1+\mathscr{R}(\cM)},\frac{2\mathscr{W}(\cM)}{1+\mathscr{W}(\cM)}\right\}.
\end{align}
\end{proof}
Clearly, Eq.~\eqref{eq:tightboundRW} constitute a tighter bound of $\overline{\mathbbm{R}}(\cM)$ than those given in Eq.~\eqref{loosebound} and Eq.~\eqref{looseboundW}. Now, it is well known that resource robustness and resource weight are directly linked to operational tasks such as state discrimination \cite{Skrzypczyk_incomp_state_disc} and exclusion \cite{Uola_q_res_exclus}. Therefore, through Proposition 6, we try to \emph{relate} our distance-based resource measure to these operational tasks.

\subsection{Some properties of $\epsilon$-measures of measurement-based resources}
\label{Subsec:properties_epsilon_measure}
In this section, we study the mathematical properties for $\epsilon$-measures for generic measurement-based resource measure $\mathbbm{R}$ satisfying properties [R1]-[R3]. We start with the following important theorem.

\begin{theorem}
    If $\mathbbm{R}$ is a measurement-based resource monotone then $\mathbbm{R}^{\mathbbm{D}}_{inf,\epsilon}$ is also a resource monotone for any distance measure $\mathbbm{D}$.
\end{theorem}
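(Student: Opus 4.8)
The plan is as follows. A resource monotone is required to obey non-negativity and the monotonicity condition [R2]. Non-negativity is immediate: every competitor $\cN$ in the infimum defining $\mathbbm{R}^{\mathbbm{D}}_{inf,\epsilon}(\cM)$ satisfies $\mathbbm{R}(\cN)\geq 0$ by [R1], so the infimum is itself non-negative. Hence the entire content of the theorem lies in verifying [R2]. I therefore fix an arbitrary free transformation $\cW$ carrying sets of measurements in $\mathscr{M}(\cH)$ to sets of measurements in $\mathscr{M}(\cK)$, and aim to show $\mathbbm{R}^{\mathbbm{D}}_{inf,\epsilon}(\cM)\geq\mathbbm{R}^{\mathbbm{D}}_{inf,\epsilon}(\cW[\cM])$.

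The central idea is that $\cW$ maps every competitor in the infimum for $\cM$ to a legitimate competitor in the infimum for $\cW[\cM]$. Concretely, let $\cN\subset\mathscr{M}(\cH)$ be any set feasible for $\cM$, i.e.\ $|\cN|=|\cM|$ and $\mathbbm{D}(\cM,\cN)\leq\epsilon$, and consider $\cW[\cN]\subset\mathscr{M}(\cK)$. First, since $\cW$ has the fixed structure of Eq.~\eqref{eq:far_gen_tra}, the cardinality of its output is governed by the number of post-processings and does not depend on the particular input set of a given size; applying the same $\cW$ to the equal-sized sets $\cM$ and $\cN$ therefore gives $|\cW[\cN]|=|\cW[\cM]|$. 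Second, invoking Property [D3]—that $\mathbbm{D}$ is non-increasing under free transformations—together with $\mathbbm{D}(\cM,\cN)\leq\epsilon$, I obtain
\begin{align}
\mathbbm{D}(\cW[\cM],\cW[\cN])\leq\mathbbm{D}(\cM,\cN)\leq\epsilon .
\end{align}
Thus $\cW[\cN]$ lies in the feasible set appearing in the definition of $\mathbbm{R}^{\mathbbm{D}}_{inf,\epsilon}(\cW[\cM])$.

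With feasibility established, I bound the $\epsilon$-measure of the image by this particular competitor and then exploit the monotonicity [R2] of the underlying measure $\mathbbm{R}$:
\begin{align}
\mathbbm{R}^{\mathbbm{D}}_{inf,\epsilon}(\cW[\cM])\leq\mathbbm{R}(\cW[\cN])\leq\mathbbm{R}(\cN).
\end{align}
Because this holds for every $\cN$ feasible for $\cM$, I take the infimum over all such $\cN$ on the right-hand side, which is exactly $\mathbbm{R}^{\mathbbm{D}}_{inf,\epsilon}(\cM)$, giving $\mathbbm{R}^{\mathbbm{D}}_{inf,\epsilon}(\cW[\cM])\leq\mathbbm{R}^{\mathbbm{D}}_{inf,\epsilon}(\cM)$, as required.

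The step deserving the most care—and the main obstacle—is the feasibility check for $\cW[\cN]$: one must confirm both that $\cW[\cN]$ genuinely lives in $\mathscr{M}(\cK)$ with the same number of elements as $\cW[\cM]$, so that it is an admissible competitor, and that the distance constraint is preserved, which is precisely where contractivity [D3] is indispensable. A minor subtlety is that the infima need not be attained, but this is harmless: the argument produces the inequality $\mathbbm{R}^{\mathbbm{D}}_{inf,\epsilon}(\cW[\cM])\leq\mathbbm{R}(\cN)$ for each competitor and only afterwards passes to the infimum, so no attainment or compactness hypothesis is needed beyond what makes the infima well defined. Note that faithfulness in the strict sense is \emph{not} claimed, consistent with the fact that an $\epsilon$-measure can vanish on a resourceful object whenever a free object lies within distance $\epsilon$.
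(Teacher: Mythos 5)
Your proof is correct and follows essentially the same route as the paper's: map a feasible competitor $\cN$ through $\cW$, use contractivity [D3] to certify feasibility of $\cW[\cN]$ for $\cW[\cM]$, and then apply monotonicity [R2] of $\mathbbm{R}$. The only (harmless) difference is that you work with arbitrary competitors and pass to the infimum at the end, whereas the paper assumes the minimizer $\cN^*$ is attained (which it justifies elsewhere by compactness of the $\epsilon$-ball in finite dimensions).
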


\begin{proof}
    If $\mathbbm{R}$ is a resource monotone then $\mathbbm{R}(\cW[\cM])\leq \mathbbm{R}(\cM)$ for any free transformation $\cW$ and any set of measurements $\cM$. Let $\cN^*$ be the set of measurements for which the minimum occurs in Eq. \eqref{Eq:def_eps_inf}. Therefore, $\mathbbm{D}(\cM,\cN^*)\leq\epsilon$ and so that we have, $\mathbbm{D}(\cW(\cM),\cW(\cN^*))\leq\mathbbm{D}(\cM,\cN^*)\leq\epsilon$ (by the assumption [D3]). We can then write, 
    \begin{align}
        \mathbbm{R}^{\mathbbm{D}}_{inf,\epsilon}(\cM)&=\mathbbm{R}(\cN^*)\nonumber\\
        &\geq \mathbbm{R}(\cW(\cN^*))\nonumber\\
        &\geq \mathbbm{R}^{\mathbbm{D}}_{inf,\epsilon}(\cW(\cM)).
    \end{align}
    
\end{proof}

\begin{proposition}
    For any convex resource measure $\mathbbm{R}$, the $\epsilon$-measure $\mathbbm{R}^{\mathbbm{D}}_{inf,\epsilon}$ is also a convex resource measure for any distance $\mathbbm{D}$ that is jointly convex.
\end{proposition}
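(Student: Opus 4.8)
The plan is to exhibit an explicit feasible point for the infimum defining $\mathbbm{R}^{\mathbbm{D}}_{inf,\epsilon}$ at the mixed set, built from (near-)optimal points of the individual sets, and then bound its value using convexity of $\mathbbm{R}$. Concretely, let $\{\cM_i\}_{i=1}^n\subset\mathscr{M}(\cH)$ be sets of measurements of a common cardinality and let $\{p_i\}_{i=1}^n$ be a probability distribution. Since we work in finite dimension, the feasible region $\{\cN\mid\mathbbm{D}(\cM_i,\cN)\leq\epsilon,\ |\cN|=|\cM_i|\}$ is compact and $\mathbbm{R}$ is continuous, so for each $i$ the infimum is attained by some $\cN_i^*$ with $\mathbbm{D}(\cM_i,\cN_i^*)\leq\epsilon$ and $\mathbbm{R}(\cN_i^*)=\mathbbm{R}^{\mathbbm{D}}_{inf,\epsilon}(\cM_i)$.

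First I would check that $\cN^*:=\sum_i p_i\cN_i^*$ is admissible for the infimum defining $\mathbbm{R}^{\mathbbm{D}}_{inf,\epsilon}$ at $\sum_i p_i\cM_i$. It has the correct cardinality, since a convex combination of sets of equal cardinality preserves the number of elements, and by the joint convexity of $\mathbbm{D}$ (Property [D1], extended from two to $n$ arguments by an immediate induction),
\begin{align}
\mathbbm{D}\!\left(\sum_i p_i\cM_i,\ \sum_i p_i\cN_i^*\right)\leq\sum_i p_i\,\mathbbm{D}(\cM_i,\cN_i^*)\leq\sum_i p_i\,\epsilon=\epsilon.
\end{align}
Hence $\cN^*$ lies in the feasible set over which $\mathbbm{R}^{\mathbbm{D}}_{inf,\epsilon}(\sum_i p_i\cM_i)$ is defined.

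Next I would bound the objective. Because $\cN^*$ is feasible, the infimum is no larger than $\mathbbm{R}(\cN^*)$, and then convexity of $\mathbbm{R}$ (Property [R3]) yields
\begin{align}
\mathbbm{R}^{\mathbbm{D}}_{inf,\epsilon}\!\left(\sum_i p_i\cM_i\right)&\leq\mathbbm{R}(\cN^*)=\mathbbm{R}\!\left(\sum_i p_i\cN_i^*\right)\nonumber\\
&\leq\sum_i p_i\,\mathbbm{R}(\cN_i^*)=\sum_i p_i\,\mathbbm{R}^{\mathbbm{D}}_{inf,\epsilon}(\cM_i),
\end{align}
which is exactly the convexity condition [R3] for $\mathbbm{R}^{\mathbbm{D}}_{inf,\epsilon}$. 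Combined with the monotonicity established in the previous theorem and the non-negativity inherited directly from $\mathbbm{R}\geq0$, this shows that $\mathbbm{R}^{\mathbbm{D}}_{inf,\epsilon}$ is a convex resource measure.

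The argument is essentially a two-line combination of the two convexity inputs, so there is no deep obstacle; the only point requiring care is that the convex combination of the individual minimizers remains admissible, i.e. stays within distance $\epsilon$ of the mixed set, and this is precisely where joint convexity of $\mathbbm{D}$ is indispensable. If one prefers not to invoke attainment of the infimum, the same conclusion follows by selecting, for each $\delta>0$, sets $\cN_i^\delta$ with $\mathbbm{D}(\cM_i,\cN_i^\delta)\leq\epsilon$ and $\mathbbm{R}(\cN_i^\delta)\leq\mathbbm{R}^{\mathbbm{D}}_{inf,\epsilon}(\cM_i)+\delta$, running the identical estimate, and letting $\delta\to0$ at the end.
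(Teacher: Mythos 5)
Your proposal is correct and follows essentially the same route as the paper: take the optimizers $\cN_i^*$ for each $\cM_i$, use joint convexity of $\mathbbm{D}$ to show their convex combination is feasible for the mixed set, then apply convexity of $\mathbbm{R}$. The only differences are cosmetic — you treat $n$ terms rather than two and add a careful remark about attainment of the infimum (or the $\delta$-approximation workaround), which the paper glosses over.
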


\begin{proof}

Let $\cN_{1}^{*}$ and $\cN_{2}^{*}$ are the sets of measurements for which we get minimums in Eq.\eqref{Eq:def_eps_inf} corresponding to $\mathbbm{R}^{\mathbbm{D}}_{inf,\epsilon}(\cM_{1})$ and $\mathbbm{R}^{\mathbbm{D}}_{inf,\epsilon}(\cM_{2})$ respectively. Clearly, we then have $\mathbbm{D}(\cM_{1},\cN_{1}^{*})\leq\epsilon$ and $\mathbbm{D}(\cM_{2},\cN_{2}^{*})\leq\epsilon$. Now, from the joint convexity of the distance $\mathbbm{D}$ i.e., from the assumption [D1], we obtain

\begin{align}
    &\mathbbm{D}(p\cM_{1}+(1-p)\cM_{2},p\cN_{1}^{*}+(1-p\cN_{2}^{*}))\nonumber\\
    \leq& p\mathbbm{D}(\cM_{1},\cN_{1}^{*})+(1-p)\mathbbm{D}(\cM_{2},\cN_{2}^{*})\nonumber\\
    \leq& p\epsilon+(1-p)\epsilon\nonumber\\
    =&\epsilon\label{Eq:gen_dist_conv_eps}.
\end{align}

Then,
    \begin{align}
        &p\mathbbm{R}^{\mathbbm{D}}_{inf,\epsilon}(\cM_{1})+(1-p)\mathbbm{R}^{\mathbbm{D}}_{inf,\epsilon}(\cM_{2}) \nonumber\\
        &= p\mathbbm{R}(\cN_{1}^{*}))+(1-p)\mathbbm{R}(\cN_{2}^{*})) \nonumber\\
          &\geq \mathbbm{R}( p(\cN_{1}^{*})+(1-p)\cN_{2}^{*}) \nonumber\\
        &\geq \mathbbm{R}^{\mathbbm{D}}_{inf,\epsilon}( p(\cM_1)+(1-p)\cM_2),
    \end{align}
\end{proof}
where the first inequality is due to the convexity of resource measure $\mathbbm{R}$, whereas the second inequality follows from Eq. \eqref{Eq:gen_dist_conv_eps} and Eq. \eqref{Eq:def_eps_inf}.

  \begin{proposition}\label{prop:subadepsilon}
    For any resource measure $\mathbbm{R}$ that is sub-additive under tensor product, and for any $\epsilon=\epsilon_{1}+\epsilon_{2}$ the following relation is satisfied:
    \begin{equation}
       \mathbbm{R}^{\mathbbm{D}}_{inf,\epsilon_{1}}(\cM_{1})+\mathbbm{R}^{\mathbbm{D}}_{inf,\epsilon_{2}}(\cM_{2}) \geq \mathbbm{R}^{\mathbbm{D}}_{inf,\epsilon}(\cM_{1}\otimes\cM_{2}),
    \end{equation} 
    where $\mathbbm{D}$ is a generic distance satisfying joint subadditivity under tensor product.\\
\end{proposition}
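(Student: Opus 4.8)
The plan is to build, out of the optimizers of the two left-hand $\epsilon$-measures, an explicit feasible candidate for the infimum defining $\mathbbm{R}^{\mathbbm{D}}_{inf,\epsilon}(\cM_1\otimes\cM_2)$, and then to bound its resource content using the subadditivity of $\mathbbm{R}$ under tensor product. This is the natural ``trial point'' strategy, exactly as in the proofs of the convexity and monotonicity of $\mathbbm{R}^{\mathbbm{D}}_{inf,\epsilon}$ earlier in this subsection.

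First I would fix the optimizers. Because we work in finite dimension, the feasible set $\{\cN : \mathbbm{D}(\cM,\cN)\le\epsilon,\ |\cN|=|\cM|\}$ is compact (as noted right after Eq.~\eqref{Eq:def_eps_inf}) and $\mathbbm{R}$ is assumed continuous, so each infimum is attained. Let $\cN_1^*$ attain $\mathbbm{R}^{\mathbbm{D}}_{inf,\epsilon_1}(\cM_1)$ and $\cN_2^*$ attain $\mathbbm{R}^{\mathbbm{D}}_{inf,\epsilon_2}(\cM_2)$, so that $\mathbbm{D}(\cM_1,\cN_1^*)\le\epsilon_1$, $\mathbbm{D}(\cM_2,\cN_2^*)\le\epsilon_2$, and $\mathbbm{R}^{\mathbbm{D}}_{inf,\epsilon_i}(\cM_i)=\mathbbm{R}(\cN_i^*)$ for $i=1,2$.

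Second, I would verify that $\cN_1^*\otimes\cN_2^*$ is feasible for the right-hand-side problem. By the joint subadditivity of $\mathbbm{D}$ under tensor product (property [D2]), $\mathbbm{D}(\cM_1\otimes\cM_2,\cN_1^*\otimes\cN_2^*)\le\mathbbm{D}(\cM_1,\cN_1^*)+\mathbbm{D}(\cM_2,\cN_2^*)\le\epsilon_1+\epsilon_2=\epsilon$. The cardinality constraint is automatic, since by the definition of the tensor product of sets we have $|\cN_1^*\otimes\cN_2^*|=|\cN_1^*|\,|\cN_2^*|=|\cM_1|\,|\cM_2|=|\cM_1\otimes\cM_2|$. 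Hence $\cN_1^*\otimes\cN_2^*$ lies in the feasible set of the infimum defining $\mathbbm{R}^{\mathbbm{D}}_{inf,\epsilon}(\cM_1\otimes\cM_2)$, which gives $\mathbbm{R}^{\mathbbm{D}}_{inf,\epsilon}(\cM_1\otimes\cM_2)\le\mathbbm{R}(\cN_1^*\otimes\cN_2^*)$. Applying subadditivity of $\mathbbm{R}$ then yields $\mathbbm{R}(\cN_1^*\otimes\cN_2^*)\le\mathbbm{R}(\cN_1^*)+\mathbbm{R}(\cN_2^*)=\mathbbm{R}^{\mathbbm{D}}_{inf,\epsilon_1}(\cM_1)+\mathbbm{R}^{\mathbbm{D}}_{inf,\epsilon_2}(\cM_2)$, and chaining the two displays closes the argument.

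I expect no deep obstacle here: the whole content is combining the two subadditivity hypotheses (one on $\mathbbm{D}$, one on $\mathbbm{R}$) with a tensor-product trial point. The only technical point worth care is the attainment of the infima, which I justify by compactness plus continuity of $\mathbbm{R}$. Should one wish to bypass attainment, I would instead take $\delta$-optimal sets $\cN_i^{\delta}$ with $\mathbbm{R}(\cN_i^{\delta})\le\mathbbm{R}^{\mathbbm{D}}_{inf,\epsilon_i}(\cM_i)+\delta$, run the identical feasibility-plus-subadditivity estimate, and let $\delta\to 0$. A secondary bookkeeping point is the index matching in tensor products of sets, handled by the cardinality check above.
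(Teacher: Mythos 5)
Your proof is correct and follows essentially the same route as the paper's: both take the optimizers $\cN_1^*$, $\cN_2^*$, show $\cN_1^*\otimes\cN_2^*$ is feasible for the tensor-product problem via joint subadditivity of $\mathbbm{D}$, and then apply subadditivity of $\mathbbm{R}$. Your explicit justification of attainment of the infima (compactness plus continuity, with the $\delta$-optimal fallback) is a welcome extra precision that the paper leaves implicit.
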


\begin{proof}

Let $\cN_{1}^{*}$ and $\cN_{2}^{*}$ be the sets of measurements for which minimums of Eq.\eqref{Eq:def_eps_inf} occur corresponding to $\mathbbm{R}^{\mathbbm{D}}_{inf,\epsilon_{1}}(\cM_{1})$ and $\mathbbm{R}^{\mathbbm{D}}_{inf,\epsilon_{2}}(\cM_{2})$ respectively.

Now, in order to prove the proposition, let us first note that for any generic distance measure $\mathbbm{D}$ that satisfies joint subadditivity under tensor product the following relation holds.

\begin{align}\label{eq:subDepsilon}
    \mathbbm{D}(\cM_{1}\otimes\cM_{2}, \cN_{1}^{*}\otimes\cN_{2}^{*})&\leq\mathbbm{D}(\cM_{1}, \cN_{1}^{*})+\mathbbm{D}(\cM_{2}, \cN_{2}^{*}) \nonumber \\
    &\leq \epsilon_{1}+\epsilon_{2}=\epsilon,
\end{align}

where the last inequality is due to the fact that $\cN_{1}^{*}$ and $\cN_{2}^{*}$ are those sets of measurements for which minimums of Eq.\eqref{Eq:def_eps_inf} occur corresponding to $\mathbbm{R}^{\mathbbm{D}}_{inf,\epsilon_{1}}(\cM_{1})$ and $\mathbbm{R}^{\mathbbm{D}}_{inf,\epsilon_{2}}(\cM_{2})$ respectively.

Now we have,
\begin{align}
  \mathbbm{R}^{\mathbbm{D}}_{inf,\epsilon_{1}}(\cM_{1})+\mathbbm{R}^{\mathbbm{D}}_{inf,\epsilon_{2}}(\cM_{2}) &= \mathbbm{R}(\cN_{1}^{*})+\mathbbm{R}(\cN_{2}^{*}) \nonumber\\
  &\geq\mathbbm{R}(\cN_{1}^{*}\otimes\cN_{2}^{*}),
\end{align}
where the inequality is due to the subadditivity of the resource measure $\mathbbm{R}$ under tensor product. Now, due to Eq. \eqref{eq:subDepsilon} we can write, 

\begin{align}
    \mathbbm{R}(\cN_{1}^{*}\otimes\cN_{2}^{*})&\geq \inf_{\substack{\mathbbm{D}(\cM_{1}\otimes\cM_{2}, \cN_{1}\otimes\cN_{2})\leq \epsilon\\ ~|\cM_{i}|=|\cN_{i}|, ~i\in \{1,2\}} } \mathbbm{R}(\cN_{1}\otimes\cN_{2})\nonumber\\
    &= \mathbbm{R}^{\mathbbm{D}}_{inf,\epsilon}(\cM_{1}\otimes\cM_{2}).
\end{align}


Combination of the above two equations proves the proposition.

\end{proof}

\begin{proposition}
    Consider a generic distance $\mathbbm{D}$ and a resource measure $\mathbbm{R}$ such that two sets of measurements $\cM_{1}$ and  $\cM_{2}$ have a distance $\mathbbm{D}(\cM_{1},\cM_{2})=t$. Then for $\mathbbm{R}^{\mathbbm{D}}_{inf,\epsilon_{i}}(\cM_{i})=\mathbbm{R}(\cN_{i}^{*})$ and a convex mixture $\cS_{p}=(1-p)\cM_{1}+p\cN_{2}^{*}$ of the sets of measurements  with $0 \leq p \leq \frac{\epsilon_{1}}{t+\epsilon_{2}}$ the following relation holds:
    \begin{equation}
       \mathbbm{R}^{\mathbbm{D}}_{inf,\epsilon_{1}}(\cM_{1})-\mathbbm{R}^{\mathbbm{D}}_{inf,\epsilon_{2}}(\cM_{2}) \leq (1-p)[\mathbbm{R}(\cM_{2})-\mathbbm{R}^{\mathbbm{D}}_{inf,\epsilon_{2}}(\cM_{2})].
    \end{equation} 

\end{proposition}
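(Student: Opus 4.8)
The inequality to establish is, after substituting $\mathbbm{R}^{\mathbbm{D}}_{inf,\epsilon_2}(\cM_2)=\mathbbm{R}(\cN_2^*)$ on the left and rearranging, equivalent to the single bound
\begin{align}
\mathbbm{R}^{\mathbbm{D}}_{inf,\epsilon_1}(\cM_1)\leq (1-p)\mathbbm{R}(\cM_2)+p\,\mathbbm{R}(\cN_2^*).
\end{align}
The plan is therefore to exhibit one admissible competitor in the feasible set of the infimum defining $\mathbbm{R}^{\mathbbm{D}}_{inf,\epsilon_1}(\cM_1)$ whose resource value is at most this right-hand side. The weights appearing on the right-hand side single out the convex mixture $(1-p)\cM_2+p\cN_2^*$ as that competitor: by convexity [R3] of $\mathbbm{R}$ one immediately has $\mathbbm{R}\big((1-p)\cM_2+p\cN_2^*\big)\leq (1-p)\mathbbm{R}(\cM_2)+p\,\mathbbm{R}(\cN_2^*)$, so the value bound comes for free and the entire content of the argument is the feasibility check. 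Note also that this mixture has the same number of elements as $\cM_1$ (since $|\cM_1|=|\cM_2|=|\cN_2^*|$), so it lies in the correct feasible set of Eq. \eqref{Eq:def_eps_inf}.

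First I would record the two facts that feed the estimate: $\mathbbm{D}(\cM_1,\cM_2)=t$ by hypothesis, and $\mathbbm{D}(\cM_2,\cN_2^*)\leq\epsilon_2$ because $\cN_2^*$ is by construction a feasible point of the infimum in Eq. \eqref{Eq:def_eps_inf} defining $\mathbbm{R}^{\mathbbm{D}}_{inf,\epsilon_2}(\cM_2)$. Next I would bound the distance of the mixture from $\cM_1$ by combining the triangle inequality with the joint convexity [D1] of $\mathbbm{D}$,
\begin{align}
\mathbbm{D}\big(\cM_1,(1-p)\cM_2+p\cN_2^*\big)&\leq \mathbbm{D}(\cM_1,\cM_2)+\mathbbm{D}\big(\cM_2,(1-p)\cM_2+p\cN_2^*\big)\nonumber\\
&\leq t+p\,\mathbbm{D}(\cM_2,\cN_2^*)\leq t+p\,\epsilon_2,
\end{align}
where in the middle step the decomposition $\cM_2=(1-p)\cM_2+p\cM_2$ is fed into [D1] so that the $\cM_2$-versus-$\cM_2$ term drops out. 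The admissibility requirement is that this quantity not exceed $\epsilon_1$; once it holds, the definition of the infimum gives $\mathbbm{R}^{\mathbbm{D}}_{inf,\epsilon_1}(\cM_1)\leq \mathbbm{R}\big((1-p)\cM_2+p\cN_2^*\big)$, and chaining with the convexity bound above yields the displayed inequality, hence the proposition after undoing the rearrangement.

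The step I expect to be the main obstacle is precisely this feasibility check: one must confirm that the mixture remains inside the $\epsilon_1$-ball around $\cM_1$ throughout the allowed range of $p$. This is delicate because the naive triangle-plus-convexity estimate produces $t+p\,\epsilon_2$, whose control by $\epsilon_1$ couples $p$ to $t,\epsilon_1,\epsilon_2$; reconciling this with the stated window $0\leq p\leq \epsilon_1/(t+\epsilon_2)$ — and, if needed, sharpening the distance estimate or restricting to the regime in which the mixture is genuinely $\epsilon_1$-close to $\cM_1$ — is the only nontrivial part, while the value bound from [R3] and the final algebraic rearrangement are routine. I would also remark at the end that $\mathbbm{R}(\cM_2)\geq \mathbbm{R}^{\mathbbm{D}}_{inf,\epsilon_2}(\cM_2)$, since $\cM_2$ is itself feasible for its own $\epsilon_2$-infimum, so the quantity $(1-p)\big[\mathbbm{R}(\cM_2)-\mathbbm{R}^{\mathbbm{D}}_{inf,\epsilon_2}(\cM_2)\big]$ on the right-hand side is non-negative and the bound is meaningful.
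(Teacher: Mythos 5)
Your reduction of the claim to the single bound $\mathbbm{R}^{\mathbbm{D}}_{inf,\epsilon_1}(\cM_1)\leq (1-p)\mathbbm{R}(\cM_2)+p\,\mathbbm{R}(\cN_2^*)$ is algebraically faithful to the statement as printed, but the competitor you then construct, $(1-p)\cM_2+p\cN_2^*$, cannot be shown feasible on the stated window, and the obstacle you flag at the end is not a technicality you could have sharpened away. Your estimate $\mathbbm{D}\bigl(\cM_1,(1-p)\cM_2+p\cN_2^*\bigr)\leq t+p\,\epsilon_2$ is already optimal in general (joint convexity applied directly gives $(1-p)t+p(t+\epsilon_2)=t+p\epsilon_2$ as well), and demanding $t+p\epsilon_2\leq\epsilon_1$ forces $t\leq\epsilon_1$, which is nowhere assumed; a mixture anchored at $\cM_2$ simply stays at distance about $t$ from $\cM_1$. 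In fact the printed inequality is false as stated: at $p=0$ (always inside the window) it reads $\mathbbm{R}^{\mathbbm{D}}_{inf,\epsilon_1}(\cM_1)\leq\mathbbm{R}(\cM_2)$, which fails whenever $\cM_2$ is free and $\cM_1$ lies more than $\epsilon_1$ from every free set. So no choice of competitor can rescue your route.

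The paper's proof uses the competitor actually named in the statement, $\cS_p=(1-p)\cM_1+p\cN_2^*$, anchored at $\cM_1$. Joint convexity [D1] plus the triangle inequality give $\mathbbm{D}(\cM_1,\cS_p)\leq p\,\mathbbm{D}(\cM_1,\cN_2^*)\leq p\,(t+\epsilon_2)\leq\epsilon_1$ exactly on the stated window, so feasibility is immediate, and convexity [R3] then yields $\mathbbm{R}^{\mathbbm{D}}_{inf,\epsilon_1}(\cM_1)\leq\mathbbm{R}(\cS_p)\leq(1-p)\mathbbm{R}(\cM_1)+p\,\mathbbm{R}(\cN_2^*)$, i.e.\ the conclusion with $\mathbbm{R}(\cM_1)$, not $\mathbbm{R}(\cM_2)$, inside the bracket. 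This is what the paper's own Eq.~\eqref{eq:cont1} records and what the subsequent continuity corollary uses; the $\cM_2$ in the displayed inequality of the proposition is a typo for $\cM_1$, and you were led astray by taking it literally and reverse-engineering a competitor to match it. Note also that your closing remark about non-negativity of the right-hand side does not survive the correction: $\mathbbm{R}(\cM_1)-\mathbbm{R}^{\mathbbm{D}}_{inf,\epsilon_2}(\cM_2)$ has no definite sign in general, and none is needed for the continuity argument that follows.
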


\begin{proof}
    From the joint convexity property of the distance $\mathbbm{D}$ we can write,

    \begin{align}
        \mathbbm{D}(\cM_{1},\cS_{p})
        &= \mathbbm{D}(\cM_{1},(1-p)\cM_{1}+p\cN_{2}^{*})\nonumber\\
        &\leq p \mathbbm{D}(\cM_{1},\cN_{2}^{*})\nonumber \\
        &\leq p ( \mathbbm{D}(\cM_{1},\cM_{2})+  \mathbbm{D}(\cM_{2},\cN_{2}^{*})).
    \end{align}
The first inequality is due to the joint convexity of $\mathbbm{D}$. Now, as $\mathbbm{D}(\cM_{1},\cM_{2})=t$ and $\mathbbm{D}(\cM_{2},\cN_{2}^{*}) \leq \epsilon_{2}$ we can write $\mathbbm{D}(\cM_{1},\cM_{2})+ \mathbbm{D}(\cM_{2},\cN_{2}^{*})\leq (t+\epsilon_{2})$. Furthermore, for $p$ in the range $0 \leq p \leq \frac{\epsilon_{1}}{t+\epsilon_{2}}$ we have $p (t+\epsilon_{2}) \leq \epsilon_{1} $. We then finally have $ \mathbbm{D}(\cM_{1},\cS_{p})\leq \epsilon_{1}$, which further implies,

\begin{align}
    \mathbbm{R}^{\mathbbm{D}}_{inf,\epsilon_{1}}(\cM_{1}) &\leq \mathbbm{R}(\cS_{p}) \nonumber \\
    &\leq (1-p) \mathbbm{R}(\cM_{1}) + p \mathbbm{R}(\cN_{2}^{*})\nonumber\\
    &= (1-p) \mathbbm{R}(\cM_{1}) + p \mathbbm{R}^{\mathbbm{D}}_{inf,\epsilon_{1}}(\cN_{2}^{*}).
\end{align}
The second line comes due to the convexity of $\mathbbm{R}$ (i.e., assumption [R3]). From the above relation we can finally write,
 \begin{equation}\label{eq:cont1}
       \mathbbm{R}^{\mathbbm{D}}_{inf,\epsilon_{1}}(\cM_{1})-\mathbbm{R}^{\mathbbm{D}}_{inf,\epsilon_{2}}(\cM_{2}) \leq (1-p)[\mathbbm{R}(\cM_{1})-\mathbbm{R}^{\mathbbm{D}}_{inf,\epsilon_{2}}(\cM_{2})].
    \end{equation} 
This completes the proof.
\end{proof}

\begin{corollary}
The $\epsilon$-measure $\mathbbm{R}^{\mathbbm{D}}_{inf,\epsilon_{1}}$ is i) continuous for any original convex resource measure $\mathbbm{R} $, and, ii) continuous functional of $\epsilon$.
\end{corollary}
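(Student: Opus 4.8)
The plan is to obtain both continuity statements as direct consequences of the preceding Proposition, whose central estimate is the inequality \eqref{eq:cont1},
\[
\mathbbm{R}^{\mathbbm{D}}_{inf,\epsilon_1}(\cM_1)-\mathbbm{R}^{\mathbbm{D}}_{inf,\epsilon_2}(\cM_2)\leq(1-p)[\mathbbm{R}(\cM_1)-\mathbbm{R}^{\mathbbm{D}}_{inf,\epsilon_2}(\cM_2)],
\]
valid for $0\leq p\leq \epsilon_1/(t+\epsilon_2)$ with $t=\mathbbm{D}(\cM_1,\cM_2)$. The key idea is that in each regime I may push $p$ to its largest admissible value, so that the prefactor $(1-p)$ becomes arbitrarily small while the bracketed difference stays bounded. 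Boundedness is guaranteed because $\mathbbm{R}$ is assumed continuous and the feasible sets $\{\cN:\mathbbm{D}(\cM,\cN)\leq\epsilon\}$ are compact, so the infima in \eqref{Eq:def_eps_inf} are attained (as already used in the preceding Proposition, where they are written $\mathbbm{R}(\cN_i^*)$) and both $\mathbbm{R}(\cM)$ and $\mathbbm{R}^{\mathbbm{D}}_{inf,\epsilon}(\cM)$ are uniformly bounded.

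For part (i), continuity in the argument, I would fix $\epsilon_1=\epsilon_2=\epsilon$ and set $p=\epsilon/(t+\epsilon)$, so that $1-p=t/(t+\epsilon)$ and hence
\[
\mathbbm{R}^{\mathbbm{D}}_{inf,\epsilon}(\cM_1)-\mathbbm{R}^{\mathbbm{D}}_{inf,\epsilon}(\cM_2)\leq\frac{t}{t+\epsilon}\,[\mathbbm{R}(\cM_1)-\mathbbm{R}^{\mathbbm{D}}_{inf,\epsilon}(\cM_2)].
\]
Interchanging $\cM_1$ and $\cM_2$ gives the companion bound, and combining the two yields $|\mathbbm{R}^{\mathbbm{D}}_{inf,\epsilon}(\cM_1)-\mathbbm{R}^{\mathbbm{D}}_{inf,\epsilon}(\cM_2)|\leq \frac{t}{t+\epsilon}\,C$ for a finite constant $C$. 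Since $t=\mathbbm{D}(\cM_1,\cM_2)\to 0$ forces $t/(t+\epsilon)\to 0$, the left-hand side vanishes, which is precisely continuity with respect to the distance $\mathbbm{D}$.

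For part (ii), continuity in $\epsilon$, I would instead fix $\cM_1=\cM_2=\cM$, so that $t=0$ and the admissible range becomes $0\leq p\leq \epsilon_1/\epsilon_2$. I would first observe that $\epsilon\mapsto\mathbbm{R}^{\mathbbm{D}}_{inf,\epsilon}(\cM)$ is monotonically non-increasing, since enlarging $\epsilon$ only enlarges the feasible set in \eqref{Eq:def_eps_inf}; hence the relevant one-sided differences are automatically non-negative and it suffices to bound them from above. Taking any $\epsilon_1<\epsilon_2$ with $p=\epsilon_1/\epsilon_2$ (so $1-p=(\epsilon_2-\epsilon_1)/\epsilon_2$) gives
\[
0\leq\mathbbm{R}^{\mathbbm{D}}_{inf,\epsilon_1}(\cM)-\mathbbm{R}^{\mathbbm{D}}_{inf,\epsilon_2}(\cM)\leq\frac{\epsilon_2-\epsilon_1}{\epsilon_2}\,[\mathbbm{R}(\cM)-\mathbbm{R}^{\mathbbm{D}}_{inf,\epsilon_2}(\cM)].
\]
Reading this with $\epsilon_2=\epsilon_0$ fixed and $\epsilon_1\to\epsilon_0^-$ yields left-continuity at $\epsilon_0$, while reading it with $\epsilon_1=\epsilon_0$ fixed and $\epsilon_2\to\epsilon_0^+$ yields right-continuity; together these give continuity of the functional at every $\epsilon_0>0$.

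The main obstacle is not the algebra but the bookkeeping that makes the limits rigorous. I must check that the chosen $p$ never exceeds $1$ — automatic in (i) because $\epsilon/(t+\epsilon)\leq 1$, and in (ii) because $\epsilon_1/\epsilon_2<1$ — and I must justify that the bracketed quantities are genuinely finite, which is exactly where the standing compactness of the feasible sets together with continuity of $\mathbbm{R}$ enter. A minor caveat is the behaviour as $\epsilon\to 0^+$, where the $1/\epsilon$ and $1/(t+\epsilon)$ prefactors degenerate, so the continuity claim should be understood for $\epsilon>0$.
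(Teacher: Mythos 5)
Your proposal is correct and follows essentially the same route as the paper: both parts specialize the inequality \eqref{eq:cont1}, taking $\epsilon_1=\epsilon_2=\epsilon$ with $p=\epsilon/(t+\epsilon)$ for continuity in the argument, and $\cM_1=\cM_2=\cM$ with $p=\epsilon_1/\epsilon_2$ for continuity in $\epsilon$, then symmetrize to get two-sided bounds. Your additional observation that $\epsilon\mapsto\mathbbm{R}^{\mathbbm{D}}_{inf,\epsilon}(\cM)$ is non-increasing (so the one-sided difference in part (ii) is automatically non-negative) is a small but genuine tightening of the paper's argument, which instead asserts a symmetric lower bound without that justification.
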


\begin{proof}
 i) The continuity of the $\mathbbm{R}^{\mathbbm{D}}_{inf,\epsilon_{1}}$ demands that if two sets of measurements $\cM_{1}$ and $\cM_{2}$ are very close, i.e., $\mathbbm{D}(\cM_{1},\cM_{2})=t \rightarrow 0$ then the corresponding $\epsilon$-measures are also very close, i.e., $\lvert\mathbbm{R}^{\mathbbm{D}}_{inf,\epsilon}(\cM_{1})-\mathbbm{R}^{\mathbbm{D}}_{inf,\epsilon}(\cM_{2})\rvert\rightarrow 0$. In order to prove this, we choose that, for $\epsilon_{1}=\epsilon_{2}=\epsilon$ and $p=\frac{\epsilon}{\epsilon+t}$. Then for $\cM_{1}\neq \cM_{2}$, from Eq. \eqref{eq:cont1} we can write,

\begin{equation}\label{eq:cont2}
       \mathbbm{R}^{\mathbbm{D}}_{inf,\epsilon}(\cM_{1})-\mathbbm{R}^{\mathbbm{D}}_{inf,\epsilon}(\cM_{2}) \leq \frac{t}{\epsilon+t}[\mathbbm{R}(\cM_{1})-\mathbbm{R}^{\mathbbm{D}}_{inf,\epsilon}(\cM_{2})].
    \end{equation} 

 The above equation also implies,

\begin{equation}\label{eq:cont3}
       \mathbbm{R}^{\mathbbm{D}}_{inf,\epsilon}(\cM_{1})-\mathbbm{R}^{\mathbbm{D}}_{inf,\epsilon}(\cM_{2}) \geq -\frac{t}{\epsilon+t}[\mathbbm{R}(\cM_{2})-\mathbbm{R}^{\mathbbm{D}}_{inf,\epsilon}(\cM_{1})].
    \end{equation}

 Therefore, the following holds,

 \begin{align}\label{eq:cont4}
       &\lvert\mathbbm{R}^{\mathbbm{D}}_{inf,\epsilon}(\cM_{1})-\mathbbm{R}^{\mathbbm{D}}_{inf,\epsilon}(\cM_{2})\rvert\nonumber \\
       &\leq \frac{t}{\epsilon+t}\max\Big\{(\mathbbm{R}(\cM_{2})-\mathbbm{R}^{\mathbbm{D}}_{inf,\epsilon}(\cM_{1})), \nonumber \\ 
       & \hspace{3.5cm}(\mathbbm{R}(\cM_{1})-\mathbbm{R}^{\mathbbm{D}}_{inf,\epsilon}(\cM_{2}))\Big\}.
    \end{align} 

This finally implies that $\lvert\mathbbm{R}^{\mathbbm{D}}_{inf,\epsilon}(\cM_{1})-\mathbbm{R}^{\mathbbm{D}}_{inf,\epsilon}(\cM_{2})\rvert\rightarrow 0$ as $t \rightarrow 0$.

ii) In order to prove that the $\epsilon$-measure $\mathbbm{R}^{\mathbbm{D}}_{inf,\epsilon_{1}}$ is continuous functional of $\epsilon$, we choose $\cM_{1}=\cM_{2}=\cM$ (i.e., $t=0$) and $p=\frac{\epsilon_{1}}{\epsilon_{2}}$.  Then for $\epsilon_{2}\geq \epsilon_{1}\geq 0 $ from Eq. \eqref{eq:cont1} we can write,

\begin{equation}
       \mathbbm{R}^{\mathbbm{D}}_{inf,\epsilon_{1}}(\cM)-\mathbbm{R}^{\mathbbm{D}}_{inf,\epsilon_{2}}(\cM) \leq \frac{\epsilon_{2}-\epsilon_{1}}{\epsilon_{2}}[\mathbbm{R}(\cM)-\mathbbm{R}^{\mathbbm{D}}_{inf,\epsilon_{2}}(\cM)].
    \end{equation} 

The above equation also implies,

\begin{equation}
       \mathbbm{R}^{\mathbbm{D}}_{inf,\epsilon_{1}}(\cM)-\mathbbm{R}^{\mathbbm{D}}_{inf,\epsilon_{2}}(\cM) \geq -\frac{\epsilon_{2}-\epsilon_{1}}{\epsilon_{2}}[\mathbbm{R}(\cM)-\mathbbm{R}^{\mathbbm{D}}_{inf,\epsilon_{2}}(\cM)].
    \end{equation} 
Therefore, we can immediately write,

\begin{align}
       &\lvert\mathbbm{R}^{\mathbbm{D}}_{inf,\epsilon_{1}}(\cM)-\mathbbm{R}^{\mathbbm{D}}_{inf,\epsilon_{2}}(\cM) \rvert \nonumber \\ &\leq \frac{\epsilon_{2}-\epsilon_{1}}{\epsilon_{2}} \max\{[\mathbbm{R}(\cM)-\mathbbm{R}^{\mathbbm{D}}_{inf,\epsilon_{2}}(\cM)], \nonumber \\
       & \hspace{2.5cm}[\mathbbm{R}^{\mathbbm{D}}_{inf,\epsilon_{2}}(\cM)-\mathbbm{R}(\cM)] \}.
    \end{align} 

It is then easy to see from the above equation that if $\epsilon_{2}-\epsilon_{1}\rightarrow 0$, then $\mathbbm{R}^{\mathbbm{D}}_{inf,\epsilon_{1}}(\cM)-\mathbbm{R}^{\mathbbm{D}}_{inf,\epsilon_{2}}(\cM) \rightarrow 0$. This concludes the proof.

\end{proof}

\subsection{One-shot dilution cost, smooth asymptotic resource measures and its connection to measurement-based $\epsilon$- measures}
\label{Subsec:one_shot_distill_assymp_res_meas}
Two important subclasses of resource manipulation task are resource distillation and dilution \cite{RegulaDis,winterdis16,Liudis,LiuDetDis19}. Resource dilution is the process of transforming an element from a chosen set of reference resource objects into the desired resource object via free transformations while utilizing minimum possible reference resources. For instance, in a measurement-based resource theory, resource dilution involves transforming a target resource $\cN$-a set of measurements, into a set $\cM$ within the reference set of measurements $\cZ$, where $\cO$ denotes the set of all free transformations. Clearly, $\cM$ is an element of $\cZ$. The performance of this task is quantified the one-shot dilution cost and defined \cite{PRXQuantumRegula} as,
\begin{align}\label{eq:dilutionc}
   \mathbbm{C}^{\cZ}_{\epsilon}(\cM) & = \inf_{\substack{\mathbbm{D}(\cM,\cW(\cN)) \leq \epsilon\\ \cN \in \cZ, \cW \in \cO,~|\cM|=|\cN|} } \mathbbm{R}(\cN) 
\end{align}
\begin{proposition}
    The one shot resource delusion cost in a measurement based resource theory is lower bounded by the epsilon measures corresponding to the measurement-based concerned resource, i.e., $\mathbbm{C}^{\cZ}_{\epsilon}(\cM) \geq  \mathbbm{R}^{\mathbbm{D}}_{inf,\epsilon}(\cM)$ for all $\cM\in\mathscr{M}(\cH)$.
\end{proposition}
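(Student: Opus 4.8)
The plan is to show that every feasible pair $(\cN,\cW)$ for the dilution-cost optimization in Eq.~\eqref{eq:dilutionc} furnishes a feasible competitor for the $\epsilon$-measure optimization in Eq.~\eqref{Eq:def_eps_inf} whose objective value is no larger than $\mathbbm{R}(\cN)$. Concretely, I would fix an arbitrary $\cN\in\cZ$ and $\cW\in\cO$ satisfying the constraints of Eq.~\eqref{eq:dilutionc}, i.e.\ $\mathbbm{D}(\cM,\cW(\cN))\leq\epsilon$ and $|\cM|=|\cN|$. The central observation is that the transformed set $\cW(\cN)$ is itself a legitimate entry in the infimum defining $\mathbbm{R}^{\mathbbm{D}}_{inf,\epsilon}(\cM)$.

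First I would verify feasibility of $\cW(\cN)$ for the $\epsilon$-measure. Since $\mathbbm{D}(\cM,\cW(\cN))$ appears in the dilution constraint, the set $\cW(\cN)$ acts on the same Hilbert space $\cH$ as $\cM$ and is index-matched to it, so that $|\cM|=|\cW(\cN)|$ (this equality is built into the definition of $\mathbbm{D}$, which pairs the two sets element by element as in Eq.~\eqref{Eq:def_dist_meas_of_set_meas}), and moreover $\mathbbm{D}(\cM,\cW(\cN))\leq\epsilon$. Hence $\cW(\cN)$ meets every constraint of Eq.~\eqref{Eq:def_eps_inf}, which gives
\begin{align}
    \mathbbm{R}^{\mathbbm{D}}_{inf,\epsilon}(\cM)\leq\mathbbm{R}(\cW(\cN)).
\end{align}
Next I would invoke monotonicity: because $\cW\in\cO$ is a free transformation and $\mathbbm{R}$ satisfies condition [R2], we have $\mathbbm{R}(\cW(\cN))\leq\mathbbm{R}(\cN)$. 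Chaining the two bounds yields
\begin{align}
    \mathbbm{R}^{\mathbbm{D}}_{inf,\epsilon}(\cM)\leq\mathbbm{R}(\cN)
\end{align}
for every feasible pair $(\cN,\cW)$. Taking the infimum of the right-hand side over all such pairs produces exactly $\mathbbm{R}^{\mathbbm{D}}_{inf,\epsilon}(\cM)\leq\mathbbm{C}^{\cZ}_{\epsilon}(\cM)$, the desired claim.

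The argument is a feasibility-inclusion plus monotonicity chain, so no quantitative estimate is needed. The one step requiring genuine care—and the point I expect to be the main obstacle—is the cardinality and Hilbert-space bookkeeping: one must confirm that $\cW(\cN)$ has precisely $|\cM|$ elements and acts on $\cH$, so that it is admissible in Eq.~\eqref{Eq:def_eps_inf} even when the free transformation $\cW$ could in principle alter the number of measurements in the set. This is guaranteed here only because the constraint $\mathbbm{D}(\cM,\cW(\cN))\leq\epsilon$ is itself meaningful solely when the two sets are index-matched on a common space; were one to adopt a distance defined between sets of differing cardinality, the inclusion of feasible regions would have to be re-examined before the bound followed.
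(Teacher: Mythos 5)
Your argument is correct and is essentially the paper's own proof: both rest on the monotonicity $\mathbbm{R}(\cN)\geq\mathbbm{R}(\cW(\cN))$ together with the observation that $\cW(\cN)$ is a feasible point for the infimum defining $\mathbbm{R}^{\mathbbm{D}}_{inf,\epsilon}(\cM)$; you merely phrase the chaining of infima pointwise over feasible pairs rather than as a chain of inequalities between the two optimizations. Your extra remark on cardinality and Hilbert-space bookkeeping is a reasonable sanity check but not a departure from the paper's route.
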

\begin{proof}
    To prove the proposition, observe that from Eq.~\eqref{eq:dilutionc} we have 
$\mathbbm{D}(\cM,\cW(\cN)) \leq \epsilon$, which implies that 
$\cW(\cN)\subset \mathscr{M}(\cH)$. Furthermore, since $\cW$ is a free transformation, it follows that 
$\mathbbm{R}(\cN)\geq \mathbbm{R}(\cW(\cN))$. Hence,
    \begin{align}
    \mathbbm{C}^{\cZ}_{\epsilon}(\cM)=& \inf_{\substack{\mathbbm{D}(\cM,\cW(\cN)) \leq \epsilon\\ \cN \in \cZ, \cW \in \cO,~|\cM|=|\cN|} } \mathbbm{R}(\cN)\nonumber\\
    \geq& \inf_{\substack{\mathbbm{D}(\cM,\cW(\cN)) \leq \epsilon\\ \cN \in \cZ, \cW \in \cO,~|\cM|=|\cN|} } \mathbbm{R}(\cW(\cN))\nonumber\\
    \geq& \inf_{\substack{\mathbbm{D}(\cM,\cN^{\prime}) \leq \epsilon\\ \cN^{\prime}\subset\mathscr{M}(\cH),~|\cM|=|\cN^{\prime}|} } \mathbbm{R}(\cN^{\prime})\nonumber\\
    =&\mathbbm{R}^{\mathbbm{D}}_{inf,\epsilon}(\cM).
    \end{align}
\end{proof}

It is then clear that $\epsilon$-measure of the measurement-based resources constitutes the lower bound for preparing a given set of measurement upto an error $\epsilon$ by free transformation. Therefore, for our case, this $\epsilon$-measure provides a guaranteed lower bound on how much resource is needed to approximately create a set of measurements showing its importance as a operational tool in one-shot resource dilution tasks.

Resource distillation is the process of converting multiple resource objects into a reference resource object, using free transformations, while aiming to maximize the resource content of the reference object. For instance, resource distillation consists of transforming the set $\cM$ of measurements into a desired target set of measurements $\cN$ via a free transformation $\cW \in \cO$. The figure of merit of this task is given by the distillable resource \cite{PRXQuantumRegula},
\begin{align}
   \label{eq:distill} 
   \mathbbm{E}^{\cZ}_{\epsilon}(\cM) &= \sup_{\substack{\mathbbm{D}(\cW(\cM),\cN) \leq \epsilon\\  \cN \in \cZ,\cW \in \cO,~|\cM|=|\cN|} } \mathbbm{R}(\cN), 
\end{align}
Now, analogous to the one-shot dilution cost we deduce a relationship between the distillable resource and the $\epsilon$-measure which expressed through the following proposition.


 
\begin{proposition}
    The distillable resource in a measurement based resource theory is lower bounded by the $\epsilon$-measures corresponding to the measurement-based concerned resource, i.e., $\mathbbm{E}^{\cZ}_{\epsilon}(\cM) \geq  \mathbbm{R}^{\mathbbm{D}}_{inf,\epsilon}(\cM)$ for all $\cM\in\mathscr{M}(\cH)$, with $\cZ=\mathscr{M}(\cH)$.
\end{proposition}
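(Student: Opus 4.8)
The plan is to reduce the supremum defining the distillable resource to the feasible region of the $\epsilon$-measure by using the identity transformation, which is free by assumption [A3], and then to invoke the elementary fact that a supremum dominates the infimum over the same nonempty set. Unlike the dilution bound, this argument is completely direct and does not even require the monotonicity of $\mathbbm{R}$.

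First I would restrict the optimization in Eq.~\eqref{eq:distill} to the single free transformation $\cW=\mathbbm{I}$, the identity, which is admissible by [A3]. Since shrinking the feasible set can only lower a supremum, and since $\mathbbm{I}(\cM)=\cM$, the constraint $\mathbbm{D}(\cW(\cM),\cN)\leq\epsilon$ collapses to $\mathbbm{D}(\cM,\cN)\leq\epsilon$. Crucially, because $\cZ=\mathscr{M}(\cH)$, the target $\cN$ still ranges over all sets of measurements on $\cH$ with $|\cN|=|\cM|$, which is exactly the feasible region appearing in the definition of $\mathbbm{R}^{\mathbbm{D}}_{inf,\epsilon}(\cM)$ in Eq.~\eqref{Eq:def_eps_inf}. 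This yields
\begin{align}
\mathbbm{E}^{\cZ}_{\epsilon}(\cM)\geq\sup_{\substack{\cN\subset\mathscr{M}(\cH),\\ \mathbbm{D}(\cM,\cN)\leq\epsilon,~|\cM|=|\cN|}}\mathbbm{R}(\cN).
\end{align}

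Next I would observe that this feasible region is nonempty, since it contains $\cM$ itself (indeed $\mathbbm{D}(\cM,\cM)=0\leq\epsilon$). Over a nonempty set the supremum of $\mathbbm{R}$ is at least its infimum, and by Eq.~\eqref{Eq:def_eps_inf} the latter equals $\mathbbm{R}^{\mathbbm{D}}_{inf,\epsilon}(\cM)$. Chaining these two observations gives
\begin{align}
\mathbbm{E}^{\cZ}_{\epsilon}(\cM)\geq\sup_{\mathbbm{D}(\cM,\cN)\leq\epsilon}\mathbbm{R}(\cN)\geq\inf_{\mathbbm{D}(\cM,\cN)\leq\epsilon}\mathbbm{R}(\cN)=\mathbbm{R}^{\mathbbm{D}}_{inf,\epsilon}(\cM),
\end{align}
which is the claim.

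I do not anticipate any genuine obstacle. The only points requiring care are the nonemptiness of the feasible set, which legitimizes the step $\sup\geq\inf$, and the hypothesis $\cZ=\mathscr{M}(\cH)$, which guarantees that fixing $\cW$ to the identity does not shrink the set of admissible targets below the one that appears in the definition of the $\epsilon$-measure. If $\cZ$ were a proper subset of $\mathscr{M}(\cH)$ the final equality with $\mathbbm{R}^{\mathbbm{D}}_{inf,\epsilon}(\cM)$ would fail, which is precisely why the statement fixes $\cZ=\mathscr{M}(\cH)$.
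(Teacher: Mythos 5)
Your proof is correct and follows essentially the same route as the paper's: both restrict the supremum to the identity transformation (free by [A3]) so that the feasible set becomes exactly that of $\mathbbm{R}^{\mathbbm{D}}_{inf,\epsilon}(\cM)$, and then apply $\sup\geq\inf$. Your explicit remark that the feasible set is nonempty (it contains $\cM$) is a small but welcome point of care that the paper leaves implicit.
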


\begin{proof}
    As $\cW\in \cO$ is a free transformation, $\mathbbm{D}(\cM,\cN) \leq \mathbbm{D}(\cW(\cM),\cN) \leq \epsilon$ we have $\cW(\cN), \cW(\cM)\subset \mathscr{M}(\cH)$. Then,
    \begin{align}
    \mathbbm{E}^{\cZ}_{\epsilon}(\cM)=& \sup_{\substack{\mathbbm{D}(\cW(\cM),\cN) \leq \epsilon\\  \cN \in \cZ, \cW \in \cO,~|\cM|=|\cN|} } \mathbbm{R}(\cN)\nonumber\\
    \geq& \sup_{\substack{\mathbbm{D}(\cM,\cN) \leq \epsilon\\ \cN \subset \mathscr{M}(\cH), ~|\cM|=|\cN|} } \mathbbm{R}(\cN)\nonumber\\
    \geq& \inf_{\substack{\mathbbm{D}(\cM,\cN) \leq \epsilon\\ \cN \subset \mathscr{M}(\cH), ~|\cM|=|\cN|} } \mathbbm{R}(\cN)\nonumber\\
    =&\mathbbm{R}^{\mathbbm{D}}_{inf,\epsilon}(\cM).
    \end{align}
The second line is due to the fact that the set $\mathscr{S}:=\{N \in \cZ: \mathbbm{D}(\cW(\cM),\cN) \leq \epsilon, \cW \in \cO,~|\cM|=|\cN| \}$ contains the set $\mathscr{W}:=\{N \subset \mathscr{M}(\cH): \mathbbm{D}(\cM,\cN) \leq \epsilon, \cW \in \cO,~|\cM|=|\cN| \}$ due to the assumption [A3] as identity channel is a free transformation. Therefore, the former set is a superset of the latter, and hence the supremum of $\mathbbm{R}(\mathcal{N})$ is larger for the former.
\end{proof}

The one-shot dilution cost is only relevant when considering single set of measurements. However, for more general scenarios one quantifies how much resource is needed per copy (or can be extracted) when manipulating many copies of a set of measurements. Asymptotic resource measures quantify the rate at which a measurement based quantum resource can be consumed to prepare many copies of a target set of measurements, or can be extracted from many copies of a given resourceful set of measurements. We define the lower and upper regularization of resource measures for the $\epsilon$- measure of measurement-based resources $\mathbbm{R}$ as smooth regularization defined as,

\begin{align}
    \mathbbm{C}^{\infty}_{l}(\cM)&= \lim_{\epsilon \rightarrow 0^+}\liminf_{n\rightarrow \infty}  \frac{1}{n}\mathbbm{R}^{\mathbbm{D}}_{inf,\epsilon}(\cM^{\otimes n }), \label{Eq:smooth_low}\\
     \mathbbm{C}^{\infty}_{u}(\cM)&= \lim_{\epsilon \rightarrow 0^+}\limsup_{n\rightarrow \infty}\frac{1}{n}\mathbbm{R}^{\mathbbm{D}}_{inf,\epsilon}(\cM^{\otimes n }).\label{Eq:smooth_high}
\end{align}

We now describe one important result regarding the above smooth regularization of resource measures as follows:

\begin{proposition}
    The smooth regularization of resource measures are also resource measures. In other words

    \begin{align}
        \mathbbm{C}^{\infty}_{l}(\cM)&\geq  \mathbbm{C}^{\infty}_{l}(\cW(\cM)), \nonumber \\
        \mathbbm{C}^{\infty}_{u}(\cM)&\geq  \mathbbm{C}^{\infty}_{u}(\cW(\cM)).
    \end{align}
    and $\mathbbm{C}^{\infty}_{l}(\cM)=0$ and $\mathbbm{C}^{\infty}_{u}(\cM)=0$ if and only if $\cM$ is free.
\end{proposition}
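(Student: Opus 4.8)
The plan is to establish the two required properties—monotonicity and faithfulness—separately, treating the lower and upper regularizations $\mathbbm{C}^\infty_l$ and $\mathbbm{C}^\infty_u$ in parallel, since the only difference is $\liminf$ versus $\limsup$, both of which preserve non-strict inequalities between termwise-dominated sequences and commute harmlessly with the outer monotone limit $\lim_{\epsilon\to 0^+}$.

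For monotonicity, let $\cW\in\cO$ be a free transformation. The key structural observations are that $\cW^{\otimes n}$ is again free (apply assumption [A2] inductively $n-1$ times) and that it acts copy-wise, so $\cW^{\otimes n}(\cM^{\otimes n})=(\cW(\cM))^{\otimes n}$. Since $\mathbbm{R}^{\mathbbm{D}}_{inf,\epsilon}$ is itself a resource monotone (the earlier theorem on $\epsilon$-measures), applying it to the free transformation $\cW^{\otimes n}$ gives $\mathbbm{R}^{\mathbbm{D}}_{inf,\epsilon}(\cM^{\otimes n})\geq\mathbbm{R}^{\mathbbm{D}}_{inf,\epsilon}((\cW(\cM))^{\otimes n})$ for every $n$ and every $\epsilon$. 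Dividing by $n$, taking $\liminf_{n\to\infty}$ (respectively $\limsup_{n\to\infty}$) of both sides, and finally $\lim_{\epsilon\to 0^+}$, yields $\mathbbm{C}^\infty_l(\cM)\geq\mathbbm{C}^\infty_l(\cW(\cM))$ and the analogous statement for $\mathbbm{C}^\infty_u$.

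For faithfulness, the ``if'' direction is immediate: if $\cM$ is free then $\cM^{\otimes n}$ is free by assumption [A1], so $\mathbbm{R}(\cM^{\otimes n})=0$ by [R1]; choosing $\cN=\cM^{\otimes n}$ in the infimum of Eq.~\eqref{Eq:def_eps_inf} (which is admissible since $\mathbbm{D}(\cM^{\otimes n},\cM^{\otimes n})=0\leq\epsilon$) together with non-negativity forces $\mathbbm{R}^{\mathbbm{D}}_{inf,\epsilon}(\cM^{\otimes n})=0$ for all $n$, hence both regularizations vanish.

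The converse—that vanishing regularization implies $\cM$ is free—I would approach by contraposition: assuming $\cM\notin\cF$, so that $\mathbbm{R}(\cM)>0$ by [R1], I would aim to produce a per-copy lower bound $\frac{1}{n}\mathbbm{R}^{\mathbbm{D}}_{inf,\epsilon}(\cM^{\otimes n})\geq c>0$ valid for all small $\epsilon$ and large $n$. A first step is single-copy positivity: by continuity of $\mathbbm{R}^{\mathbbm{D}}_{inf,\epsilon}$ in $\epsilon$ (the preceding corollary) together with $\mathbbm{R}^{\mathbbm{D}}_{inf,0}=\mathbbm{R}$, there is an $\epsilon_0$ with $\mathbbm{R}^{\mathbbm{D}}_{inf,\epsilon}(\cM)>0$ for $\epsilon<\epsilon_0$. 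The main obstacle is precisely the factor $1/n$: the assumptions [R1]--[R3] supply only subadditivity of $\mathbbm{R}$ (and of the $\epsilon$-measure, via Proposition~\ref{prop:subadepsilon}) and monotonicity under discarding copies, which together bound $\mathbbm{R}^{\mathbbm{D}}_{inf,\epsilon}(\cM^{\otimes n})$ merely between a constant and $n\,\mathbbm{R}(\cM)$, and hence do not by themselves exclude sublinear growth. I therefore expect this direction to require an additional ingredient—most naturally a super-additivity or regularized-faithfulness property of $\mathbbm{R}$, or a compactness-plus-continuity argument showing that within distance $\epsilon$ of the non-free point $\cM^{\otimes n}$ the value of $\mathbbm{R}$ is bounded below by a quantity growing linearly in $n$—and establishing this quantitative linear lower bound is the crux of the whole proposition.
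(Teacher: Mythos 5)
Your treatment of monotonicity and of the ``if'' direction of faithfulness is correct and coincides with the paper's: the paper likewise writes $\cW^{\otimes n}(\cM^{\otimes n})=(\cW(\cM))^{\otimes n}$, invokes assumption [A2] to make $\cW^{\otimes n}$ free, applies the monotonicity of $\mathbbm{R}^{\mathbbm{D}}_{inf,\epsilon}$ termwise, and passes the inequality through $\liminf$/$\limsup$ and the $\epsilon\to 0^+$ limit.

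On the converse direction of faithfulness, you have correctly identified a genuine gap --- and it is a gap in the paper's own proof, not only in yours. The paper's entire argument for this direction is the assertion that, for arbitrary $n$, $\mathbbm{R}^{\mathbbm{D}}_{inf,\epsilon}(\cM^{\otimes n})=0$ if and only if $\cM$ is free, from which it concludes that the regularization vanishes if and only if $\cM$ is free. This reasoning has two holes, both of which you flag. First, for fixed $\epsilon>0$ the $\epsilon$-measure is an infimum of $\mathbbm{R}$ over the closed $\epsilon$-ball around $\cM^{\otimes n}$, so it vanishes whenever that ball meets the free set, which can happen for non-free $\cM^{\otimes n}$; positivity of each term is only recovered after the $\epsilon\to 0^+$ limit (using compactness of $\cF$ and continuity of $\mathbbm{R}$), a step the paper does not spell out. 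Second, and more seriously, even granting $\mathbbm{R}^{\mathbbm{D}}_{inf,\epsilon}(\cM^{\otimes n})>0$ for every $n$, the $1/n$ normalization means the regularized limit can still be zero if the growth is sublinear; the available hypotheses [R1]--[R3] and Proposition~\ref{prop:subadepsilon} give only an upper bound of order $n\,\mathbbm{R}(\cM)$ and no linear lower bound. Your diagnosis that some additional ingredient --- superadditivity of $\mathbbm{R}$ under tensor products, or a quantitative lower bound on $\mathbbm{R}$ in an $\epsilon$-neighbourhood of $\cM^{\otimes n}$ growing linearly in $n$ --- is needed to close this direction is exactly right, and the paper supplies no such ingredient.
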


\begin{proof}
   From Eq. \eqref{Eq:smooth_low}, we have,

   \begin{align}
       \mathbbm{C}^{\infty}_{l}(\cM)&= \lim_{\epsilon \rightarrow 0^+}\liminf_{n\rightarrow \infty}\frac{1}{n}\mathbbm{R}^{\mathbbm{D}}_{inf,\epsilon}(\cM^{\otimes n }) \nonumber \\
       &\geq \lim_{\epsilon \rightarrow 0^+}\liminf_{n\rightarrow \infty}\frac{1}{n}\mathbbm{R}^{\mathbbm{D}}_{inf,\epsilon}( \cW^{\otimes n}(\cM^{\otimes n })) \nonumber \\
       &= \lim_{\epsilon \rightarrow 0^+}\liminf_{n\rightarrow \infty}\frac{1}{n}\mathbbm{R}^{\mathbbm{D}}_{inf,\epsilon}( \cW(\cM)^{\otimes n }) \nonumber \\
       &=   \mathbbm{C}^{\infty}_{l}(\cW(\cM)),
   \end{align}
where in the second line we have used the assumption [A2].
   Therefore, it is clear that $ \mathbbm{C}^{\infty}_{l}$  is non-increasing under free transformations.

   Now, from assumption [A1], we know that $\cM$ is if and only if $\cM^{\otimes n }$ is free for an arbitrary $n$. Therefore, for an arbitrary $n$, $\mathbbm{R}^{\mathbbm{D}}_{inf,\epsilon}(\cM^{\otimes n })=0$ if and only if $\cM$ is free. Therefore, from Eq. \eqref{Eq:smooth_low}, we have  $\mathbbm{C}^{\infty}_{l}(\cM)=0$ if and only if $A$ is free. Hence, it constitutes a valid resource measure. Similarly, one can prove the same for upper regularization $ \mathbbm{C}^{\infty}_{u}$.
\end{proof} 
    

\vspace{1cm}

\section{Conclusion}
\label{Sec:Conc}
In this work, we have studied distance-based resource measures and 
$\epsilon-$measures within the framework of measurement-based resource theories. Specifically, we analyze the transformation of sets of measurements and quantum channels, define a suitable distance measure for these objects, and examine its mathematical properties. Building on this, we construct a distance-based resource measure based on a generic distance function satisfying certain natural properties. We further investigate key properties of the  $\epsilon-$measures for measurement-based resources. In addition, we explore the one-shot dilution cost and smooth asymptotic resource measures, highlighting their connections to $\epsilon-$measures in the context of measurement-based resources.

It is important to note that our results are valid not only for resource theories where the resource concerns a single measurement (e.g., measurement coherence, and measurement sharpness), but also for the resource theories where the resource involves a set of measurements (e.g., measurement incompatibility). Furthermore, we tried to keep our analysis as general as possible. More specifically, in Sec.\ref{Subsec:dist_based_res_meas}, our analysis is based on a generic distance function $\mathbbm{D}$ assumed to satisfy a set of natural properties, and in Sec.\ref{Subsec:properties_epsilon_measure}, we consider $\mathbbm{R}$ to be a generic resource measure for studying the properties of the $\epsilon-$measures.

As previously noted, measurement-based resource theories have received significantly less attention compared to their state-based counterparts, and to the best of our knowledge, the study of $\epsilon-$measures within this framework remains unexplored. Consequently, our work offers a broad scope and opens several promising avenues for future research. We outline a few of them here. First, an immediate direction is to identify examples of distance functions, beyond $\overline{\cD}$ that satisfies the properties [D1]-[D5]. Second, one can investigate the design of information-theoretic tasks whose performance can be directly quantified by resource measures derived from $\overline{\cD}$. Third, it is crucial to explore how, in scenarios where a set of measurements is only partially known up to some accuracy $\epsilon$, the $\epsilon-$measures can be used to meaningfully quantify the performance in relevant information-theoretic tasks. Fourth, There exist nonconvex resources, namely, certain other layers of nonclassicality \cite{heinosari16,mitra21} relevant to sets of measurements beyond incompatibility—whose resource quantification would be worthwhile to explore in the future. Furthermore, since state-based coherence is known to serve as a resource in quantum heat engines \cite{kwon23}, it will be interesting to investigate the potential applications of measurement-based coherence in quantum thermodynamics. This motivates further investigation.

\section{Acknowledgements}
 The authors acknowledge financial support from Institute of Information and communications Technology Planning and Evaluation (IITP) Grant (RS-2023-00222863, RS-2025-02292999). The authors also acknowledge Kyunghyun Baek and Nirman Ganguly for their valuable comments.

\bibliography{reference}

\end{document}